\newtheorem{definition}[thm]{Definition}
\newcommand{\tuple}[1]{\left(#1\right)}
\def\sm{\setminus}
\newcommand{\ub}{\mathit{ub}}
\newcommand{\ra}{\rightarrow}
\newcommand{\Ps}[1]{\mathbb{P}( #1 )}
\newcommand{\Char}{\mathit{Char}}
\newcommand{\gt}{\mbox{game-type}}
\newcommand{\code}{\mbox{Code}}
\newcommand{\trun}{\mbox{trun}}
\newcommand{\gcode}{\mbox{gcode}}
\newcommand{\nat}{\mathbb{N}}
\newcommand{\om}{\omega}
\def\vp{\varphi}
\def\mG{\mathcal{G}}
\def\rar{\rightarrow}
\def\s{\subseteq}
\newcommand{\MLO}{\mathit{MLO}}
\def\mM{\mathcal{M}}
\def\mI{\mathcal{I}}
\def\mN{\mathcal{N}}
\def\MTh{\mathit{MTh}}
\def\HF{\mathit{type}}
\def\Th{\mathit{Th}}
\def\Form{\mathfrak{Form}}
\def\fS{\mathfrak{S}}
\def\fH{\mathfrak{H}}
\def\nek{\ldots}
\newcommand{\qd}[1]{\mathop{\rm qd}(#1)}
\def\la{\lambda}
\def\win{\mathit{Win}}
\def\doi{5 (2:5) 2009}
\begin{document}

\title{The Church Problem  for Countable Ordinals}

\author{ Alexander Rabinovich}
\address{The Blavatnik School  of Computer Science, Tel Aviv University, Israel 69978}
\email{rabinoa@post.tau.ac.il}

\keywords{Church's problem, decidability, determinacy, composition method}
\subjclass{F.4.1}

\begin{abstract}  A fundamental  theorem  of B\"{u}chi and
Landweber shows that
 the Church synthesis problem  is computable.
  B\"{u}chi and
Landweber reduced the Church Problem to problems about
$\omega$-games  and
 used the determinacy of such
games as one of the main tools  to show its computability.
We consider a natural generalization of the Church problem to
countable ordinals and investigate  games of arbitrary countable
length. We prove that determinacy and decidability parts of the
B\"{u}chi and Landweber theorem hold for all countable ordinals
and that its full extension holds for all ordinals $<\om^\om$.
\end{abstract}

 \maketitle

\section{Introduction}
Two fundamental results of classical automata theory are
decidability of the monadic second-order logic of order (MLO) over
 $\omega=(\nat,<)$ and computability of the Church synthesis problem.
These results have provided the underlying mathematical framework
for the development of formalisms for the description of
interactive systems and their desired properties,  the algorithmic
verification  and the automatic synthesis of correct
implementations from logical specifications, and  advanced
algorithmic techniques that are now embodied in industrial  tools
for verification and validation.

In order to prove  decidability of  the monadic theory of
$\omega$, B\"{u}chi introduced finite automata over
$\omega$-words. He provided a computable reduction from formulas
to finite automata.

B\"{u}chi generalized the concept of an automaton to allow
automata to ``work" on words of any countable length (ordinal) and
used this to show that the MLO-theory of any countable ordinal is
decidable (see \cite{buchi:siefkes}).

What is known as the ``Church synthesis problem'' was first posed
by A. Church in \cite{church} for the case of $(\om,<)$. The
Church  problem is much more complex than the decidability problem
for $MLO$.  Church uses the language of automata theory. It was
McNaughton (see \cite{mcnaughton}) who first observed that the
Church problem  can be equivalently phrased in game-theoretic
language.

 Let $\alpha>0$ be an ordinal and let  $\vp(X_1,X_2)$ be a formula, where $X_1$ and $X_2$
 are set (monadic predicate) variables.
The \emph{McNaughton game} $\mG_\vp^\alpha$ is defined as follows.
\begin{enumerate}[(1)]
\item The game is played by two players, called Player I and Player II.
\item A \emph{play} of the game has $\alpha$ rounds.
\item At round $\beta<\alpha$:
first, Player I  chooses $\pi_{X_1}(\beta)\in \{0,1\}$; then,
Player II  chooses $\pi_{X_2}(\beta)\in  \{0,1\}$.
\item By the end of the play $\pi_{X_1},\pi_{X_2}:\alpha\rar \{0,1\}$
have been constructed. Set $$P_\pi:= \pi_{X_1}^{-1}(1),{} Q_\pi:=
\pi_{X_2}^{-1}(1).$$
\item Then, Player I  wins the play if $(\alpha,<)\models \vp(P_\pi,Q_\pi)$;
otherwise, Player II  wins the play.
\end{enumerate}
What we want to know is: Does either one of the players have a
\emph{winning strategy} in $\mG_\vp^\alpha$? If so, which one?
That is, can Player I  choose his moves so that, whatever way
Player II responds we have $ \vp(P_\pi,Q_\pi)$? Or can Player II
respond to Player I's moves in a way that ensures the opposite?

Since at round $\beta<\alpha$, Player I has access only to $Q_\pi
\cap [0,\beta)$ and Player II  has access only to $P_\pi\cap
[0,\beta]$, it seems that the following formalizes  well the
notion of a strategy in this game:
\begin{definition}[Causal operator] Let $\alpha$ be an ordinal, $f:\Ps{\alpha}\rar \Ps{\alpha} $ maps the subsets of $\alpha$ into the subsets of $\alpha$.
We call $f$ \emph{causal} (resp. \emph{strongly causal}) iff for
all $P,P'\s \alpha$ and $\beta<\alpha$, if
\[P\cap [0,\beta] = P'\cap [0,\beta]\quad\hbox{(resp. $P\cap [0,\beta) = P'\cap
[0,\beta)$)} \quad\hbox{implies}\quad
f(P)\cap [0,\beta] = f(P')\cap [0,\beta]\ .
\]
That is, if $P$ and $P'$ agree \emph{up to and including} (resp.
\emph{up to}) $\beta$, then so do $f(P)$ and $f(P')$.
\end{definition}
So a winning strategy for Player I is a strongly causal $f:\Ps{\alpha}\rar
\Ps{\alpha}$ such that for every $P\s \alpha$, $(\alpha,<)\models
 \vp(f(P),P)$;
 a winning strategy for Player II is a causal $f:\Ps{\alpha}\rar
\Ps{\alpha}$ such that for every $P\s \alpha$, $(\alpha,<)\models
\neg \vp(P,f(P))$.

It is clear that if Player I  has a winning strategy in
$\mG_\vp^{\alpha}$, then $\alpha\models \forall X_2\exists
X_1\vp$. It is also  easy to see that  $\alpha\models \forall
X_2\exists X_1\vp$ does not imply that Player I has a winning
strategy.

This leads to
\begin{definition}[Game version of the Church  problem]\label{dfn:Church synthesis problem}
Let $\alpha$ be an ordinal. Given a $\MLO$-formula $\vp(X_1,X_2)$, decide
whether Player I  has a winning strategy in $\mG_\vp^\alpha$.
\end{definition}
From now on, we will use ``formula" for ``$\MLO$-formula" unless stated otherwise.

To simplify notations,  games and the Church  problem were
previously defined for formulas with two free variables $X_1$ and
$X_2$. It is easy to generalize all definitions and results to
formulas $\psi(X_1,\dots , X_m, Y_1, \dots Y_n)$ with many
variables. In this generalization at round  $\beta$, Player I
chooses values for $X_1(\beta), \dots , X_m(\beta)$, then  Player
II  replies by choosing  values for  $Y_1(\beta), \dots ,
Y_n(\beta)$.
 Note
that, strictly speaking, the input to the Church problem  is not
only a formula, but a formula plus a partition of its
free variables to Player I's variables and  Player II's variables.

In \cite{BL69}, B\"{u}chi and Landweber prove the computability of
the Church problem in $\om=(\nat,<)$. Even more importantly, they show
that in the case of $\om$ we {can}  restrict ourselves to
 \emph{definable strategies}, that is  causal (or strongly causal)
operators computable by finite state automata or, equivalently,
definable in $\MLO$. (Recall that  $F:\Ps{\nat}\rar \Ps{\nat} $ is definable in $\MLO$ if there is  an $\MLO$  formula $\psi(X,Y)$ such
that  for every $P,Q\subseteq \nat$ we
have $P=F(Q)$ iff  $\om\models \psi(Q,P)$.)
\begin{thm}[B\"{u}chi-Landweber, 1969]\label{th:bl-game} Let $\vp(\bar{X},\bar{Y})$ be a formula,
where $\bar{X}$ and $\bar{Y}$ are  disjoint lists of variables.
Then:
\begin{enumerate}[\hbox to6 pt{\hfill}]
\item\noindent{\hskip-0 pt\bf Determinacy:}\ One of the players has a 
  winning strategy in the game $\mG_\vp^\om$.\medskip

\item\noindent{\hskip-0 pt\bf Decidability:}\ It is decidable
  \emph{which} of the players has a winning strategy.\medskip

\item\noindent{\hskip-0 pt\bf Definable strategy:}\ The player who has 
  a winning strategy, also has a \emph{definable} winning strategy.\medskip

\item\noindent{\hskip-0 pt\bf Synthesis algorithm:}\ We can compute a
  formula $\psi(\bar{X},\bar{Y})$ that defines (in $\om$) a winning 
  strategy for the winning player in $\mG_\vp^\om$.
\end{enumerate}
\end{thm}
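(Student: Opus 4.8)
The plan is to reduce the game $\mG_\vp^\om$ to a finite-state game and then apply an effective determinacy theorem for such games. First I would use B\"uchi's theorem to pass from the formula $\vp(\bar X,\bar Y)$ to a finite $\om$-automaton recognizing exactly the infinite words (over the product alphabet determined by the variables in $\bar X$ and $\bar Y$) whose associated predicates satisfy $\vp$ in $\om$. The crucial refinement is McNaughton's determinization theorem: every $\om$-regular language is recognized by a \emph{deterministic} Muller (equivalently, parity) automaton $\cA$. Determinism is essential here, because the two players build the input one letter at a time, and we need a single move-by-move tracking of the automaton's state rather than a guessed run.

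Next I would recast $\mG_\vp^\om$ as a game on the finite graph whose vertices are the states of $\cA$ decorated with a turn bit recording whether, within the current round, it is Player I's or Player II's move. Player I first picks the $\bar X$-component of the round's letter, then Player II picks the $\bar Y$-component; the resulting pair drives a transition of $\cA$. A play of the original game then corresponds exactly to an infinite path in this graph, and Player I wins iff the induced run of $\cA$ satisfies its Muller acceptance condition. Thus, up to this encoding, $\mG_\vp^\om$ is a Muller game on a finite arena.

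It remains to invoke the theory of such games. \textbf{Determinacy} together with the existence of \emph{finite-memory} winning strategies for Muller games on finite graphs supplies the backbone of all four claims: exactly one player wins from the initial vertex, and that player wins by a strategy realizable as a finite sequential machine. The \textbf{Decidability} and \textbf{Synthesis} parts then follow by effectiveness, since one can algorithmically compute the winning regions of a finite Muller (or parity) game and extract a finite-memory winning strategy, for instance via the latest-appearance-record reduction to a parity condition followed by a recursive solution of parity games; checking on which side the initial vertex lies decides the winner. For the \textbf{Definable strategy} part I would observe that a finite-memory strategy is precisely a causal (resp.\ strongly causal) operator computed by a finite sequential machine, and that any such operator is $\MLO$-definable: one writes a formula $\psi(\bar X,\bar Y)$ asserting the existence of a legal run of the strategy machine consistent with $\bar X$ that outputs $\bar Y$. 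Translating the machine produced by the synthesis step into such a $\psi$ yields the required definable winning strategy.

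The main obstacle is not determinacy as such---for these Borel winning conditions determinacy is classical---but the \emph{finite-memory} determinacy of Muller games: establishing that the winner can win with only finitely much memory, and that this memory and the associated strategy can be computed, is the genuine combinatorial core of the argument. It is precisely at this point that the passage to a \emph{deterministic} $\om$-automaton pays off, since the finite-state structure of $\cA$ is what bounds the memory needed and makes the winning regions and strategies computable.
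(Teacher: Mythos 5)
The paper does not prove this theorem at all: it is imported as the classical result of B\"{u}chi and Landweber \cite{BL69} and used as a black box (most notably inside the proof of Lemma \ref{lem:comp-cond}, where the games Game$_{\om}(C,G)$ on game types are reduced to McNaughton games and this theorem is invoked to get their determinacy and decidability). So the only question is whether your reconstruction is a correct proof of the classical statement, and in outline it is --- it is the standard modern automata-theoretic argument. Translating $\vp$ into a nondeterministic B\"{u}chi automaton, determinizing via McNaughton into a deterministic Muller (equivalently parity) automaton $\cA$, splitting each round into Player I's choice of the $\bar{X}$-letter followed by Player II's choice of the $\bar{Y}$-letter on a finite arena built from the states of $\cA$, and then invoking finite-memory determinacy and effective solvability of finite Muller/parity games (latest-appearance-record reduction plus the recursive parity-game algorithm) yields determinacy, decidability, and a finite-state winning strategy; and a strategy computed by a finite sequential machine is $\MLO$-definable over $\om$ by existentially quantifying monadic predicates encoding its run. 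Two points deserve explicit care in a full write-up. First, for the definability part, your formula $\psi(\bar{X},\bar{Y})$ must pin down the run of the strategy machine uniquely --- determinism of the machine gives this, and it is exactly what makes $\psi$ define a causal (for Player II) or strongly causal (for Player I) \emph{operator} rather than merely a relation; the turn structure of your arena is what guarantees the strong causality on Player I's side. Second, your appeal to ``finite-memory determinacy of Muller games'' is historically the very content of \cite{BL69} itself, so to avoid circularity you must really carry out the proof of that combinatorial core via the LAR reduction and positional determinacy of parity games, each of which has an independent proof; you gesture at precisely this, so the argument stands. In sum: the paper's route is citation, your route is the standard proof, and what your version buys is the missing foundation rather than an alternative to anything the paper actually argues.
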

\noindent After stating their main theorem,
 B\"{u}chi and Landweber write:
\begin{quote}
 ``We hope to present elsewhere a corresponding extension of [our main theorem] from $\om$
to any countable ordinal.''
\end{quote}
 However,  despite the fundamental role of the Church problem, no such extension is even mentioned in
 a later book by
 B\"{u}chi and Siefkes
 \cite{buchi:siefkes}, which summarizes the theory of finite automata over words of countable ordinal
 length.

In \cite{RS06}, we provided  a counter-example to a full extension
of the B\"{u}chi-Landweber   theorem to $\alpha\geq \om^\om$.
 Let $\vp (Y)$, $\psi({Y})$ be formulas and $\mM$  be a structures.
We say that $\psi$ \emph{selects} (or, is a \emph{selector} for)
$\vp$ \emph{in} $\mM$ iff:
\begin{enumerate}[(1)]
\item $\mM \models \exists^{\le 1}{Y}\psi({Y})$ (i.e., there is at
most one $Y$ that satisfies $\psi$),
\item $\mM \models \forall {Y}(\psi({Y})\rar \vp({Y}))$, and
\item $\mM \models \exists {Y}\vp({Y}) \rar \exists {Y}\psi({Y})$.
\end{enumerate}
In \cite{RS06}, we proved that for every ordinal $\alpha\geq
\om^\om$ there is an $\MLO$  formula $\psi_\alpha(Y)$ such that
$\alpha\models \exists Y\psi_\alpha$; however,  there is no $\MLO$
formula that selects $\psi_\alpha$ in $\alpha$.

Now,  consider McNaughton game of length $\alpha$  for
$\vp(X_1,X_2)$ defined as $\psi_\alpha(X_1)\wedge X_2=X_2$
 (so,  we ignore $X_2$ and this can be considered as a one-player game).
 Player I  cannot have a {definable} strategy in
$\mG_\vp^\alpha$. Indeed, if $\chi$ could define such a strategy,
then $\exists X_2(X_2=\emptyset \wedge \chi)$  would select
$\psi_{\alpha}(X_1)$ over $\alpha$. On the other hand, Player I
does win this game: she simply plays a \emph{fixed} $X_1$ that
satisfies $\psi_\alpha(X_1)$ over $\alpha$, ignoring Player II
moves. Hence, the definability and the synthesis  parts of the
B\"{u}chi-Landweber theorem fail for every $\alpha\geq \om^\om$.
 Shomrat  \cite{Sho07}  proved
\begin{thm}\label{th:intr-shomrat} The B\"{u}chi-Landweber theorem holds for an ordinal $\alpha$ if and only if $\alpha<\omega^\omega$.
\end{thm}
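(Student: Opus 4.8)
The plan is to prove the two implications separately. The direction ``$\alpha\geq\om^\om$ implies the theorem fails'' is already in hand: the formula $\psi_\alpha$ and the associated one-player game $\mG_\vp^\alpha$ with $\vp(X_1,X_2):=\psi_\alpha(X_1)\wedge X_2=X_2$ displayed above witness a game that Player~I wins but in which she has no \emph{definable} winning strategy (a definable strategy would yield a selector for $\psi_\alpha$ in $\alpha$, which \cite{RS06} rules out). Hence the Definable-strategy and Synthesis parts of the B\"uchi--Landweber theorem fail for every $\alpha\geq\om^\om$, establishing the contrapositive of the ``only if'' direction. It remains to prove the ``if'' direction: all four parts hold for every $\alpha<\om^\om$.

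For the ``if'' direction I would argue by the composition method, by induction on the Cantor normal form of $\alpha$. Fix $\vp$ and let $d:=\qd{\vp}$. Since there are only finitely many $d$-types (Hintikka formulas of quantifier depth $d$), I would summarize a McNaughton game by its \gt: a finite object recording which $d$-types Player~I can force the labelled play to realize. The two composition facts I would need are (a) Shelah's sum theorem, giving a finite computable type algebra in which the $d$-type of a concatenation $\beta_1+\beta_2$ is $t_1\typemult t_2$ for the $d$-types $t_i$ of the parts, and (b) the $\om$-limit law, by which an $\om$-indexed sequence of block $d$-types determines a single limit $d$-type through a Ramsey/idempotent argument. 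Both lift from $d$-types to game-types: the game-type of a length-$(\beta_1+\beta_2)$ game is computable from those of its length-$\beta_1$ and length-$\beta_2$ parts, and likewise at $\om$-limits.

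The heart of the induction is the step from $\om^n$ to $\om^{n+1}=\om^n\cdot\om$. A play of length $\om^{n+1}$ decomposes into $\om$ consecutive blocks each of length $\om^n$; by (a)--(b) the winner depends only on the $\om$-sequence of block game-types. As there are finitely many game-types, this reduces $\mG_\vp^{\om^{n+1}}$ to a \emph{meta-game} of length $\om$ over the finite alphabet of length-$\om^n$ game-types, in which at each meta-round the players jointly realize one block. By Theorem~\ref{th:bl-game} the meta-game is determined, it is decidable who wins it, and the winner has a definable meta-strategy; the induction hypothesis then supplies, for each block game-type the meta-strategy prescribes, a definable strategy realizing it inside a length-$\om^n$ block. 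Composing the winning meta-strategy with these inner strategies produces a winning strategy for the corresponding player in $\mG_\vp^{\om^{n+1}}$, which establishes determinacy, decidability, and the definable-strategy and synthesis parts at length $\om^{n+1}$. General $\alpha<\om^\om$ (a finite sum of $\om^k$-terms) then follows by combining the single-power case with the sum composition (a).

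The main obstacle is lifting \emph{definability} through the composition, not merely the existence of strategies. One must verify that the glued operator is $\MLO$-definable, which requires the block structure to be $\MLO$-visible: the set of block boundaries (the multiples of $\om^n$) must be definable in $\MLO$ over $\alpha$, and the synchronization data (which block one currently occupies, and which game-type the inner strategy should target there) must be readable by a formula. Proving $\MLO$-definability of the multiples of $\om^n$ for $\alpha<\om^\om$, and threading this definable ``clock'' through the composition while keeping the quantifier depth of the glued strategy-formula bounded, is the delicate core of the argument --- and it is precisely here that $\om^\om$ is the threshold, since beyond it the finite type-algebra and the definable selection it supports break down, in accordance with the counterexample.
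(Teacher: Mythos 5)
Your proposal is correct and follows essentially the same route as the paper: the counterexample from \cite{RS06} (no definable selector for $\psi_\alpha$) kills the definability and synthesis parts for $\alpha\geq\om^\om$, and for $\alpha<\om^\om$ the paper likewise works with finitely many game-types, reduces a length-$\om^{n}\cdot\om$ game to a meta-game of length $\om$ over game-types (Lemmas \ref{lem:comp-cond}, \ref{lem:det-product}, \ref{lem:def-prod}) solved by Theorem \ref{th:bl-game}, handles finite sums via Lemmas \ref{lem:sum-main} and \ref{lem:def-sum}, and threads definability through using the $\MLO$-definability of each $\alpha<\om^\om$ (the relativized formula $\exists t\,(\theta_\alpha^{<t}\wedge\psi^{<t}\wedge\cdots)$), which is exactly the ``definable clock'' issue you flag as the delicate core.
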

The proof in \cite{Sho07} is very long and requires the
development of an extensive game-theoretic apparatus. Moreover,
the technique of \cite{Sho07}  cannot be extended to games of
length $\geq \om^\om$. In this article, we provide a simple proof
of  Theorem \ref{th:intr-shomrat}.

Our main results show that the determinacy and the decidability
parts of the B\"{u}chi-Landweber Theorem hold for every countable
ordinal.
\begin{thm}[Main] Let $\alpha$ be a countable ordinal and $\vp(\bar{X},\bar{Y})$ be a
formula.Then:
\begin{enumerate}[\hbox to6 pt{\hfill}]
\item\noindent{\hskip-0 pt\bf Determinacy:}\ One of the players has a winning strategy in the game $\mG_\vp^\alpha$.\medskip

\item\noindent{\hskip-0 pt\bf Decidability:}\ It is decidable \emph{which} of the players has a winning strategy in $\mG_\vp^\alpha$.
\end{enumerate}
\end{thm}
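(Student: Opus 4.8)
The plan is to recast the winning condition in terms of a finite invariant and then to determine the winner by a transfinite induction on $\alpha$ whose only genuine content lives at limit stages. Fix $n=\qd{\vp}$. Since whether $(\alpha,<)\models\vp(P_\pi,Q_\pi)$ is determined by the $n$-type (depth-$n$ Hintikka formula) of the labelled order $(\alpha,<,P_\pi,Q_\pi)$, and since for fixed $n$ and a fixed number of predicates there are only finitely many such $n$-types, I first replace the payoff of $\mG_\vp^\alpha$ by membership of the play's $n$-type in a set $W_\vp\s T_n$, where $T_n$ is the finite set of $n$-types and $W_\vp$ collects the types that satisfy $\vp$. For the induction to close I must treat all payoffs at once, so for every countable $\alpha$ I define the \emph{game-type}
\[
\wi(\alpha)=\{W\s T_n : \hbox{Player I has a winning strategy in the length-}\alpha\hbox{ game with payoff }W\}\ ,
\]
an element of the finite set $\power(\power(T_n))$. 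The two assertions I prove simultaneously by transfinite induction are: (Determinacy) for every $W$ exactly one player wins, equivalently $W\notin\wi(\alpha)$ iff Player II can force the play-type into $T_n\sm W$; and (Computability) $\wi(\alpha)$ is a computable function of a bounded invariant of $\alpha$.

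The successor/addition step is the routine one. Using the composition theorem for $\MLO$ over ordinals --- the $n$-type of a sum of two labelled orders is a computable function $\oplus$ of the $n$-types of the summands --- a play of length $\beta+\gamma$ splits into a length-$\beta$ play of type $\sigma$ followed by a length-$\gamma$ play of type $\tau$, with overall type $\sigma\oplus\tau$. Hence, writing $G_W=\{\sigma\in T_n : \{\tau : \sigma\oplus\tau\in W\}\in\wi(\gamma)\}$, one checks that $W\in\wi(\beta+\gamma)$ iff $G_W\in\wi(\beta)$: if Player I can steer the first part into the ``good'' set $G_W$ she afterwards observes the reached $\sigma\in G_W$ and continues with a winning strategy on the tail, while if she cannot then, by the inductive determinacy at $\beta$ and $\gamma$, Player II wins. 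Concatenating strongly causal (resp. causal) strategies keeps them strongly causal (resp. causal), so the reduction respects the strategy notions of the Causal-operator definition, and it simultaneously propagates determinacy and computes $\wi(\beta+\gamma)$ from $\wi(\beta)$ and $\wi(\gamma)$.

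The limit step is the heart of the matter, and the step I expect to be the main obstacle. A length-$\alpha$ game for $\alpha$ a limit is a genuinely transfinite game, so Borel determinacy does not apply to it directly; the point of the composition method is exactly to reduce each limit stage to a game of length $\om$. Writing $\alpha=\sum_{i<\om}\delta_i$ for a strictly increasing exhaustion, a play splits into an $\om$-sequence of sub-plays producing sub-types $\sigma_i$, and the overall type is the infinitary sum $\bigoplus_i\sigma_i\in T_n$, which by the $\MLO$ composition theorem depends only on the sequence $(\sigma_i)_i$. Here the finiteness of $T_n$ is decisive: $(T_n,\oplus)$ is a finite semigroup, so by a Ramsey/idempotent argument the limit type is governed by an idempotent occurring cofinally together with a bounded prefix, and the payoff ``$\bigoplus_i\sigma_i\in W$'' becomes an $\om$-regular condition on the $\om$-sequence of sub-types. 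This turns the limit stage into a finite-state McNaughton game of length $\om$, whose determinacy and decidability are exactly the classical B\"uchi--Landweber phenomena of Theorem~\ref{th:bl-game}; transporting a winner back and forth and reading off the resulting $\wi(\alpha)$ is what the argument must verify. The difficulty is twofold: ensuring that only finitely much data about the blocks $\delta_i$ is consulted, so that $\wi(\alpha)$ stays a computable function of the $m$-type of $\alpha$ for a suitable computable $m\ge n$ (the needed $\equiv_m$-invariance being a congruence property of $\oplus$ and of the $\om$-power), and checking that the causal/strongly causal discipline survives the passage to the block game.

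Finally I assemble the pieces. By the three induction steps the map sending a countable ordinal to its game-type factors through $\equiv_m$ and is given by a single computable function $g$ on the finite set of $m$-types; Determinacy for $\mG_\vp^\alpha$ is the instance of the inductive Determinacy clause at the payoff $W_\vp$. For Decidability I use that the monadic theory of any countable ordinal is decidable (B\"uchi--Siefkes), so the $m$-type of $\alpha$ is computable; Player I wins $\mG_\vp^\alpha$ iff $W_\vp\in g(\hbox{$m$-type of }\alpha)$, which is decidable. Equivalently, $g$ lets me compute from $\vp$ a single $\MLO$-sentence $\Theta_\vp$ with $\alpha\models\Theta_\vp$ iff Player I wins, reducing the decision to the decidable theory of $\alpha$.
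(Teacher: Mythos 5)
Your skeleton is in fact the paper's own: your $\wi(\alpha)$ is the paper's game type $\gt_n(\alpha)$, your addition step is Lemma~\ref{lem:sum-main} (your $G_W$ coincides with the paper's $K(C_\beta,G)$, via the monotonicity of Lemma~\ref{lem:up-clsd}), and your plan for limits --- reduce to a length-$\om$ game over block types and invoke B\"uchi--Landweber --- is exactly the role of Lemma~\ref{lem:comp-cond} and Lemma~\ref{lem:det-product}. But the limit step, which you rightly call the heart, is not actually carried out, and the two ideas that make it work are missing. First, you never define the moves of the block game: Player I cannot simply choose the type $\sigma_i$ of block $i$, since he does not control it alone. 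The paper's resolution is that Player I's move is a set $G_i\in C$ of types he can \emph{force} in a block of game type $C$, and Player II's move is a type $\tau_i\in G_i$ (the game Game$_\om(C,G)$); one must then prove both strategy-transfer claims of Lemma~\ref{lem:det-product}, in particular the direction where Player II wins, which uses determinacy of each block to show Player II can win each block for the complement of $\{\tau_D : D\in C\}$. Second, this abstraction presupposes that all blocks have the \emph{same} game type $C$. For an arbitrary exhaustion $\alpha=\sum_i\delta_i$ of an arbitrary countable limit, the types $\gt_n(\delta_i)$ vary, and merging blocks changes their types, so there is no fixed finite-state $\om$-game to which B\"uchi--Landweber applies. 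The paper sidesteps this by running the limit induction only for $\alpha=\om^\lambda$: by pigeonhole choose exponents $\gamma_i\uparrow\lambda$ with $\gt_n(\om^{\gamma_i})$ constant, and by absorption $\sum_i\om^{\gamma_i}$ is still $\om^\lambda$ (Lemma~\ref{lem:det-om-power}); general ordinals are then assembled from Cantor Normal Form by the addition lemma (Theorem~\ref{th:determ}). (A general exhaustion can alternatively be made uniform by Ramsey applied to the additive coloring of \emph{merged block game-types} --- the role of Ramsey mentioned after Lemma~\ref{lem:om-sum} --- but your Ramsey argument is applied to the play types $\sigma_i$, where it only re-derives the regularity of the payoff that the Composition Theorem~\ref{th:comp} already gives, not where it is needed.)

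The decidability half has a parallel gap. You assert that $\alpha\mapsto\wi(\alpha)$ factors through $\equiv_m$ for a computable $m$, calling this ``a congruence property of $\oplus$ and of the $\om$-power,'' but this is precisely the stabilization that must be proved: Lemma~\ref{th:dec-mult} ($\gt_n(\om^m)=\gt_n(\om^m\times\alpha)$ for all countable $\alpha>0$, obtained from a pigeonhole collapse $t(\om^p)=t(\om^{p+1})$ followed by an induction over all countable $\alpha$), together with --- if one wants factoring through the finite set of $m$-types rather than through the B\"uchi--Siefkes code, whose coefficients are unbounded integers --- Lemma~\ref{lem:fin-mult}, showing coefficients matter only modulo a computable period. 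Relatedly, ``the $m$-type of $\alpha$ is computable'' is meaningful only after fixing how a countable ordinal is presented as a finite input; this is why the paper's decidability statement (Theorem~\ref{thm:dec-ch-prob}) takes $\code_{\qd{\vp}}(\alpha)$ as part of the input. None of these steps is hard given your setup, but as written your proposal describes the destination while deferring exactly the arguments that constitute the proof.
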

\noindent Our proof uses both game theoretical techniques and the
``composition method" developed by Feferman-Vaught, Shelah and
others (see, e.g. \cite{Sh75}).

The article is organized as follows. The next section recalls
standard definitions about monadic logic of order, summarizes
elements of the composition method and reviews  known facts about
the monadic theory of countable ordinals. In Section
\ref{sec:games}, we introduce game-types, define games on game
types and show  that these game are  a special case of McNaughton
games.  Section \ref{sect:comp-gt}  contains  the main technical
lemmas of the paper and shows that the role of the game types is
similar to the role of monadic types in the composition method. In
Section \ref{sect:sho}, we prove that the B\"{u}chi-Landweber
theorem holds in its entirety for all ordinals
$\alpha<\omega^\omega$. In Sections \ref{sect:deter} and
\ref{sect:dec}, it is proved that the determinacy and decidability
parts of the B\"{u}chi-Landweber theorem hold for all countable
ordinals.
Section \ref{sect:mlo-char} provides an  MLO-characterisation of the winner
and  shows that for every
formula $\varphi(X,Y)$  there exists
an  $\MLO$ sentence $\psi$   such that  for every   countable
ordinal  $\alpha$:  Player I wins $\mG_\vp^\alpha$ iff $
\alpha\models \psi$.
 Section \ref{sect:synth} addresses the problem whether
the  winner has a definable winning strategy. We were unable to
show that this problem is decidable for ordinals $\geq \om^\om$;
however, we reduced  the decidability of this problem for
countable ordinals $\geq \om^\om$ to the case of games of length
 $\om^\om$. Section  \ref{sect:concl} contains a conclusion and states
some open problems.

\section{Preliminaries and Background}
\subsection{Notations and terminology}
We use $n,k,l,m,p,q$ for natural numbers and
$\alpha,\beta,\gamma,\delta$ for ordinals.  We use $\nat$ for the
set of natural numbers and  $\om$ for the first infinite ordinal.
We write $\alpha+\beta$, $\alpha\beta$, $\alpha^\beta$ for the
sum, multiplication and exponentiation, respectively, of ordinals
$\alpha$ and $\beta$. We use the expressions ``\emph{chain}'' and
``\emph{linear order}'' interchangeably.

We use  $\Ps{A}$ for the set of subsets of $A$.

\subsection{The Monadic Logic of Order ($\MLO$)}
\subsubsection{Syntax}
 The syntax of the monadic second-order logic of order - $\MLO$ has
in its vocabulary {\it individual\/} (first order) variables
$t_1,t_2\nek$, monadic  {\it second-order\/} variables
$X_1,X_2\nek$  and one binary relation $<$ (the order).

 { Atomic formulas} are
of the form $X(t)$ and $t_1<t_2$. { Well formed formulas} of the
monadic logic $\MLO$ are obtained from atomic formulas using
Boolean connectives $\neg, \vee, \wedge,\to$ and the first-order
quantifiers $\exists t$ and $\forall t$, and the second-order
quantifiers $\exists X$  and $\forall X$. The quantifier depth of
a formula $\varphi$ is denoted by $\qd{\varphi}$.

We use upper case letters $X$, $Y$, $Z$,...  to denote
second-order variables; with an overline, $\bar{X}$, $\bar{Y}$,
etc., to denote finite tuples of variables.
\subsubsection{Semantics} A \emph{structure} is a tuple $\mM:=(A,<^\mM, \bar{P}^\mM)$
where: $A$ is a non-empty set, $<^\mM$ is a binary relation on
$A$, and $\bar{P}^\mM:= \tuple{ P_1^\mM,\ldots,P_{l}^\mM}$ is a
\emph{finite} tuple of sub\emph{sets} of $A$.

If $\bar{P}^\mM$ is a tuple of $l$ sets, we call $\mM$ an
$l$-\emph{structure}. If $<^\mM$ linearly orders $A$, we call $\mM$
an $l$-\emph{chain}.

Suppose $\mM$ is an $l$-structure and $\vp$ a formula with
free-variables among $X_1,\ldots,X_{l}$. We define the relation
$\mM \models \vp$ (read: $\mM$ \emph{satisfies} $\vp$) as usual,
 understanding that the second-order quantifiers range over
\emph{subsets} of $A$.

Let $\mM$ be an $l$-structure. The \emph{monadic theory} of $\mM$,
$\MTh(\mM)$, is the set of all formulas with free-variables among
$X_1,\ldots,X_{l}$ satisfied by $\mM$.

From now on, we omit the superscript  in `$<^\mM$' and
`$\bar{P}^\mM$'. We often write $(A,<)\models \vp(\bar{P})$
meaning $(A,<,\bar{P})\models \vp$.
\begin{definition}[$\MLO$ Definable Function]Let  $\mM:=(A,<)$ be a chain.  
A function $F:\Ps{A}^n\rightarrow \Ps{A}^m$ is $\MLO$-\emph{definable }in $\mM$ if there is an $\MLO$ formula $\vp(X_1,
\dots, X_n,Y_1,\dots, Y_m)$ such that
 $\mM\models\vp(P_1,\dots,
P_n,Q_1,\dots Q_m)$ iff $\tuple{Q_1,\dots ,Q_m}=F(P_1,\dots ,P_n)$.
\end{definition}
\subsection{Elements of the composition method}\label{subsection:composition method}
Our proofs make use of the technique known as the composition method
developed by Feferman-Vaught and   Shelah \cite{FV59,Sh75}. To fix
notations and to aid the reader unfamiliar with this technique, we
briefly review the definitions and results that we require. A more
detailed presentation can be found in \cite{thomas:ef+composition}
or \cite{gurevich:monadic second-order theories}.
\subsubsection{Hintikka formulas and $n$-types}
Let $n,l\in \nat$. We denote by $\Form^n_{l}$ the set of formulas
with free variables among $X_1,\ldots,X_{l}$ and of quantifier depth
$\le n$.
\begin{definition} Let $n,l\in \nat$ and let $\mM, \mN$ be  $l$-structures.
The $n$-\emph{theory} of $\mM$ is
$$\Th^{n}(\mM):= \{\vp\in \Form^n_{l}\mid \mM\models \vp\}.$$
If $\Th^{n}(\mM)=\Th^{n}(\mN)$, we say that $\mM$ and $\mN$ are
$n$-\emph{equivalent} and write $\mM\equiv^{n}\mN$.
\end{definition}
Clearly, $\equiv^{n}$ is an equivalence relation. For any $n\in \nat$
and $l>0$, the set $\Form^n_{l}$ is infinite. However, it contains
only finitely many semantically distinct formulas. So, there are
finitely many $\equiv^n$-equivalence classes of $l$-structures. In
fact, we can compute characteristic sentences for the
$\equiv^n$-classes:
\begin{lem}[Hintikka Lemma]\label{lemma:Hintikka formulas}
For $n,l\in \nat$, we can compute a \emph{finite}  set $\Char^n_l\s
\Form^n_{l}$ such that:
\begin{enumerate}[$\bullet$]
\item For every $\equiv^n$-equivalence class $A$ there is a \emph{unique}
$\tau\in \Char^n_l$ such that for every $l$-structure $\mM$:
$\mM\in A$ iff $\mM\models \tau$.

\item
 Every $\MLO$ formula  $\varphi(X_1,\dots X_l)$ with $\qd{\varphi}
\leq k$ is equivalent to a (finite) disjunction of characteristic
formulas from $Char^k_l$. Moreover, there is an algorithm which
for every formula $\varphi(X_1,\dots X_l)$ computes a  finite set
$G\subseteq \Char^{\qd{\varphi}}_l$ of characteristic formulas,
such that $\varphi$ is equivalent to the disjunction of all the
formulas in $G$.
\end{enumerate}
Any member of $ Char^k_l$ we call a $(k,l)$-\emph{Hintikka formula}
or $(k,l)$-characteristic formula. We  use  $\tau$, $\tau_i$,
$\tau^j$ to range over the characteristic formulas and $G, G_i, G'$
to  range over sets of  characteristic formulas.
\end{lem}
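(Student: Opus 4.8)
The plan is to prove the lemma by induction on the quantifier depth $n$, strengthening the statement so that the induction goes through. Since peeling off a first-order quantifier from a formula in $\Form^n_l$ produces subformulas with a free individual variable, I would prove the stronger claim for characteristic formulas $\Char^n_{l,k}$ whose free variables range among the set variables $X_1,\dots,X_l$ and the individual variables $t_1,\dots,t_k$; the set required by the lemma is then $\Char^n_l := \Char^n_{l,0}$. Throughout, a ``structure'' means a tuple $(\mM,\bar P,\bar a)$ interpreting the $l$ sets and $k$ individual parameters, and $\equiv^n$, $\Th^n$ are read relative to such tuples. For the base case $n=0$, a depth-$0$ formula with the allowed free variables is a Boolean combination of the finitely many atomic formulas $X_i(t_j)$ and $t_j<t_{j'}$; I take $\Char^0_{l,k}$ to be the finite, effectively listable set of consistent conjunctions of atoms and their negations. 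Each structure satisfies exactly the one recording its atomic diagram, and this determines its $0$-theory.

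For the inductive step, assume $\Char^n_{l,k}$ has been constructed for all $l,k$. Given a structure with $n$-characteristic formula $\tau\in\Char^n_{l,k}$, I define its $(n+1)$-characteristic formula as the conjunction of $\tau$ with clauses recording which $n$-types become realizable after adjoining one more parameter. Concretely, let $E\s\Char^n_{l,k+1}$ be the set of $n$-types realized by expansions $(\mM,\bar P,\bar a,b)$ as $b$ ranges over $\mM$, and $S\s\Char^n_{l+1,k}$ the set realized by expansions $(\mM,\bar P,Q,\bar a)$ as $Q$ ranges over subsets of $\mM$; the $(n+1)$-characteristic formula is
$$\tau \wedge \bigwedge_{\sigma\in E}\exists t_{k+1}\,\sigma \wedge \forall t_{k+1}\bigvee_{\sigma\in E}\sigma \wedge \bigwedge_{\rho\in S}\exists X_{l+1}\,\rho \wedge \forall X_{l+1}\bigvee_{\rho\in S}\rho .$$
Here the $\exists$-clauses assert that each listed type is realized and the $\forall$-clauses that no other is, so the formula pins down exactly the pair $(E,S)$. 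I let $\Char^{n+1}_{l,k}$ collect all formulas so obtained. Since $E,S$ range over subsets of the finite sets $\Char^n_{l,k+1}$ and $\Char^n_{l+1,k}$, there are finitely many such formulas and they are effectively computable; each has quantifier depth $\le n+1$ and implies the unique $\tau\in\Char^n_{l,k}$ it refines; and by construction every structure satisfies precisely one of them.

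The heart of the argument is the claim that satisfying the same characteristic formula is equivalent to $n$-equivalence, proved by induction on $n$ alongside the construction. The nontrivial direction—that $\mM\models\tau$ and $\mN\models\tau$ for the same $\tau\in\Char^{n+1}_{l,k}$ forces $\mM\equiv^{n+1}\mN$—is exactly a back-and-forth (Ehrenfeucht--Fra\"iss\'e) step: the conjuncts $\bigwedge_{\sigma\in E}\exists t_{k+1}\,\sigma$ and $\forall t_{k+1}\bigvee_{\sigma\in E}\sigma$ guarantee that any element picked in one structure can be matched in the other preserving the $n$-type (by the IH), and the set-clauses do the same for a chosen subset. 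Equivalently, and more directly for the effectivity claims, I would show by structural induction on $\vp$ that every $\vp\in\Form^k_l$ is equivalent to the disjunction of those $\tau\in\Char^k_l$ that imply it: atomic formulas reduce to the base case; Boolean connectives correspond to union and complement on the index sets $G\s\Char^k_l$; and a quantifier $\exists X$ or $\exists t$ is handled by the ``projection'' encoded in the inductive step, sending a set of $(l+1,k)$- (resp.\ $(l,k+1)$-)characteristic formulas to the set of $(l,k)$-characteristic formulas whose realizable-expansion component meets it. All these operations are effective, which yields the algorithm of the second bullet; the first bullet then follows because the disjunction of all of $\Char^k_l$ is valid and its members are pairwise contradictory.

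The main obstacle I anticipate is precisely this back-and-forth lemma linking the syntactic characteristic formulas to semantic $n$-equivalence, carried out uniformly for the second-order (set) quantifier alongside the first-order one: the bookkeeping of realizable expansion-types $E,S$ in the construction must be matched exactly against the moves available in the $n$-round game. By contrast, the finiteness and computability assertions at each stage are routine consequences of the inductive hypothesis, though they must be tracked carefully to justify the word ``compute'' in the statement.
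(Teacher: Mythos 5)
The paper never proves this lemma: it appears in the Preliminaries (Section 2) as a background fact of the composition method, with the reader referred to \cite{thomas:ef+composition} and \cite{gurevich:monadic second-order theories} for detailed presentations. So there is no in-paper proof to compare against; what you have written is essentially the standard construction from those sources -- Hintikka formulas (game-normal forms) built by induction on quantifier depth, where the $(n+1)$-type of a structure records its $n$-type together with the sets $E$ and $S$ of $n$-types of its one-point and one-set expansions, and an Ehrenfeucht--Fra\"{i}ss\'{e} back-and-forth argument shows that equality of characteristic formulas coincides with $\equiv^{n+1}$. Your strengthening to $\Char^n_{l,k}$ with individual parameters, and your structural induction giving the algorithm of the second bullet, are exactly the standard route.

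One step as written would fail and needs repair. You define $\Char^{n+1}_{l,k}$ as the collection of formulas ``so obtained,'' i.e., the image of the map sending an actual structure to its characteristic formula, and then assert this set is effectively computable because $E,S$ range over subsets of finite sets. These two descriptions do not coincide: computing the set of \emph{realized} triples $(\tau,E,S)$ requires deciding whether a candidate game-normal form is satisfied by some $l$-structure, and that is MLO-satisfiability over arbitrary structures with one binary relation, which is undecidable. The fix is standard and costless: define $\Char^{n+1}_{l,k}$ purely syntactically, as the set of all formulas of your displayed shape over \emph{all} triples $(\tau,E,S)$ with $\tau\in\Char^n_{l,k}$, $E\s\Char^n_{l,k+1}$, $S\s\Char^n_{l+1,k}$, satisfiable or not. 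Mutual exclusivity survives, since your $\exists$-clauses and $\forall$-clauses force $E$ and $S$ to equal the realized sets (so each structure still satisfies exactly one member), and unsatisfiable members are harmless for both bullets: the first bullet only demands a unique characteristic formula per $\equiv^n$-class, and your second-bullet algorithm is purely syntactic (unions, complements, and projections on index sets, never a satisfiability test). With that one adjustment, your argument is a correct and complete proof of the lemma the paper takes on faith.
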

\begin{definition}[$n$-Type] For $n,l\in \nat$ and an $l$-structure  $\mM$, we denote by
$\HF_n(\mM)$ the unique member of $\Char^n_l$ satisfied by $\mM$
and call it the $n$-\emph{type} of $\mM$.
\end{definition}
Thus, $\HF_n(\mM)$ determines $\Th^n(\mM)$ and, indeed,
$\Th^n(\mM)$ is computable from $\HF_n(\mM)$.
\subsubsection{The ordered sum of chains and of $n$-types}
\begin{definition}\hfill
\begin{enumerate}[(1)]
\item Let $l\in \nat$, $\mI:=(I,<^\mI)$ a chain and $\fS:=\tuple{
\mM_\alpha \mid \alpha\in I}$ a sequence of $l$-chains. Write
$\mM_\alpha:= (A_\alpha,<^\alpha,{P_1}^\alpha, \dots,{P_l}^\alpha
)$ and assume  that $A_\alpha\cap A_\beta=\emptyset$ whenever $\alpha\ne
\beta$ are in $I$. The \emph{ordered sum} of $\fS$ is the
$l$-chain
$$\sum_\mI\fS:= (\bigcup_{\alpha\in I}A_\alpha,<^{\mI,\fS},{\bigcup}_{\alpha\in I}{P_1}^\alpha, \dots , \bigcup_{\alpha\in I}{P_l}^\alpha),$$
where:
\[\hbox{if $\alpha,\beta\in I$, $a\in A_\alpha$, $b\in A_\beta$,
then $b<^{\mI,\fS} a$ iff $\beta<^\mI\alpha$ or $\beta=\alpha$ and
$b<^\alpha a$.}
\]
If the domains of the $\mM_\alpha$'s are not disjoint, replace
them with isomorphic $l$-chains that have disjoint domains, and
proceed as before.

\item If for all $\alpha\in I$, $\mM_\alpha$ is isomorphic to $\mM$ for some
\emph{fixed} $\mM$, we denote $\sum_\mI\fS$ by $\mM\times \mI$.

\item If $\mI=(\{0,1\},<)$ and $\fS=\tuple{\mM_0,\mM_1}$, we
denote $\sum_\mI\fS$ by $\mM_0 + \mM_1$.
\end{enumerate}
\end{definition}

The next proposition says that taking ordered sums preserves
$\equiv_n$-equivalence.
\begin{prop}\label{prop:ordered sum preserves equivalence} Let $n,l\in \nat$. Assume:
\begin{enumerate}[\em(1)]
\item $( I,<^\mI)$ is a linear order,
\item $\tuple{ \mM_\alpha^0 \mid \alpha\in I}$ and
$\tuple{ \mM_\alpha^1 \mid \alpha\in I}$ are sequences of
$l$-chains, and
\item for every $\alpha\in I$, $\mM_\alpha^0 \equiv^{n} \mM_\alpha^1$.
\end{enumerate}
Then,  $\sum_{\alpha\in I}\mM_\alpha^0 \equiv^{n} \sum_{\alpha\in
I}\mM_\alpha^1$.
\end{prop}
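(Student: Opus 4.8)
The plan is to prove the proposition via the Ehrenfeucht--Fra\"iss\'e characterisation of $\equiv^{n}$: two $l$-chains are $n$-equivalent precisely when Duplicator has a winning strategy in the $n$-round monadic EF game between them, where in each round Spoiler makes either a point move (choosing an element) or a set move (choosing a subset) in one of the two chains and Duplicator must respond in the other. This is the game-theoretic tool underlying the composition method already invoked in Section~\ref{subsection:composition method}, and it reduces the statement to a purely combinatorial construction of strategies.

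First I would fix, for each $\alpha \in I$, a Duplicator winning strategy $\sigma_\alpha$ in the $n$-round EF game on $(\mM_\alpha^0, \mM_\alpha^1)$; these exist by hypothesis~(3). The core of the argument is to assemble the $\sigma_\alpha$ into a single winning strategy $\sigma$ for Duplicator in the $n$-round EF game on $(\sum_{\alpha \in I}\mM_\alpha^0, \sum_{\alpha\in I}\mM_\alpha^1)$ by playing all the local games \emph{in parallel}. Concretely, Duplicator maintains one partial play per component, consistent with $\sigma_\alpha$, and responds componentwise: to a point move by Spoiler inside component $\alpha$ on one side, Duplicator answers inside the same component $\alpha$ on the other side according to $\sigma_\alpha$; to a set move, Duplicator splits Spoiler's chosen set $S$ into its traces $S_\alpha := S \cap A_\alpha$, feeds each trace to $\sigma_\alpha$ as a set move in component $\alpha$, and returns the union of the local responses. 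Since every component runs for at most $n$ rounds, each $\sigma_\alpha$ is consulted legally.

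The verification that $\sigma$ is winning then splits into a within-component and a cross-component part. Within a single component $\alpha$, the order, the distinguished-predicate membership, and the set-membership conditions on the pebbles placed there hold because $\sigma_\alpha$ is winning in the local game. Across components, order is preserved automatically: by the definition of $\sum_\mI\fS$, two pebbles lying in distinct components $\alpha \ne \beta$ compare according to $<^\mI$, and because Duplicator always replies in the same component, the matched pebbles on the other side lie in the same two components and hence compare the same way; set-membership across components reduces to the within-component case, since an element of component $\alpha$ belongs to $S$ iff it belongs to the trace $S_\alpha$. Taken together, these checks show that the final position is a winning one for Duplicator, so $\sum_{\alpha\in I}\mM_\alpha^0 \equiv^{n} \sum_{\alpha\in I}\mM_\alpha^1$.

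I expect the main obstacle to be the treatment of set moves: one must be confident that decomposing a set into its component traces and responding trace-by-trace is both a legal move in each local game and compatible with running all local games simultaneously. The point that makes this work is that the local strategies are mutually independent---$\sigma_\alpha$ never refers to any component other than $\alpha$---so maintaining the parallel plays involves no interaction between components, and the uniform index order $I$ (identical on both sides) carries all the cross-component structure. This independence is also what makes the argument insensitive to the cardinality of $I$, so no separate treatment of infinite $I$ is needed.
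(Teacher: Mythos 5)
Your proof is correct. Note, however, that the paper does not actually prove this proposition: it appears in the background section reviewing the composition method, stated as a known result with proofs deferred to the cited presentations of Thomas and Gurevich, so there is no in-paper argument to diverge from. The parallel composition of Duplicator strategies that you describe --- answering a point move via the strategy $\sigma_\alpha$ of the component in which Spoiler played, and distributing a set move through its traces $S\cap A_\alpha$ --- is precisely the canonical argument given in those references, and your write-up handles the two points that genuinely require care: each local game advances at most one round per global round, so the $n$-round guarantee of each $\sigma_\alpha$ is never exceeded, and cross-component order comparisons are governed by $<^\mI$, which is identical on both sides because the index chain is shared.
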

This allows us to define the sum of formulas in $\Char^{n}_l$ with
respect to any linear order.
\begin{definition}\label{dfn:sum of n types} \hfill
\begin{enumerate}[(1)]
\item  Let $n,l\in \nat$, $\mI:=(I,<^\mI)$ a chain,
$\fH:=\tuple{ \tau_\alpha\mid \alpha\in I}$ a sequence of
$(n,l)$-Hintikka formulas. The \emph{ordered sum} of $\fH$,
(notations  $\sum_\mI\fH$ or $ \sum_{\alpha\in
\mI}\tau_{\alpha}$), is \emph{an}  element $\tau$ of $\Char^n_l$
such that:\hfill\break
if $\fS:=\tuple{ \mM_\alpha \mid \alpha \in I}$ is a sequence
of $l$-chains and $\HF_{n}(\mM_\alpha)=\tau_\alpha$ for $\alpha\in
I$, then
$$\HF_{n}(\sum_{\mI}\fS)=\tau.$$

\item If for all $\alpha\in I$, $\tau_\alpha=\tau$ for some
\emph{fixed} $\tau\in \Char^n_l$, we denote $\sum_{\alpha\in\mI}
\tau_{\alpha}$ by $\tau\times\mI$.

\item If $\mI=(\{0,1\},<)$ and $\fH=\tuple{\tau_0,\tau_1}$, we
denote $\sum_{\alpha\in\mI} \tau_{\alpha}$ by $\tau_0 + \tau_1$.
\end{enumerate}
\end{definition}
The following fundamental result of Shelah can be found in
\cite{Sh75}:
\begin{thm}[Composition Theorem] \label{th:comp} Let $\vp(X_1,\dots, X_l)$ be a formula,
let $n=\qd{\vp}$ and let $\{\tau_1,\dots,\tau_m\}=\Char^n_l$.
Then,   there is a formula $\psi(Y_1,\dots , Y_m)$ such that for
every chain  $\mI=( I,<^\mI)$  and a sequence $\tuple{ \mM_\alpha
\mid \alpha\in I}$ of
 $l$-chains the following holds:
\[\sum_{\alpha\in I}\mM_\alpha\models \vp \mbox{ iff } \mI\models
\psi(Q_1, \dots Q_m), \mbox{ where }
 Q_j=\{\alpha\in I~:~ M_\alpha\models \tau_j\}\ .
\]
  Moreover, $\psi$ is computable from $\vp$.
\end{thm}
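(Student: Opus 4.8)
The plan is to prove the theorem by induction on the structure of $\vp$, establishing the slightly stronger statement uniformly over all arities $l$: for every formula $\vp(X_1,\dots,X_l)$ with $\qd{\vp}=n$ there is a computable index-formula $\psi$ making the displayed equivalence hold, where $Q_1,\dots,Q_m$ record the $(n,l)$-type of each component and $m=|\Char^n_l|$. Phrasing the induction over all $l$ is essential, since the quantifier cases will reduce a formula of arity $l$ to a subformula of arity $l+1$. As a first reduction, by the Hintikka Lemma $\vp$ is equivalent to a finite disjunction of characteristic formulas from $\Char^n_l$, and since every chain satisfies exactly one member of $\Char^n_l$ (its $n$-type), it suffices to produce, for each target $\tau\in\Char^n_l$, a formula $\psi_\tau(\bar Y)$ with $\mI\models\psi_\tau(\bar Q)$ iff $\HF_n(\sum_{\alpha\in I}\mM_\alpha)=\tau$; then $\psi$ is the disjunction of the $\psi_\tau$ over the $\tau$ occurring in the Hintikka disjunction for $\vp$. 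By Proposition~\ref{prop:ordered sum preserves equivalence} the type of the sum depends only on the sequence $\langle\HF_n(\mM_\alpha)\mid\alpha\in I\rangle$, i.e.\ only on $\bar Q$, so such a $\psi_\tau$ can in principle exist; the content of the theorem is its $\MLO$-definability and effectiveness.

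The Boolean cases are immediate, being Boolean combinations of the index-formulas already obtained. The heart of the argument is the set-quantifier case $\vp=\exists X_{l+1}\,\vp'$ with $\qd{\vp'}=n-1$. A witness $X_{l+1}\s\bigcup_{\alpha\in I}A_\alpha$ is exactly an independent choice, for each $\alpha$, of a subset $X_{l+1}\cap A_\alpha\s A_\alpha$; such a choice expands $\mM_\alpha$ to an $(l+1)$-chain and fixes its $(n-1,l+1)$-type $\sigma_\alpha$. Hence $\sum_{\alpha\in I}\mM_\alpha\models\vp$ iff there is a coloring $\alpha\mapsto\sigma_\alpha$ of $I$ by $(n-1,l+1)$-types such that (i) each $\sigma_\alpha$ is \emph{realizable} as an expansion of $\mM_\alpha$, and (ii) the $(l+1)$-indexed sum of the $\sigma_\alpha$ satisfies $\vp'$. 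Condition (ii) is $\MLO$-definable over $I$ colored by the $\sigma_\alpha$ by the induction hypothesis applied to $\vp'$ at arity $l+1$. Condition (i) is local and readable off $\bar Q$: since ``$\exists X_{l+1}$ of type $\sigma$'' is expressed by a formula of quantifier depth $(n-1)+1=n$, the set of realizable $(n-1,l+1)$-expansion types of $\mM_\alpha$ is determined by $\HF_n(\mM_\alpha)$, hence by which $Q_j$ contains $\alpha$. Therefore I can write $\psi_\tau$ over the index by existentially quantifying (a single second-order quantifier on $\mI$) a new coloring by $(n-1,l+1)$-types, conjoining the per-point realizability constraint (i) as a first-order condition on the old colors and the induction-hypothesis formula for (ii); this all stays within $\MLO$. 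The first-order quantifier case $\exists t\,\vp'$ is analogous, the witness $t$ now lying in a single component, so the re-coloring differs from $\bar Q$ only at one index point selected by a first-order quantifier over $I$.

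The step I expect to be the main obstacle is precisely this set-quantifier case: setting up the realizability bookkeeping correctly and verifying that ``there exists a global $X_{l+1}$'' is faithfully captured by ``there exists a re-coloring of the index by refined types meeting a local constraint.'' The well-definedness of the sum of types (Definition~\ref{dfn:sum of n types} together with Proposition~\ref{prop:ordered sum preserves equivalence}) is exactly what makes this re-coloring sound, and the fact that realizability is defined by depth-$n$ formulas is what makes it depend only on the original coloring $\bar Q$. Effectiveness of $\psi$ then follows because each ingredient---the Hintikka disjunction equivalent to $\vp$, the finite algebra of $(n,l)$- and $(n-1,l+1)$-types, and the realizability relation between them---is computable via the algorithm of the Hintikka Lemma. \pfbox
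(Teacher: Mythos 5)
First, a point of comparison: the paper does not prove this theorem at all --- it is quoted from Shelah \cite{Sh75} (``The following fundamental result of Shelah can be found in...''), so there is no in-paper argument to measure your proposal against; it must stand on its own. Its skeleton is indeed the standard Feferman--Vaught/Shelah argument, and your treatment of the set-quantifier case is correct and correctly identified as the heart of the matter: the equivalence between ``there exists a global $X_{l+1}$'' and ``there exists a re-coloring of $I$ by $(n-1,l+1)$-types meeting the local realizability constraint and the index formula for $\vp'$'' is sound in both directions (Proposition \ref{prop:ordered sum preserves equivalence} plus a componentwise choice of witnesses), the depth count is exactly right ($\exists X_{l+1}\sigma$ has depth $n$, so realizability is determined by $\bar{Q}$), and computability of the realizability relation does follow from the second bullet of Lemma \ref{lemma:Hintikka formulas}.

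The genuine gap is the first-order quantifier case, which you dismiss as ``analogous.'' As formulated, your induction statement ranges only over formulas whose free variables are all \emph{set} variables; but the subformula $\vp'$ of $\exists t\,\vp'$ has the free first-order variable $t$, so ``the induction hypothesis applied to $\vp'$'' does not exist. Worse, the atomic formulas $X(t)$ and $t_1<t_2$, which are the base case of any structural induction, all contain first-order variables, so your induction has no base case at all. The obvious repair inside your framework --- encode $t$ as a singleton set variable --- breaks precisely the bookkeeping your argument relies on: rewriting atoms (e.g., $X_i(t)$ as $\exists u(X_{l+1}(u)\wedge X_i(u))$) and adding the singleton condition inflates quantifier depth, so the inner formula no longer has depth $n-1$, the realizability formulas acquire depth exceeding $n$, and they are then no longer determined by the depth-$n$ coloring $\bar{Q}$ that the theorem statement fixes. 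The standard fix is to strengthen the induction statement to formulas with free first-order variables: introduce characteristic formulas of structures with distinguished elements (types $\Char^{n}_{l,r}$ with $r$ point parameters), let the index formula carry one free first-order variable per point parameter, naming the component containing its value, and color components by these expanded types. Under that accounting a first-order quantifier costs exactly one unit of depth, realizability of an expanded type ($\exists t\,\sigma$, depth $n$) is again read off $\bar{Q}$, and your re-coloring argument goes through verbatim. Until this strengthening is made, your proof covers only formulas with no first-order quantifiers, which in the paper's syntax is essentially no formulas.
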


We are usually interested in cases $(2)$ and $(3)$ of the
Definition \ref{dfn:sum of n types}. The following Theorems are
important consequences of the Composition Theorem:
\begin{thm}[Addition Theorem]\label{lemma:addition of types
computable} The function which maps the  pairs of characteristic formulas  to
their sum is a recursive function. Formally, the function  $\la
n,l\in \nat. \la \tau_0, \tau_1\in \Char^n_l.\tau_0 + \tau_1$ is
recursive.
\end{thm}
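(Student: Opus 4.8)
The plan is to read off $\tau_0 + \tau_1$ from a single application of the Composition Theorem (Theorem~\ref{th:comp}) to the two-element index chain, where the resulting evaluation is decidable simply because that chain is finite and is completely determined by the pair $(\tau_0,\tau_1)$. Recall first that, by Definition~\ref{dfn:sum of n types}(3), $\tau_0+\tau_1$ is the $n$-type $\HF_n(\mM_0+\mM_1)$ of the ordered sum of \emph{any} two $l$-chains $\mM_0,\mM_1$ with $\HF_n(\mM_0)=\tau_0$ and $\HF_n(\mM_1)=\tau_1$, this being well-defined (independent of the representatives) by Proposition~\ref{prop:ordered sum preserves equivalence}. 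Hence it suffices, for each candidate $\tau\in\Char^n_l$, to decide whether $\mM_0+\mM_1\models\tau$: exactly one $\tau$ passes, namely $\HF_n(\mM_0+\mM_1)=\tau_0+\tau_1$.

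So fix the computable enumeration $\Char^n_l=\{\sigma_1,\dots,\sigma_m\}$ furnished by the Hintikka Lemma and let $\mI:=(\{0,1\},<)$ with $0<1$ index the two summands. Taking each characteristic formula to have quantifier depth exactly $n$ (pad with dummy quantifiers over a trivially true matrix if necessary), I apply Theorem~\ref{th:comp} to a given $\tau\in\Char^n_l$ to obtain a formula $\psi_\tau(Y_1,\dots,Y_m)$, computable from $\tau$, with $\mM_0+\mM_1\models\tau$ iff $\mI\models\psi_\tau(Q_1,\dots,Q_m)$, where $Q_j=\{i\in\{0,1\}:\mM_i\models\sigma_j\}$. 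The key point is that each $\mM_i$ satisfies exactly one member of $\Char^n_l$, its own $n$-type; thus $0\in Q_j$ iff $\sigma_j=\tau_0$ and $1\in Q_j$ iff $\sigma_j=\tau_1$. Consequently the predicates $Q_1,\dots,Q_m$ on $\mI$ are completely determined by $(\tau_0,\tau_1)$, and the resulting structure $(\mI,Q_1,\dots,Q_m)$ is an explicitly given finite structure on which $\psi_\tau$ can be evaluated outright.

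The algorithm is therefore: from $\tau_0,\tau_1$ compute $\Char^n_l$ and the predicates $Q_1,\dots,Q_m$; for each $\tau\in\Char^n_l$ compute $\psi_\tau$ by Theorem~\ref{th:comp} and test whether the finite structure $(\mI,Q_1,\dots,Q_m)$ satisfies it; output the unique $\tau$ for which the test succeeds. Correctness is immediate from the Composition Theorem and from the fact that the sum chain realizes exactly one $n$-type. I do not anticipate any genuine obstacle: the mathematical content is entirely supplied by Theorem~\ref{th:comp}, and the only points needing care are the quantifier-depth bookkeeping---applying the Composition Theorem to each $\tau$ at its own depth $n$ so that the associated index formula speaks about the same set $\Char^n_l$ whose elements $\tau_0,\tau_1$ are---and the observation that evaluating $\psi_\tau$ over the fully specified finite structure $(\mI,\bar Q)$ is effective, which it plainly is.
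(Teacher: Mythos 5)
Your proposal is correct, and it follows exactly the route the paper intends: the paper states the Addition Theorem without proof as an ``important consequence of the Composition Theorem,'' and your argument---apply Theorem~\ref{th:comp} to each candidate $\tau\in\Char^n_l$ over the two-element index chain, note that the predicates $Q_j$ are determined by $(\tau_0,\tau_1)$ since each summand realizes exactly one $n$-type, and evaluate the resulting index formula effectively on the finite structure $(\mI,\bar Q)$---is precisely the standard instantiation of that claim. Your attention to the quantifier-depth bookkeeping is a sensible touch; there is no gap.
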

\begin{thm}[Multiplication Theorem]\label{thm:typemult is recursive in MTh(I)}
Let $\mI$ be a chain. The function $\la n,l\in \nat.\la \tau\in
\Char^{n}_l.\tau \times  \mI$ is recursive in the monadic theory of
$\mI$.
\end{thm}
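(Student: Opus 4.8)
The plan is to prove the Multiplication Theorem by reducing the computation of $\tau \times \mI$ to a question about the monadic theory of $\mI$, using the Composition Theorem as the key machinery. The statement asserts that the map $\la n,l,\tau.\,\tau\times\mI$ is recursive relative to an oracle for $\MTh(\mI)$. The central idea is that $\tau\times\mI$ is, by definition, the $n$-type $\HF_n(\mM\times\mI)$ where $\mM$ is any $l$-chain of type $\tau$; so I need to determine this $n$-type effectively, and the only nonuniform information available is $\MTh(\mI)$.

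First I would apply the Composition Theorem to a suitable formula. Fix $n,l$ and let $\{\tau_1,\dots,\tau_m\}=\Char^n_l$ be the (finite, computable) list of $(n,l)$-Hintikka formulas. For each candidate type $\tau_k$, I want to decide whether $\mM\times\mI\models\tau_k$, i.e.\ whether $\HF_n(\mM\times\mI)=\tau_k$; since exactly one $\tau_k$ holds, identifying it gives $\tau\times\mI$. Applying Theorem~\ref{th:comp} to $\tau_k$ (viewed as a formula of quantifier depth $n$) yields a computable formula $\psi_k(Y_1,\dots,Y_m)$ such that for the constant sequence $\tuple{\mM\mid\alpha\in I}$ we have
\[
\mM\times\mI\models\tau_k \iff \mI\models\psi_k(Q_1,\dots,Q_m),
\]
where $Q_j=\{\alpha\in I : \mM\models\tau_j\}$. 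Because all components are the fixed chain $\mM$ of type $\tau$, the sets $Q_j$ are trivial: $Q_j=I$ when $\tau_j=\tau$ and $Q_j=\emptyset$ otherwise. Letting $j_0$ be the index with $\tau_{j_0}=\tau$ (computable by comparing $\tau$ against the enumerated list), the condition reduces to a \emph{closed} $\MLO$ fact about $\mI$, namely $\mI\models\psi_k[I/Y_{j_0},\,\emptyset/Y_j\ (j\neq j_0)]$. Substituting a variable by $I$ or $\emptyset$ is a syntactic transformation (replace $Y_{j_0}(t)$ by $t=t$ and each other $Y_j(t)$ by $t\neq t$), producing a sentence $\chi_k$ with no free set variables.

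The algorithm then runs as follows: compute $\Char^n_l$ and locate $\tau=\tau_{j_0}$; for each $k$ compute $\psi_k$ via the Composition Theorem, form the sentence $\chi_k$ by the substitution above, and query the oracle $\MTh(\mI)$ for whether $\chi_k\in\MTh(\mI)$; output the unique $\tau_k$ for which the answer is yes. Every step except the oracle query is a primitive recursive manipulation of formulas, and the oracle query is exactly a membership test in $\MTh(\mI)$, so the whole procedure is recursive in $\MTh(\mI)$, as claimed. The main obstacle is conceptual rather than technical: one must check that the Composition Theorem applies cleanly to the infinite constant sum $\mM\times\mI$ (the theorem is stated for arbitrary index chains, so this is immediate) and that the reduction to a \emph{closed} sentence about $\mI$ is legitimate, which follows because the only index-dependent data are the constant-valued sets $Q_j$. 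Well-definedness of $\tau\times\mI$—that $\HF_n(\mM\times\mI)$ does not depend on the choice of representative $\mM$—is guaranteed by Proposition~\ref{prop:ordered sum preserves equivalence}, so the output is unambiguous. \pfbox
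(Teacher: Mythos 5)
Your proof is correct and takes essentially the approach the paper intends: the paper states this Multiplication Theorem without proof, simply as an ``important consequence of the Composition Theorem,'' and your argument --- applying Theorem~\ref{th:comp} to each candidate $\tau_k$, specializing to the constant sequence so that each $Q_j$ collapses to $I$ or $\emptyset$, reducing to closed sentences by substitution, and querying $\MTh(\mI)$ --- is precisely the derivation being alluded to, with well-definedness via Proposition~\ref{prop:ordered sum preserves equivalence} correctly noted. The only cosmetic point is that $\MLO$ as defined in the paper has no equality symbol, so in forming $\chi_k$ you should replace $Y_{j_0}(t)$ by $\neg(t<t)$ and the other $Y_j(t)$ by $t<t$ rather than by $t=t$ and $t\neq t$; this changes nothing in the argument.
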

\subsection{The monadic theory of countable ordinals}\label{sec:ord}
B\"{u}chi (see, e.g., \cite{buchi:siefkes}) has shown that there
is a \emph{finite} amount of data concerning any countable ordinal
that determines its monadic theory.
\begin{definition}[Code of an ordinal]\label{dfn:code of an ordinal}
 Let $\alpha>0$  be a countable ordinal.
Write $\alpha=\om^\om\beta+\zeta$ where $\zeta<\om^\om$ (this can be
done in a unique way).  If $\zeta\ne 0$, write
\[\zeta = \sum_{i\le n}\om^{n-i}\cdot a_{n-i}\ , 
  \hbox{where $a_i\in \nat$ for $i\le n$ and $a_n\ne 0$}
\]
(this, too, can be done in a unique way).

 Define $\code(\alpha)$ as
$$\code(\alpha):= \left\{ \begin{array}{llr}
\tuple{0,a_n,\ldots,a_0} &   \textrm{if $\gamma=0$ }\\
\tuple{1,a_n,\ldots,a_0}     & \textrm{if $\gamma\neq 0 $ and $\zeta \neq 0$}\\

\langle 1\rangle    & \textrm{otherwise, i.e., if $\gamma\neq 0 $ and $\zeta=0$}
 \end{array} \right..$$
\end{definition}\medskip

The following is implicit in \cite{buchi:siefkes}:
\begin{thm}[Code Theorem]\label{thm:uniform decidability of MTh below om_1}
There is an algorithm that, given a sentence $\vp$ and the
\emph{code} of an ordinal  $\alpha$, determines whether
$(\alpha,<)\models \vp$.
\end{thm}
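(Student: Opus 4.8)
The plan is to reduce the decision problem to the computation, from $\code(\alpha)$ alone, of the $n$-type $\HF_n(\alpha)$ of the chain $(\alpha,<)$, where $n=\qd{\vp}$. Since $\vp$ is a sentence we are dealing with $0$-structures, so the Hintikka Lemma (Lemma \ref{lemma:Hintikka formulas}) yields an algorithm producing a finite set $G\subseteq\Char^n_0$ with $\vp$ equivalent to $\bigvee G$; as the members of $\Char^n_0$ are mutually exclusive and $\HF_n(\alpha)$ is the unique one satisfied by $(\alpha,<)$, we get $(\alpha,<)\models\vp$ iff $\HF_n(\alpha)\in G$. Thus it suffices to show that $\HF_n(\alpha)$ is computable uniformly from $n$ and $\code(\alpha)$, and the decision algorithm is then: compute $n$ and $G$; compute $\HF_n(\alpha)$; answer ``yes'' iff $\HF_n(\alpha)\in G$.

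To compute $\HF_n(\alpha)$, write $\alpha=\om^\om\beta+\zeta$ with $\zeta<\om^\om$ as in Definition \ref{dfn:code of an ordinal}. Because the $n$-type is additive over ordered sums (Definition \ref{dfn:sum of n types}(3) together with Proposition \ref{prop:ordered sum preserves equivalence}), $\HF_n(\alpha)=\HF_n(\om^\om\beta)+\HF_n(\zeta)$, so I treat the two summands separately. The ``finite part'' $\zeta=\sum_{i\le k}\om^{k-i}a_{k-i}$ is completely determined by the coefficients $a_k,\dots,a_0$ recorded in $\code(\alpha)$: one computes $\HF_n(1)$ directly, then $\HF_n(\om^{j})=\HF_n(\om^{j-1})\times\om$ by the Multiplication Theorem (Theorem \ref{thm:typemult is recursive in MTh(I)}), which is effective here since $\MTh(\om)$ is decidable, and finally assembles $\HF_n(\zeta)$ from the $\HF_n(\om^j)$ by finitely many applications of the Addition Theorem (Theorem \ref{lemma:addition of types computable}). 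In the same way one computes $\sigma:=\HF_n(\om^\om)$: the sequence $\HF_n(\om^j)$ takes only finitely many values, and since $\HF_n(\om^{j+1})$ is a function of $\HF_n(\om^j)$ it is eventually periodic; as $\om^\om=\sum_{j<\om}\om^j$, the type $\sigma$ is obtained by adding up the finite pre-period and then applying the $\times\om$ operation to (the sum of) the periodic block.

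The heart of the matter, and the step I expect to be the main obstacle, is the ``high part'': I must compute $\HF_n(\om^\om\beta)$ knowing only whether $\beta=0$ or $\beta\ge 1$, since the first coordinate of $\code(\alpha)$ is exactly this one bit and nothing more about $\beta$ is recorded. The claim to prove is the Key Lemma: $\HF_n(\om^\om\beta)=\sigma$ for every $\beta\ge 1$ (if $\beta=0$ then $\om^\om\beta=0$ and $\HF_n(\alpha)=\HF_n(\zeta)$). I plan to prove this from two saturation facts about $\sigma$: (i) $\sigma$ is idempotent, $\sigma+\sigma=\sigma$; and (ii) $\HF_n(\om^{\om+\delta})=\sigma$ for every ordinal $\delta\ge 0$. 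Granting these, I take the Cantor normal form $\beta=\om^{\delta_1}c_1+\dots+\om^{\delta_r}c_r$ and use $\om^\om\beta=\om^{\om+\delta_1}c_1+\dots+\om^{\om+\delta_r}c_r$; by (ii) each factor $\om^{\om+\delta_i}$ has type $\sigma$, and by (i) the finitely many copies collapse, giving $\HF_n(\om^\om\beta)=\sigma$. Crucially, the Cantor normal form always has finitely many terms with finite coefficients, even when $\beta$ is enormous, so this argument covers all countable $\beta$ uniformly and sidesteps any transfinite recursion through limit multipliers.

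It remains to justify (i) and (ii), which is where the special geometry of $\om^\om$ is used and which is the genuinely ordinal-theoretic content ``implicit in \cite{buchi:siefkes}''. Both should follow from a single back-and-forth (Ehrenfeucht--Fra\"{\i}ss\'{e}) or composition argument whose point is that, because $\om$ is a \emph{limit}, the chain $\om^\om$ already contains initial segments $\om^{j}$ (for arbitrarily large finite $j$) that are $(n-1)$-equivalent to any ``tall'' initial segment occurring in $\om^\om\beta$; hence Duplicator can always answer a Spoiler move at a ``seam'' of $\om^\om\beta$ by a move inside a single copy of $\om^\om$, and symmetrically. I expect the careful bookkeeping of these matchings, and in particular the verification that idempotency holds for $\sigma=\HF_n(\om^\om)$ although it \emph{fails} for $\HF_n(\om^m)$ with $m$ finite, to be the most delicate part; once (i) and (ii) are in hand, every step above is effective and the Code Theorem follows.
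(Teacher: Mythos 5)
Your overall skeleton is correct, and it is a genuinely different route from the paper's: the paper gives no proof of this theorem at all (it is stated as being ``implicit in'' \cite{buchi:siefkes}, where the underlying argument is automata-theoretic), whereas you derive it from the composition method. Your first two stages are right and fully effective: reducing the decision problem to computing $\HF_n(\alpha)$ from $\code(\alpha)$ via Lemma \ref{lemma:Hintikka formulas}; computing $\HF_n(\zeta)$ for the tail $\zeta<\om^\om$ by Theorems \ref{lemma:addition of types computable} and \ref{thm:typemult is recursive in MTh(I)} (the latter applied to $\mI=\om$, whose monadic theory is decidable); and computing $\sigma=\HF_n(\om^\om)$ from the eventual periodicity of $j\mapsto\HF_n(\om^j)$. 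Your reduction of the Key Lemma to (i) and (ii) via the Cantor normal form of $\beta$ and left distributivity is also valid.

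The gap is in the last step, which you yourself flag as the heart of the matter: (i) and (ii) carry the entire content of the theorem, and the back-and-forth sketch offered for them would fail. First, in $\MLO$ a Spoiler move is a \emph{subset} of the whole chain, scattered over all summands of $\om^\om\beta$; it cannot be answered ``inside a single copy of $\om^\om$'', and making the seam-matching precise requires exactly the composition machinery, not a pointwise EF argument. Second, (ii) quantifies over \emph{all} countable $\delta$ (e.g.\ $\delta=\epsilon_0$); no finitary matching argument reaches such $\delta$, so the transfinite recursion you claim to ``sidestep'' by Cantor normal form is in fact hidden, unproved, inside (ii). Third, your guiding intuition is false: idempotency does \emph{not} fail for all finite powers of $\om$ --- there is a finite, computable $p$ with $\HF_n(\om^p)=\HF_n(\om^{p+1})$, and for this $p$ the type $\HF_n(\om^p)$ is idempotent and in fact equals $\sigma$. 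The belief that $\sigma$ is distinguishable from every $\HF_n(\om^m)$ is precisely what makes your step (6) look like it needs a delicate new argument.

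That last observation is also how to close the gap with the paper's own tools, mirroring the proof of Lemma \ref{th:dec-mult} (stated there for game types, and modelled on Theorem 3.5(B) of \cite{Sh75}) but run on ordinary types. Write $t(\alpha)$ for $\HF_n(\alpha)$. By pigeonhole there are $q<r$ with $t(\om^q)=t(\om^r)$, and the computation $\om^{q+2}=\sum_{i<\om}(\om^{q+1}+\om^q)$ from that proof yields a computable $p$ with $t(\om^p)=t(\om^{p+1})$. Idempotence is then immediate from an ordinal identity: $t(\om^p)+t(\om^p)=t(\om^p)+t(\om^{p+1})=t(\om^p+\om^{p+1})=t(\om^{p+1})=t(\om^p)$. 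Transfinite induction on countable $\gamma\ge 1$ (successor step by idempotence; limit step by writing $\om^p\gamma=\sum_{i<\om}\om^p(\gamma_{i+1}-\gamma_i)$, whose terms all have type $t(\om^p)$, and using $t(\om^p)\times\om=t(\om^{p+1})=t(\om^p)$) gives $t(\om^p\gamma)=t(\om^p)$ for every countable $\gamma\ge1$. Since $p+\om=\om$, every $\om^\om\beta$ with countable $\beta\ge1$ equals $\om^p\cdot(\om^\om\beta)$, so $t(\om^\om\beta)=t(\om^p)=\sigma$. This proves your Key Lemma, (i) and (ii) in one stroke, with every step effective, and the rest of your algorithm then goes through as written.
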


\section{Game types}\label{sec:games}
Recall that $\Char^n_2$ is the set of characteristic formulas of
the quantifier depth $n$ with free  variables among $\{X_1,X_2\}$.

For $G\subseteq \Char^n_2$ we denote by $\mG_G^\alpha$, the
{McNaughton game} $\mG_\vp^\alpha$, where $\vp$ is the disjunction
of  all formulas in $G$.

By Lemma \ref{lemma:Hintikka formulas}, for every formula
$\vp(X_1,X_2)$ of  quantifier depth $n$ there is $G \subseteq
\Char^n_2$ such that $\vp$ is  equivalent to the disjunction  of all
formulas from  $G$. Moreover, $G$ is computable from $\vp$. Hence,
in order to show that every {McNaughton game} of length $\alpha$ is
determinate, it is enough to show that for every $n$ and $G\subseteq
\Char^n_2$, the game $\mG_G^\alpha$ is determinate. Moreover, if it
is decidable who wins the games of the form $\mG_G^\alpha$,  then it
is decidable who wins $\mG_\vp^\alpha$ games.

\begin{definition}[Game Types]
Let $n\in \nat$.
\begin{enumerate}[\hbox to6 pt{\hfill}]
\item\noindent{\hskip-0 pt\bf Game type of ordinal:}\
For an ordinal $\alpha$:  $\gt_n(\alpha) $ is defined as
\[\{G\subseteq \Char^n_2\mid \mbox{ Player I wins } \mG_G^\alpha\}\ .\]

\item\noindent{\hskip-0 pt\bf Formal game-type:}\ A
formal  $n$-game-type is an element\footnote{recall that $\Ps{A}$ stands for the set of subsets of $A$}
of $\Ps{\Ps{ \Char^n_2}} $.
\end{enumerate}
\end{definition}

Let $\alpha$ be an ordinal, $ C$ be a formal $n$-game-type  and
$G\subseteq \Char^n_2$.
 We consider   the following $\alpha$-game
Game$_{\alpha}(C,G)$.
\begin{enumerate}[\hbox to6 pt{\hfill}]
\item\noindent{\hskip-0 pt\bf Game$_{\alpha}(C,G)$:}\ The game has $\alpha$ rounds and it is defined as follows:$~$

\begin{enumerate}[\hbox to6 pt{\hfill}]
\item\noindent{\hskip-0 pt\bf Round i:}\
\begin{enumerate}[$\bullet$]
\item
Player I chooses  $G_i\in C$.

\item
Player II chooses $\tau_i\in G_i$.
\end{enumerate}\medskip

\item\noindent{\hskip-0 pt\bf Winning conditions:}\
Let $\tau_i$  $(i\in \alpha)$ be the sequence that appears in the
play. I wins the play if $\Sigma_{i\in \alpha} \tau_i\in G$.
\end{enumerate}
\end{enumerate}

\noindent The following lemma is immediate:
\begin{lem} \label{lem:up-clsd} If $C_1\subseteq C_2$, $G_1\subseteq G_2$  and  I wins
Game$_{\alpha}(C_1,G_1)$, then I wins Game$_{\alpha}(C_2,G_2)$.
\end{lem}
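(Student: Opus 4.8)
The plan is to prove the monotonicity claim directly by showing that a winning strategy for Player~I in the ``smaller'' game transfers verbatim into the ``larger'' game. The statement is purely combinatorial about the game $\Game_{\alpha}(C,G)$, so I would argue at the level of strategies and plays, not through the composition machinery.

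First I would fix a winning strategy $\sigma$ for Player~I in $\Game_{\alpha}(C_1,G_1)$. Recall that in $\Game_{\alpha}(C,G)$, at each round~$i$ Player~I selects some $G_i\in C$ and then Player~II selects some $\tau_i\in G_i$; the relevant positions for Player~I's strategy are therefore the sequences $\langle\tau_j\mid j<i\rangle$ of Player~II's past choices. The key observation is that since $C_1\subseteq C_2$, every move $G_i\in C_1$ that $\sigma$ might prescribe is \emph{also} a legal move in $\Game_{\alpha}(C_2,G_2)$. Thus $\sigma$, read as a function from partial plays to moves, is literally a strategy in $\Game_{\alpha}(C_2,G_2)$ as well; I simply have Player~I play $\sigma$ in the larger game.

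Next I would verify that this copied strategy is winning. The subtle point is that in $\Game_{\alpha}(C_2,G_2)$ Player~II has \emph{more} possible responses than in $\Game_{\alpha}(C_1,G_1)$, because at a round where $\sigma$ chooses some $G_i\in C_1$, Player~II may only pick $\tau_i\in G_i$ --- and here $G_i$ is the \emph{same} set in both games, so Player~II's legal responses at that round are identical. Consequently every play consistent with $\sigma$ in the larger game is already a play consistent with $\sigma$ in $\Game_{\alpha}(C_1,G_1)$. For such a play, the produced sequence $\langle\tau_i\mid i\in\alpha\rangle$ satisfies $\Sigma_{i\in\alpha}\tau_i\in G_1$, since $\sigma$ wins the smaller game. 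Because $G_1\subseteq G_2$, we conclude $\Sigma_{i\in\alpha}\tau_i\in G_2$, so the same play is winning for Player~I under the winning condition of $\Game_{\alpha}(C_2,G_2)$.

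The only mild obstacle is notational bookkeeping: one must be careful that the hypothesis $C_1\subseteq C_2$ is what guarantees Player~I's moves remain legal, while $G_1\subseteq G_2$ is what upgrades the winning condition, and that the two inclusions act at different points (move-legality versus the final sum-membership test). Once this is disentangled, the argument is immediate --- indeed so immediate that the paper rightly labels it as such. I would present it in one short paragraph: fix Player~I's winning strategy in $\Game_{\alpha}(C_1,G_1)$, observe it is also a legal strategy in $\Game_{\alpha}(C_2,G_2)$ by $C_1\subseteq C_2$, note that any resulting sum lies in $G_1\subseteq G_2$, and conclude that Player~I wins $\Game_{\alpha}(C_2,G_2)$. \pfbox
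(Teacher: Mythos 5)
Your proof is correct and is exactly the argument the paper has in mind: the paper offers no written proof, labeling the lemma ``immediate,'' and what you spell out --- Player~I's strategy in Game$_{\alpha}(C_1,G_1)$ remains legal by $C_1\subseteq C_2$, the consistent plays coincide, and the resulting sum lands in $G_1\subseteq G_2$ --- is precisely that immediate argument made explicit.
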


As a consequence of Theorem \ref{th:comp} and Theorem
\ref{th:bl-game} we obtain the following lemma which will play a
prominent role in our proofs:
\begin{lem}\label{lem:comp-cond}\hfill
\begin{enumerate}[\em(1)]
\item The game Game$_{\omega}(C,G)$ is determinate.
\item It is decidable which of the players  wins Game$_{\omega}(C,G)$.
\end{enumerate}
\end{lem}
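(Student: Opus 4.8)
The plan is to realise Game$_\om(C,G)$ as a McNaughton game $\mG_\psi^\om$ for a single $\MLO$ formula $\psi$ that is computable from $n$, $C$ and $G$, and then to read off both claims directly from the Büchi--Landweber Theorem (Theorem \ref{th:bl-game}), which already delivers determinacy and decidability for every game of the form $\mG_\psi^\om$. The whole point is thus to build $\psi$ so that Player I wins $\mG_\psi^\om$ exactly when Player I wins Game$_\om(C,G)$.

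First I would encode the finitely many possible moves by bits, using the many-variable generalisation of McNaughton games discussed after Theorem \ref{th:bl-game}. Write $\Char^n_2=\{\tau_1,\dots,\tau_m\}$. Give Player I the variables $X_1,\dots,X_m$ and Player II the variables $Y_1,\dots,Y_m$, with the intended reading that at round $i$ the move $G_i$ of Player I is the set $\{\tau_j : i\in X_j\}$ and the move $\tau_i$ of Player II is the unique $\tau_j$ with $i\in Y_j$. Since $C$ and each $G'\in C$ are finite, the legality of these moves is expressible by $\MLO$ formulas whose only quantifier is a leading $\forall t$: the formula $\mathrm{LegalI}(\bar X):=\forall t\,\bigvee_{G'\in C}\big(\bigwedge_{\tau_j\in G'}X_j(t)\we\bigwedge_{\tau_j\notin G'}\neg X_j(t)\big)$ says that every Player-I move lies in $C$, and $\mathrm{LegalII}(\bar X,\bar Y):=\forall t\,\bigvee_j\big(Y_j(t)\we X_j(t)\we\bigwedge_{k\ne j}\neg Y_k(t)\big)$ says that every Player-II move is a single type belonging to the set Player I just chose.

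The only nonelementary ingredient of the winning condition is ``$\Sigma_{i\in\om}\tau_i\in G$'', which refers to the $n$-type of an infinite ordered sum; this is exactly where the Composition Theorem enters. Let $\vp_G:=\bigvee_{\tau\in G}\tau$, a formula of quantifier depth $n$ with free variables $X_1,X_2$. Theorem \ref{th:comp} produces, computably, a formula $\psi_G(Z_1,\dots,Z_m)$ such that for every sequence of $2$-chains $\mM_i$ the sum $\sum_i\mM_i$ satisfies $\vp_G$ iff $\om\models\psi_G(Q_1,\dots,Q_m)$ with $Q_j=\{i:\mM_i\models\tau_j\}$. Taking $Q_j:=Y_j$, and recalling that $\sum_i\tau_i$ is by definition the type of such a sum, the condition $\sum_i\tau_i\in G$ becomes precisely $\psi_G(\bar Y)$. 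I then set $\psi(\bar X,\bar Y):=\mathrm{LegalI}(\bar X)\we\big(\neg\mathrm{LegalII}(\bar X,\bar Y)\vee\psi_G(\bar Y)\big)$, which is computable from $C$, $G$ and $n$.

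It remains to check that Player I wins Game$_\om(C,G)$ iff Player I wins $\mG_\psi^\om$, after which Theorem \ref{th:bl-game} yields determinacy (part~1) and decidability of the winner (part~2) simultaneously. The verification matches legal bit-plays with plays of Game$_\om(C,G)$ in the obvious way. The one point requiring care is the treatment of illegal bit-moves, i.e.\ the standard ``first player to move illegally loses'' convention: since Player I moves first in each round and can always maintain legality as long as $C\ne\emptyset$ (and if $C=\emptyset$ then Player I has no move and loses in both games), any winning strategy may be taken to play legally throughout, so the asymmetric formula $\psi$ faithfully records the winner. I expect this bookkeeping around degenerate moves---rather than the invocation of the Composition Theorem, which is the conceptual heart---to be the only mildly delicate step.
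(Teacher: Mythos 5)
Your proposal is correct and follows essentially the same route as the paper: reduce Game$_{\om}(C,G)$ to a McNaughton game whose winning condition combines legality constraints with the Composition Theorem formula $\psi_G(\bar Y)$, then invoke the B\"{u}chi--Landweber theorem for determinacy and decidability. The only difference is cosmetic---you encode Player I's move as a bit-vector over $\Char^n_2$ ranging over elements of $C$, while the paper uses one variable per element of $C$ with an ``exactly one $X_i$'' constraint, and the two treatments of illegal play differ only in plays where both players deviate, which never affects the correctness of the reduction.
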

\begin{proof} We provide a reduction from Game$_{\alpha}(C,G)$
to a  McNaughton game.

Let $\{\tau_1,\dots,\tau_m\}=\Char^n_2$. For every  $G'\subseteq
\Char^n_2$: \begin{enumerate}[$\bullet$]
\item Let $J(G'):=\{j\mid \tau_j\in G'\}$.

\item Let
$\vp_{G'}(X_1,X_2)$ be $\bigvee_{\tau \in G'}\tau$ - the
disjunction of all formulas  from $G'$.
\item Let  $\psi_{G'}(Y_1,\dots ,Y_m)$ be
constructed from $\vp_{G'}$ as in the Composition Theorem (Theorem
\ref{th:comp}).
\end{enumerate}
Let $C=\{G_1,\dots, G_k\}$. Define formula $\vp_{C,G}(X_1, \dots,
X_k, Y_1, \dots, Y_m)$ as the disjunction of
 \begin{enumerate}[(1)]
\item
For all $t$ exactly one of $X_i(t)$ ($i=1,\dots, k$) holds and
$\psi_G(Y_1,\dots Y_m)$.
\item There is $t$ such that not exactly one of $Y_j(t)$ holds.
\item There is  $t$ and $i\in\{1,\dots, k\}$ such that  $X_i(t)$ and $\neg \bigvee_{j\in
J(G_i)} Y_j(t)$.
\end{enumerate}
Consider the McNaughton game $\mG_{\vp_{C,G}}^\alpha$.  The second
disjunct forces Player II at each round to assign the value  1
exactly to one of $Y_j$, and the third disjunct forces Player II to
reply to the choice of $X_i$ of Player I by choosing $Y_j$ such that
$\tau_j\in G_i$. It is clear that Player I (respectively, Player II)
has a winning strategy in Game$_{\alpha}(C,G)$ iff Player I
(respectively, Player II) has a winning strategy in
$\mG_{\vp_{C,G}}^\alpha$.

Therefore, by the B\"{u}chi-Landweber theorem, Game$_{\om}(C,G)$
is determinate, and it  is decidable whether I wins
Game$_{\omega}(C,G)$.
\end{proof}

\section{Addition and Multiplication Lemmas for Game
Types}\label{sect:comp-gt}
 This section contains the main technical
lemmas of this paper. In particular,  the role of  game types  in
Lemma \ref{lem:sum-main} and Lemma \ref{lem:det-product} is
similar to the role of the monadic types in the  Addition and
Multiplication theorems.

We say that an ordinal $\alpha$ is  \emph{determinate}
if for every $\MLO$ formula $\vp$, one of the
player has a winning strategy in $\mG_{\vp}^\alpha$.

In the next lemma and throughout this section we often use
``$\alpha$-game with  a winning condition $G$" for the McNaughton
game $\mG_{\vp}^\alpha$, where $\vp=\vee_{\tau\in G}\tau$;
recall that this game is also denoted by $\mG_{G}^\alpha$ (see beginning
of Section \ref{sec:games}).

\begin{lem}\label{lem:sum-main}
Let $\beta$ be an ordinal, $C_\beta=\gt_n(\beta)$ and  $G$ be a formal  $n$-game-type. For every $G_i\in C_{\beta}$, let
$K(G_i,G)=\{\tau\in \Char^n_2\mid \forall \tau'\in G_i(
\tau+\tau'\in G)\}$. Let $K(C_{\beta},G) =\cup_{G_i\in C_{\beta} }
K(G_i,G)$. If $\alpha$ and $\beta$ are determinate, then:
\begin{enumerate}[\quad\bf A:]
\item If Player I wins the  $\alpha$-game with winning condition
$K(C_{\beta},G)$, then Player I wins the $\alpha+\beta$-game with
winning condition $G$.\medskip

\item If Player I cannot win the  $\alpha$-game with winning condition
$K(C_{\beta},G)$, then Player II wins the  $\alpha+\beta$-game
with winning condition $G$.
\end{enumerate}
\end{lem}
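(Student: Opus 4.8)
The plan is to decompose a play of the $\alpha+\beta$-game with winning condition $G$ into its first $\alpha$ rounds and its last $\beta$ rounds. Once the first segment is over, its $n$-type $\sigma:=\HF_n(\alpha,<,P\cap[0,\alpha),Q\cap[0,\alpha))$ is fixed and known to both players: each player, being (strongly) causal, has seen all moves on $[0,\alpha)$ by the time the $\beta$-segment begins. By the Addition Theorem (together with Proposition~\ref{prop:ordered sum preserves equivalence}), the $n$-type of the whole play is the ordered sum $\sigma+\tau'$, where $\tau'$ is the type of the $\beta$-segment. Hence, after $\sigma$ is determined, the residual contest on $[\alpha,\alpha+\beta)$ is exactly the $\beta$-game $\mG_{B}^\beta$ with winning condition $B:=\{\tau'\in\Char^n_2\mid \sigma+\tau'\in G\}$.

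The pivot linking this to $K(C_\beta,G)$ is the equivalence: $\sigma\in K(C_\beta,G)$ iff $B\in\gt_n(\beta)=C_\beta$, i.e.\ iff Player~I wins $\mG_{B}^\beta$. For the forward direction, $\sigma\in K(C_\beta,G)$ yields some $G_i\in C_\beta$ with $G_i\s B$ (by the definition of $K$); since Player~I wins $\mG_{G_i}^\beta$ and enlarging the winning condition can only help her (cf.\ Lemma~\ref{lem:up-clsd}), she also wins $\mG_{B}^\beta$, so $B\in C_\beta$. For the converse, if $B\in C_\beta$ then taking $G_i:=B$ gives $\sigma\in K(B,G)\s K(C_\beta,G)$ directly from the definition of $B$.

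\textbf{A:} Suppose Player~I wins the $\alpha$-game with condition $K(C_\beta,G)$. In the $\alpha+\beta$-game she first follows this strategy on $[0,\alpha)$, forcing $\sigma\in K(C_\beta,G)$ whatever Player~II does; by the equivalence $B\in C_\beta$, so she has a winning strategy in $\mG_{B}^\beta$, which she now follows on $[\alpha,\alpha+\beta)$ (the choice of this strategy depends only on $\sigma$, known at round $\alpha$). The resulting $\tau'$ lies in $B$, whence $\sigma+\tau'\in G$ and Player~I wins. The concatenated strategy is strongly causal because both parts are and the switch is triggered by data available at round $\alpha$. \textbf{B:} Suppose Player~I cannot win the $\alpha$-game with condition $K(C_\beta,G)$. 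Since $\alpha$ is determinate, Player~II wins it, and following her strategy on $[0,\alpha)$ forces $\sigma\notin K(C_\beta,G)$; by the equivalence $B\notin C_\beta$, so Player~I does not win $\mG_B^\beta$, and since $\beta$ is determinate Player~II wins $\mG_B^\beta$. Following that strategy on the $\beta$-segment forces $\tau'\notin B$, i.e.\ $\sigma+\tau'\notin G$, so Player~II wins the $\alpha+\beta$-game; the concatenated strategy is causal for the same reason as above.

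The main obstacle is not a single hard step but the bookkeeping that makes the decomposition rigorous: one must verify that the $n$-type of the concatenated play really is $\sigma+\tau'$ (this is precisely the content of the Addition Theorem and Proposition~\ref{prop:ordered sum preserves equivalence}), and that gluing a strategy for the first $\alpha$ rounds to a $\sigma$-dependent strategy for the last $\beta$ rounds yields a single operator of the required (strong) causality. Determinacy of $\alpha$ and of $\beta$ enters only in Part~B, to pass from ``Player~I does not win'' to ``Player~II wins''.
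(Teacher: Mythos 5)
Your proof is correct and follows essentially the same route as the paper's: play the given strategy for the $K(C_\beta,G)$-condition on the first $\alpha$ rounds, then switch to a $\beta$-strategy for the residual condition $B=\{\tau'\mid \sigma+\tau'\in G\}$ (the paper calls it $G_\tau$), with determinacy of $\alpha$ and $\beta$ invoked only in Part B. The only difference is cosmetic: you isolate the equivalence $\sigma\in K(C_\beta,G)\iff B\in C_\beta$ as a stated pivot and play for $B$ itself in Part A, whereas the paper plays for some $G_i\subseteq B$ directly and argues the contrapositive of the same equivalence inside Part B.
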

\begin{proof}
(A) Assume that  Player I has a winning strategy $F_{\alpha} $ for
the  $\alpha$-game with winning condition $K(C_{\beta},G)$. He
plays the first $\alpha$ rounds   according to the strategy
$F_{\alpha} $. Assume that after $\alpha$ rounds the play
satisfies $\tau\in K(C_{\beta},G)$. Then there is $G_i \in
C_{\beta}$ such that $\tau \in K(G_i,G)$.  Player I plays the next
$\beta$ rounds according to a winning strategy $F_{\beta,G_i}$ for
the $\beta $ game with winning condition $G_i$. (Such a winning
strategy exists, by  definition of $C_{\beta}$.) Hence during
these $\beta$ rounds $\tau_i\in G_i$ is  realized. The type of the
play will be $\tau+\tau_i\in G$. Hence, Player I wins the
$\alpha+\beta$-game with winning condition $G$.

(B) Now assume that Player I cannot win the  $\alpha$-game with
winning condition $K(C_{\beta},G)$. Then, by determinacy of
$\alpha$ games, Player II has a winning strategy for this game.
Let him play the first $\alpha$ rounds according to this winning
strategy. Let $\tau\not\in K(C_{\beta},G)$ be the type reached
after $\alpha$ rounds.

Let $G_{\tau}=\{\tau'~:~ \tau+\tau' \in G\}$. We claim  that
Player II  wins the $\beta$-game for $G_{\tau}$. Indeed, if he
cannot win  this game, then, by determinacy of $\beta$ games,
Player I wins it, i.e., $G_{\tau}\in C_{\beta}$, and hence,
$\tau\in K(C_{\beta},G)$. Contradiction

The next $\beta$ rounds Player II  can play according to his
winning strategy for $G_{\tau}$. This will ensure that the type
$\tau'$  of this $\beta$-play is not in $G_{\tau}$. Hence, the
type of the entire play is $\tau+\tau'\not\in G$.

Hence, Player II wins the $\alpha+\beta$ game for $G$.

This completes the proof of (A) and (B).
\end{proof}
As an immediate consequence, we obtain the following Theorem:
\begin{thm} \label{lem:computabl-sum} If $\alpha$ and $\beta$ are
determinate, then \begin{enumerate}[\em(1)] \item $\alpha+\beta$ is
determinate.
\item  $\gt_n(\alpha+\beta)$ is computable from
$\gt_n(\alpha)$ and $\gt_n(\beta)$.
\end{enumerate}
\end{thm}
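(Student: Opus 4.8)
The plan is to derive both parts directly from Lemma~\ref{lem:sum-main}, exploiting the fact that its two clauses A and B together cover all cases. First I would recall (from the discussion at the beginning of Section~\ref{sec:games}) that it suffices to treat games of the form $\mG_G^\gamma$ with $G\s\Char^n_2$: every formula of quantifier depth $n$ is equivalent to the disjunction of the members of a computable $G\s\Char^n_2$, so determinacy of all the games $\mG_G^{\alpha+\beta}$ yields determinacy of $\alpha+\beta$, while knowing the winner of each $\mG_G^{\alpha+\beta}$ is exactly knowing $\gt_n(\alpha+\beta)$. Fix $n$ and an arbitrary $G\s\Char^n_2$, set $C_\beta=\gt_n(\beta)$, and form the winning condition $K(C_\beta,G)$ as in Lemma~\ref{lem:sum-main}.

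For part (1), I would observe that the dichotomy ``Player~I wins the $\alpha$-game with winning condition $K(C_\beta,G)$'' versus ``Player~I cannot win it'' is a logical tautology. In the first case clause~A of Lemma~\ref{lem:sum-main} gives Player~I a win in $\mG_G^{\alpha+\beta}$; in the second case clause~B gives Player~II a win. Since $\alpha$ and $\beta$ are assumed determinate, the hypotheses of the lemma are met, so one of the two players wins $\mG_G^{\alpha+\beta}$. As $G$ was arbitrary, $\alpha+\beta$ is determinate.

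For part (2), the same two clauses sharpen this into an exact characterization of the winner. Reading A and B contrapositively (and using that a game cannot be won by both players), Player~I wins $\mG_G^{\alpha+\beta}$ \emph{if and only if} Player~I wins the $\alpha$-game with winning condition $K(C_\beta,G)$; unwinding the definition of $\gt_n$, this reads
\[
G\in\gt_n(\alpha+\beta)\quad\Longleftrightarrow\quad K(C_\beta,G)\in\gt_n(\alpha).
\]
Thus, given $\gt_n(\alpha)$ and $\gt_n(\beta)=C_\beta$, I would compute $\gt_n(\alpha+\beta)$ by ranging over the finitely many $G\s\Char^n_2$ (finiteness by the Hintikka Lemma~\ref{lemma:Hintikka formulas}), computing $K(C_\beta,G)$ for each, and testing membership of the result in the given set $\gt_n(\alpha)$.

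The only point requiring care is the effectivity of forming $K(C_\beta,G)=\bigcup_{G_i\in C_\beta}K(G_i,G)$. Each $K(G_i,G)=\{\tau\mid \forall\tau'\in G_i\ (\tau+\tau'\in G)\}$ is defined through the ordered sum $\tau+\tau'$ of characteristic formulas, which is recursive by the Addition Theorem~\ref{lemma:addition of types computable}; since $\Char^n_2$, and hence every $G_i$ and $G$, is finite, the quantifier over $\tau'$ and the union over $G_i$ are finite computations. I do not expect a genuine obstacle here: the substantive work---building a strategy for $\alpha+\beta$ by concatenating a strategy on the first $\alpha$ rounds with one of the $\beta$-round strategies witnessing membership in $C_\beta$---has already been carried out in Lemma~\ref{lem:sum-main}. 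The entire content of this theorem is the observation that clauses A and B together are exhaustive, so that they simultaneously deliver determinacy and a computable recipe for the winner.
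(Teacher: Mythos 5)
Your proposal is correct and follows essentially the same route as the paper: both parts are read off directly from Lemma~\ref{lem:sum-main}, with part (2) resting on the equivalence $G\in\gt_n(\alpha+\beta)$ iff $K(C_\beta,G)\in\gt_n(\alpha)$ and a finite search over the subsets $G\s\Char^n_2$. Your explicit appeal to the Addition Theorem~\ref{lemma:addition of types computable} to justify that $K(C_\beta,G)$ is computable is a detail the paper leaves implicit, but it is the same argument, just spelled out more carefully.
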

\proof\hfill
\begin{enumerate}[(1)]
\item is an immediate consequence of  Lemma \ref{lem:sum-main}.
\item  Assume  $C_\alpha=\gt_n(\alpha)$ and $C_\beta=\gt_n(\beta)$.
To check whether $G\in\gt_n(\alpha+\beta)$, first compute
$K(C_\beta,G)$ and then check whether $K(C_\beta,G)\in C_\alpha$.
Lemma \ref{lem:sum-main} implies that  $G\in \gt_n(\alpha+\beta)$
iff $K(C_\beta,G)\in C_\alpha$.\qed
\end{enumerate}

\noindent The  above theorem  allows us to define the addition of $n$-game
types for determinate ordinals. We will use ``+"  for the addition
of game types.

 Recall that an ordinal $\alpha$ is definable if
there is a sentence $\theta_{\alpha}$ such that for every chain
$M=\tuple{A,<}$: $M\models\theta_\alpha \mbox{ iff $M$ is
isomorphic to }\alpha.$

From the proof of Lemma \ref{lem:sum-main}  we deduce  the
following  variant of Theorem \ref{lem:computabl-sum}  for
\emph{definable strategies}:
\begin{lem} \label{lem:def-sum} Assume that $\alpha$ is definable and
in every  game of length $\alpha$ or $\beta$ one of the players has
a \emph{definable} winning strategy. Then in every $\alpha+\beta$
game one of the players has a  \emph{definable} winning strategy.
Moreover, if there are algorithms which for every $\vp$ compute a
definable winning strategy for $\mG^\alpha_\vp$ and $\mG^\beta_\vp$,
then there is an algorithm that computes a definable winning strategy
for $\mG^{\alpha+\beta}_\vp$.
\end{lem}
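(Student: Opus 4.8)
The plan is to re-run the proof of Lemma~\ref{lem:sum-main} while \emph{carrying the witnessing strategies along} as $\MLO$-definable operators, and then to glue the two component strategies into a single definable strategy on $\alpha+\beta$. The one genuinely new ingredient -- and the reason the hypothesis asks only for definability of $\alpha$, not of $\beta$ -- is that from \emph{inside} the chain $\alpha+\beta$ we must be able to recognize where the $\alpha$-part ends and the $\beta$-part begins. Since $\alpha$ is definable by a sentence $\theta_\alpha$, the relativization $\theta_\alpha^{[0,t)}$ of $\theta_\alpha$ to the initial segment below $t$ holds exactly when $[0,t)\cong\alpha$; as an ordinal is isomorphic to the initial segment $[0,t)$ only when $t$ equals it, the element $c:=\alpha$ (the first point of the $\beta$-part, present whenever $\beta>0$; the case $\beta=0$ being trivial) is the \emph{unique}, hence $\MLO$-definable, such $t$ in $\alpha+\beta$. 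Consequently the predicates ``$t$ lies in the $\alpha$-part'' ($t<c$) and ``$t$ lies in the $\beta$-part'' ($t\ge c$) are definable, and we may relativize any formula to the initial segment $\{x:x<c\}\cong\alpha$ or to the final segment $\{x:x\ge c\}\cong\beta$ while staying within $\MLO$.

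For part \textbf{A}, suppose Player~I wins the $\alpha$-game with winning condition $K(C_\beta,G)$ via a definable strongly causal operator $F_\alpha$, and for each $\tau\in K(C_\beta,G)$ fix $G_{i(\tau)}\in C_\beta$ with $\tau\in K(G_{i(\tau)},G)$ together with a definable winning operator $F_{\beta,G_{i(\tau)}}$ for the $\beta$-game with condition $G_{i(\tau)}$ (these exist by hypothesis). I define the combined operator by a single formula stating: the restriction of Player~I's set to $\{x:x<c\}$ is the value of $F_\alpha$ on the restriction of Player~II's set there (using the relativization of the formula defining $F_\alpha$); and, reading off by a relativized Hintikka formula the unique $n$-type $\tau=\HF_n$ of the completed $\alpha$-play on $\{x:x<c\}$, the restriction of Player~I's set to $\{x:x\ge c\}$ is the value of $F_{\beta,G_{i(\tau)}}$ on the restriction of Player~II's set there. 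Because $\Char^n_2$ is finite this is a finite case-split, hence a genuine $\MLO$ formula; since $F_\alpha$ is winning we always have $\tau\in K(C_\beta,G)$, so the case-split is exhaustive, and exactly as in Lemma~\ref{lem:sum-main} the whole play realizes the type $\tau+\tau_i\in G$ with $\tau_i\in G_{i(\tau)}$.

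Part \textbf{B} is symmetric: if Player~I cannot win the $\alpha$-game for $K(C_\beta,G)$, then Player~II does, and by hypothesis via a definable causal operator $H_\alpha$; the resulting $\alpha$-play has type $\tau\notin K(C_\beta,G)$, so $G_\tau=\{\tau':\tau+\tau'\in G\}\notin C_\beta$ and, again by hypothesis, Player~II wins the $\beta$-game for $G_\tau$ by a definable operator $H_{\beta,G_\tau}$. The same relativize-and-case-split recipe yields one formula for Player~II's combined operator. The point that must be checked -- and it is the crux -- is that the glued operator is \emph{(strongly) causal}: the selection of the $\beta$-part strategy depends on $\tau$, but $\tau$ is determined entirely by both players' moves on $\{x:x<c\}$, every $\beta$-position strictly follows all of that segment, and the component operators are themselves (strongly) causal; so the guard $\tau$ costs nothing causally and the glued operator is a legal winning strategy.

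Finally, for the ``Moreover'' clause every step is effective. Given $\vp$, compute $n=\qd{\vp}$ and, by the Hintikka Lemma~\ref{lemma:Hintikka formulas}, the set $G$ with $\vp\equiv\bigvee_{\tau\in G}\tau$; the winner of each auxiliary game $\mG_{G'}^\alpha,\mG_{G'}^\beta$ is read off from the supplied strategy-computing algorithms, so $C_\alpha=\gt_n(\alpha)$ and $C_\beta=\gt_n(\beta)$ are computable, whence $K(C_\beta,G)$ is computable and Theorem~\ref{lem:computabl-sum} tells us which of \textbf{A}, \textbf{B} holds (namely \textbf{A} iff $K(C_\beta,G)\in C_\alpha$); invoking the given algorithms on the finitely many relevant winning conditions produces the component formulas, which are then assembled into the relativized case-split formula as above. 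The main obstacle is not this bookkeeping but the definability assembly of the second and third paragraphs -- arranging that the guard predicates (the cut $c$ and the type $\tau$) are $\MLO$-definable and that the glued operator stays strongly causal for Player~I and causal for Player~II -- and both hinge precisely on definability of $\alpha$ together with the fact that the $\alpha$-part wholly precedes the $\beta$-part.
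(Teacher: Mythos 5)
Your proposal is correct and takes essentially the same route as the paper: the paper also re-runs Lemma~\ref{lem:sum-main} with the witnessing strategies carried along as definable operators, using the definability of $\alpha$ to locate the cut and a finite case split over Hintikka types to select the $\beta$-part strategy, packaging your cut point $c$ and case analysis into the single formula $\exists t \big( \theta_{\alpha}^{<t}\wedge \psi^{<t}\wedge \bigwedge_{\tau\in K(C_\beta,G)} (\tau^{<t}\ra \chi_{i_\tau}^{\geq t})\big)$. Your explicit verification of (strong) causality of the glued operator and of the effectiveness of the assembly only spells out what the paper leaves implicit.
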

\begin{proof} Let $G\subseteq \Char^n_2$. We will construct a definable  winning strategy
for $\mG^{\alpha+\beta}_G$. We use here the notations of Lemma
\ref{lem:sum-main}. As shown there,  if Player I has a
winning strategy in $\mG^{\alpha+\beta}_G$, then the following
strategy $F$ is winning for Player I:
\begin{enumerate}[(1)]
\item $F$ plays the first  $\alpha$ rounds according to a winning strategy  in
$\mG^{\alpha}_{K(C_\beta,G)}$.
\item If after $\alpha$ rounds the play satisfies  $\tau\in K(G_i,G)$, then the last $\beta$ steps $F$ plays according to a
winning strategy in $\mG^{\beta}_{G_i}$.
\end{enumerate}
For $\tau\in K(C_\beta,G)$, we denote by $i_{\tau}$ the minimal
$i$ such  that $\tau\in G_i$.
 Now assume that $\psi(X_1,X_2)$ defines a winning strategy
for $\mG^{\alpha}_{K(C_\beta,G)}$ and $\chi_i(X_1,X_2)$ defines a
winning strategy for $\mG^{\beta}_{G_i}$,  and $\theta_{\alpha}$
defines the ordinal $\alpha$. Then the formula
$$\exists t \big( \theta_{\alpha}^{<t}\wedge \psi^{<t}\wedge \bigwedge_{\tau\in
K(C_\beta,G)} (\tau^{<t}\ra \chi_{i_\tau}^{\geq t})\big)$$ defines
a winning strategy described above. (Here, for a variable $t$
which does not occur free in $\vp$, we denote by $\vp^{<t}$ the
formula obtained from $\vp$ by relativizing all first order
quantifiers to $<t$, i,e. by replacing  ``$\forall u( \dots)$'' by
``$\forall u(u<t\ra \dots)$". The formula $\vp^{\geq t}$ is
defined similarly.)

The case when Player II has a winning strategy is treated
similarly.
\end{proof}

\begin{lem}\label{lem:det-product}
Let $n\in \nat$, let $\tuple{\alpha_i~:~i\in\om}$ be an
$\omega$-sequence of ordinals and let $C\subseteq {\Ps{Char^n_2}}$
be a formal $n$-game type.
 Assume that for every  $i$:
\begin{enumerate}[\em(1)]
\item For every   $G\subseteq {Char^n_2}$, the   $\alpha_i$  game for $G$  is determinate.
\item $C=\gt_n(\alpha_i)$.
\end{enumerate}
Then
\begin{enumerate}[\quad\bf A:]
\item If Player I wins
Game$_{\omega}(C,G)$, then Player I wins the  $\sum \alpha_i$-game
for $G$.\medskip

\item If Player I cannot win  Game$_{\omega}(C,G)$,  then Player II
  wins the $\sum \alpha_i$-game  for $G$.\medskip

\item For every $G\subseteq {Char^n_2}$, the  $\sum \alpha_i$ game
for  $G$ is determinate. \medskip

\item $G\in \gt_n(\sum \alpha_i)$ iff
Player I wins Game$_{\omega}(C,G)$.
\end{enumerate}
\end{lem}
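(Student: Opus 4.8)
The plan is to follow the architecture of the Addition Lemma (Lemma~\ref{lem:sum-main}): prove the two implications \textbf{A} and \textbf{B}, and then read off \textbf{C} and \textbf{D}. Since \textbf{A} assumes ``Player I wins Game$_{\om}(C,G)$'' and \textbf{B} assumes ``Player I cannot win Game$_{\om}(C,G)$,'' these two cases are exhaustive, so together they immediately yield determinacy of the $\sum \alpha_i$-game (\textbf{C}); and since $G\in\gt_n(\sum \alpha_i)$ means, by definition, that Player I wins the $\sum \alpha_i$-game for $G$, the same two implications give the equivalence (\textbf{D}). Note that \textbf{B} needs a \emph{Player~II} winning strategy in the abstract game, which determinacy of that game (Lemma~\ref{lem:comp-cond}(1)) supplies once Player~I cannot win it. Two elementary facts about a single block of length $\alpha_i$ will be used throughout: by hypotheses (1) and (2), Player~I can force the block-type into a set $X\subseteq\Char^n_2$ iff $X\in C$, and dually Player~II can force the block-type into $X$ iff $\overline{X}\notin C$ (forcing the type into $X$ is exactly winning the $\alpha_i$-game with condition $\overline{X}$, which by determinacy Player~II does iff Player~I does not, i.e.\ iff $\overline{X}\notin\gt_n(\alpha_i)=C$). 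Finally, since $\sum \alpha_i$ splits into consecutive blocks of order type $\alpha_i$, the $n$-type of the whole play is the ordered sum $\sum_{i\in\om}\tau_i$ of the block-types $\tau_i$ (Definition~\ref{dfn:sum of n types}, Proposition~\ref{prop:ordered sum preserves equivalence}).

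For \textbf{A}, suppose Player~I has a winning strategy in Game$_{\om}(C,G)$. In the real $\sum \alpha_i$-game he plays block by block: having seen the block-types $\tau_0,\dots,\tau_{i-1}$ already realized (these are determined, since block $i$ begins only after block $i-1$ is completed, so the reduction is causal), he consults the abstract strategy to obtain $G_i\in C$ and then plays block $i$ with a winning strategy for the $\alpha_i$-game with condition $G_i$, available because $G_i\in C=\gt_n(\alpha_i)$, thereby forcing $\tau_i\in G_i$. Whatever Player~II does, the realized sequence $(G_i,\tau_i)_{i\in\om}$ is a legal play of Game$_{\om}(C,G)$ against the abstract strategy, so $\sum_i\tau_i\in G$; as this sum is the type of the whole play, Player~I wins.

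For \textbf{B}, Lemma~\ref{lem:comp-cond}(1) gives Player~II a winning strategy $\sigma$ in Game$_{\om}(C,G)$, and the task is to convert it into a real strategy despite the fact that in a single block Player~II cannot realize a prescribed type, only avoid conditions $\overline{X}$ with $\overline{X}\notin C$. Writing $\rho^{-1}G:=\{\tau\in\Char^n_2\mid \rho+\tau\in G\}$ for the left residual (so that $(\rho+\sigma)^{-1}G=\sigma^{-1}(\rho^{-1}G)$ by associativity of the ordered sum), I will maintain, after blocks $0,\dots,i-1$ with prefix type $\rho_i=\tau_0+\dots+\tau_{i-1}$, the invariant that Player~II wins Game$_{\om}(C,\rho_i^{-1}G)$ via the residual $\sigma^{(i)}$ of $\sigma$; the base case is the hypothesis of \textbf{B}. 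The device driving the block step is the set $T_i$ of all first responses $\sigma^{(i)}(G_i)$ as $G_i$ ranges over $C$: for each such $\tau\in T_i$ Player~II wins Game$_{\om}(C,\tau^{-1}(\rho_i^{-1}G))$, and the key point is that $\overline{T_i}\notin C$. Indeed, were $\overline{T_i}\in C$, abstract Player~I could open Game$_{\om}(C,\rho_i^{-1}G)$ with $G_i=\overline{T_i}$, forcing the response $\sigma^{(i)}(\overline{T_i})\in\overline{T_i}$, yet by definition $\sigma^{(i)}(\overline{T_i})\in T_i$, a contradiction. Hence Player~II can force the real block-type into $T_i$; for the realized $\tau_i\in T_i$ there is $G_i\in C$ with $\tau_i=\sigma^{(i)}(G_i)=\sigma(G_0,\dots,G_i)$, and the invariant passes to stage $i+1$ because $\rho_{i+1}^{-1}G=\tau_i^{-1}(\rho_i^{-1}G)$.

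The payoff of carrying the moves $G_i$ along is that the realized sequence $(G_0,\tau_0,G_1,\tau_1,\dots)$ is a genuine play of Game$_{\om}(C,G)$ in which Player~I plays $G_i\in C$ and Player~II follows his winning strategy $\sigma$; consequently $\sum_i\tau_i\notin G$, and since this sum is the type of the whole real play, Player~II wins the $\sum \alpha_i$-game, establishing \textbf{B}. I expect the main obstacle to be precisely this passage to the limit: the naive block-by-block simulation fails because Player~II cannot realize a prescribed block-type, and the finite-stage residual invariant alone does not pin down the infinite sum $\sum_i\tau_i$. The resolution is to force the block-type only into the set $T_i$ of \emph{possible} strategy-responses (using $\overline{T_i}\notin C$), so that the real play becomes pointwise consistent with the abstract winning strategy $\sigma$; the winning property of $\sigma$ over the full $\om$-play then delivers the conclusion that no finite approximation could.
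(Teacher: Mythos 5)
Your proof is correct and takes essentially the same route as the paper: part A is the identical block-by-block simulation of the abstract strategy, and in part B your set $T_i$ of first responses of the residual strategy is exactly the paper's set $H_i=\{\tau_D : D\in C\}$, with the same contradiction argument (if $\overline{T_i}\in C$, Player I plays it and the forced response lands in both $T_i$ and its complement) and the same final step of observing that the realized sequence $(G_i,\tau_i)_{i\in\om}$ is a play of Game$_\om(C,G)$ consistent with Player II's winning strategy. Your residual/left-quotient bookkeeping is only a notational repackaging of the paper's phrase ``the play $G_0\tau_0\dots G_i\tau_i$ is consistent with $F_2$,'' and parts C and D follow in both treatments as immediate consequences of A and B.
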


\begin{proof}
(A) Let $F$ be a winning strategy for Player I in
Game$_{\omega}(C,G)$. Consider the following strategy for Player I

\begin{enumerate}[\hbox to6 pt{\hfill}]
\item\noindent{\hskip-0 pt\bf First $\alpha_0$ Rounds:}\
Let $G_0\in C$ be the first move of Player I  according to $F$.
Player I will play the first $\alpha_0$ rounds according to the
his winning strategy in the $\alpha_0$-game for $G_0$.

Let $\pi_0$ be a play according to this strategy and let
$\tau_0=type_n(\pi_0)$. Note that $\tau_0\in G_0$ and a (partial) play
$G_0\tau_0$ is consistent with $F$.\medskip

\item\noindent{\hskip-0 pt\bf Next $\alpha_{i+1}$ Rounds:}\
Let $G_{i+1}\in C$ be the  move of Player I  according to $F$
after $G_0\tau_0\dots G_i\tau_i$. Player I will play the next
 $\alpha_{i+1}$ rounds according to  his winning strategy in
 the
$\alpha_{i+1}$-game for $G_{i+1}$.

Let $\pi_{i+1}$ be a play according to this strategy during these
$\alpha_{i+1} $ rounds.

Let $\tau_{i+1}=type_n(\pi_{i+1})$. Note that $\tau_{i+1}\in
G_{i+1}$ and the  play $G_0\tau_0,\dots,$ $ G_{i+1}\tau_{i+1}$ is
consistent with $F$.
\end{enumerate}
$type_n(\pi_0\dots \pi_{i}\dots)= \Sigma \tau_i$ is in  $G$,
because the play $G_0\tau_0,\dots, G_i\tau_i,\dots $  is consistent
with the   winning  strategy $F$
 of Player I in Game$_{\omega}(C,G)$. Hence, the described
 strategy is a winning strategy for Player I in the $\sum\alpha_i $
 game for $G$.

Now let us prove (B). Assume that  Player I has no winning strategy in
Game$_{\omega}(C,G)$. By Lemma \ref{lem:comp-cond}  this game is
determinate. Hence, Player II has a winning strategy $F_2$ for
Game$_{\omega}(C,G)$.

We show that  the following strategy of Player II is winning
in the  $\sum\alpha_i$ game for $G$

\begin{enumerate}[\hbox to6 pt{\hfill}]
\item\noindent{\hskip-0 pt\bf First $\alpha_0$ Rounds:}\
For every  $D\in C$ let $\tau_D$  be the response of Player II
according to $F_2$ to the  first move $D$ of Player I in
Game$_{\omega}(C,G)$.

Let $H_0=\{\tau_D~:~D\in C\}$. We claim that Player II has a
winning strategy in  the $\alpha_0$-game for $\neg H_0$ -  the
complement of $H_0$, i.e., for ${\Char^n_2}\setminus H_0$. Indeed,
if he has no such strategy, then, by determinacy of $\alpha_0$,
Player I has a winning strategy for $\neg H_0$. Therefore, $\neg
H_0\in C$. Let $\tau=F_2(\neg H_0)$. Then $\tau\in \neg H_0$ and
$\tau\in H_0$. Contradiction.

The first $\alpha_0$ rounds Player II will play according to his
winning strategy for $\neg H_0$.

Let $\pi_0$ be a play according to this strategy and let
$\tau_0=type_n(\pi_0)$. Note that $\tau_0\in H_0$. Let $G_0$ be
such that $\tau_0=F_2(G_0)$. The (partial) play $G_0\tau_0$ is consistent
with $F_2$.\medskip

\item\noindent{\hskip-0 pt\bf Next $\alpha_{i+1}$ Rounds:}\
For every  $D\in C$ let $\tau_D$  be the response of Player II
according to $F_2$ to   $G_0\tau_0\dots G_i\tau_i D$. ($F_2$ is
defined because $G_0\tau_0\dots G_i\tau_iD$ is a play according to
$F_2$ for every $D\in C$.) Let $H_{i+1}=\{\tau_D\mid D\in C\}$. We
claim that Player II has a winning strategy in $\alpha_{i+1}$ game
for $\neg H_{i+1}$. (The arguments are the same as the arguments
which show that Player II has a winning strategy for $\neg H_0$.)

The next $\alpha_{i+1}$ rounds Player II will play according to
his winning strategy for $\neg H_{i+1}$.

Let $\pi_{i+1}$ be a play according to this strategy and let
$\tau_{i+1}=type_n(\pi_{i+1})$. Note that $\tau_{i+1}\in H_{i+1}$.
Let $G_{i+1}$ be such that $\tau_{i+1}= F_2(G_0\tau_0\dots
G_i\tau_i G_{i+1})$. The play $G_0\tau_0\dots G_{i+1}\tau_{i+1}$
is consistent with $F_2$.

\end{enumerate}
$type_n(\pi_0\dots \pi_{i}\dots)= \Sigma \tau_i$ is not in  $G$,
because the play $G_0\tau_0\dots G_i\tau_i\dots $  is consistent
with the   winning  strategy $F_2$
 of Player II in Game$_{\omega}(C,G)$. Hence, the described
 strategy is a winning strategy for Player II in $\sum\alpha_i $
 game for $G$.

 (C) and (D)  are immediate consequences of (A) and (B).
\end{proof}

As a consequence of Lemma \ref{lem:det-product} and Lemma
\ref{lem:comp-cond} we obtain the following Theorem:
\begin{thm}\label{th:mult-om} Assume  $\alpha$ is determinate.  Then
\begin{enumerate}[\em(1)]
\item $\alpha\times \om$ is determinate.
\item $\gt_n(\alpha\times\om)$ is computable from $\gt_n(\alpha)$.
\end{enumerate}
\end{thm}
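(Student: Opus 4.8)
The plan is to apply Lemma \ref{lem:det-product} to the constant sequence $\tuple{\alpha_i \mid i\in\om}$ with $\alpha_i=\alpha$ for every $i$, so that $\sum_i\alpha_i=\alpha\times\om$, and to take $C:=\gt_n(\alpha)$. First I would verify the two hypotheses of that lemma. Hypothesis (1), that the $\alpha_i$-game for every $G\subseteq\Char^n_2$ is determinate, is immediate from the assumption that $\alpha$ is determinate, since each such game is $\mG_G^\alpha$. Hypothesis (2), that $C=\gt_n(\alpha_i)$ for every $i$, holds trivially because all the $\alpha_i$ coincide with $\alpha$ and $C$ was chosen to be $\gt_n(\alpha)$.

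For part (1), Lemma \ref{lem:det-product}(C) then yields that for every $G\subseteq\Char^n_2$ the $\alpha\times\om$-game for $G$ is determinate. Running this for every $n$ and invoking the reduction recorded at the start of Section \ref{sec:games} — every formula $\vp$ of quantifier depth $n$ is equivalent to the disjunction of all formulas in some $G\subseteq\Char^n_2$, so $\mG_\vp^{\alpha\times\om}$ coincides with $\mG_G^{\alpha\times\om}$ — gives determinacy of $\mG_\vp^{\alpha\times\om}$ for every $\MLO$ formula $\vp$. By definition this is exactly the assertion that $\alpha\times\om$ is determinate.

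For part (2), I would use Lemma \ref{lem:det-product}(D): with $C=\gt_n(\alpha)$ we have $G\in\gt_n(\alpha\times\om)$ iff Player I wins Game$_{\om}(C,G)$. By Lemma \ref{lem:comp-cond}(2) it is decidable which player wins Game$_{\om}(C,G)$, and this decision procedure takes only $C$ and $G$ as input data. Since $\Char^n_2$ is finite, there are only finitely many candidate sets $G\subseteq\Char^n_2$; enumerating them and running the decision procedure on each determines membership in $\gt_n(\alpha\times\om)$ for all of them, which computes $\gt_n(\alpha\times\om)$ from $\gt_n(\alpha)=C$.

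There is essentially no hard step here, since the substantive work has already been packaged inside Lemma \ref{lem:det-product} and Lemma \ref{lem:comp-cond}. The only points that require care are, in part (1), the passage from determinacy of the finitely many games $\mG_G^{\alpha\times\om}$ with $G\subseteq\Char^n_2$ to determinacy of $\mG_\vp^{\alpha\times\om}$ for an arbitrary formula $\vp$ — i.e. confirming that ``determinate'' in the sense of the definition is captured by the game-type games — and, in part (2), the observation that the computability claim rests on the finiteness of $\Char^n_2$ together with the effectiveness of the Game$_{\om}$ decision procedure supplied by Lemma \ref{lem:comp-cond}.
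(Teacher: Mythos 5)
Your proof is correct and follows exactly the route the paper intends: the paper states Theorem \ref{th:mult-om} as an immediate consequence of Lemma \ref{lem:det-product} and Lemma \ref{lem:comp-cond}, and your instantiation with the constant sequence $\alpha_i=\alpha$ and $C=\gt_n(\alpha)$ (using part (C) for determinacy and part (D) together with the decidability of Game$_{\om}(C,G)$ for computability) is precisely that derivation. The two points you flag -- the reduction from arbitrary $\vp$ to sets $G\subseteq\Char^n_2$ and the finiteness of $\Ps{\Char^n_2}$ -- are indeed the only details to check, and both are recorded in the paper.
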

From the proof of Lemma \ref{lem:det-product}, by arguments
similar to those in the proof of Lemma \ref{lem:def-sum}, we
deduce   the following  variant of Theorem \ref{th:mult-om} for
\emph{definable strategies}:
\begin{lem}\label{lem:def-prod} Assume that $\alpha$ is definable and
in every game of length $\alpha$  one of the players has a
definable winning strategy. Then in every $\alpha\times \omega $
game one of the players has a definable winning strategy.
Moreover, if there is an  algorithm which for every $\vp$ computes
a definable winning strategy for $\mG^\alpha_\vp$, then there is
an algorithm that computes a definable winning strategy for
$\mG^{\alpha\times\om}_\vp$.
\end {lem}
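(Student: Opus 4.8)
The plan is to transfer the block-by-block construction of Lemma~\ref{lem:det-product} into a single $\MLO$ formula, exactly as Lemma~\ref{lem:def-sum} did for the two-block decomposition of $\alpha+\beta$. As noted at the start of Section~\ref{sec:games}, it suffices to produce a definable winning strategy for $\mG^{\alpha\times\om}_G$ for each $G\s\Char^n_2$. Put $C=\gt_n(\alpha)$, which is computable since the hypothesis lets us decide every $\alpha$-game. First I would record the three definable ingredients: by the B\"{u}chi--Landweber theorem (Theorem~\ref{th:bl-game}) applied through the reduction of Lemma~\ref{lem:comp-cond}, the winner of $\mathrm{Game}_{\om}(C,G)$ has an $\MLO$-definable, computable winning strategy $F$ over $\om$; by hypothesis, for every $H\s\Char^n_2$ the winner of $\mG^\alpha_H$ has a computable definable strategy $\chi_H$; and $\theta_\alpha$ defines $\alpha$. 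I treat the case in which Player~I wins $\mathrm{Game}_{\om}(C,G)$, so that $F$ is hers; the case in which Player~I loses is symmetric, using the Player~II construction of part~(B) of Lemma~\ref{lem:det-product}, with the conditions $\neg H_i$ in place of the $G_i$.

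The key geometric step is to make the block decomposition visible in $\MLO$. Since $\alpha$ is definable, I can write a formula $\mathrm{Dec}(S)$ asserting that $S\s\alpha\times\om$ decomposes the chain into consecutive half-open intervals each isomorphic to $\alpha$, indexed in order type $\om$: $S$ contains the least point, $S$ is cofinal with no element a limit of smaller $S$-elements (so $S$ has order type $\om$), and for $S$-consecutive $s<s'$ the interval $[s,s')$ satisfies $\theta_\alpha$ under relativization. Any $S$ with $\mathrm{Dec}(S)$ exhibits $\alpha\times\om$ as $\sum_{i\in\om}\alpha_i$ with each $\alpha_i\cong\alpha$, so $\gt_n(\alpha_i)=C$ and the argument of Lemma~\ref{lem:det-product} applies verbatim; hence the final formula may simply existentially quantify $\exists S$, just as Lemma~\ref{lem:def-sum} quantified $\exists t$. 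Relative to such an $S$, for each $\tau\in\Char^n_2$ the property ``the block $[s,s')$ realizes type $\tau$'' is definable by relativizing the characteristic formula $\tau$ in $X_1,X_2$ to $[s,s')$; this is the datum that the index structure reads.

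The heart of the matter, and the expected main obstacle, is to express that the per-block conditions $\tuple{G_i}$ chosen by the strategy agree with $F$ applied to the sequence of realized block-types. Over the literal index chain $\om$, $F$ is an $\MLO$-definable function turning the sequence of realized types $\tuple{\tau_i}$ into the sequence $\tuple{G_i}$; I would reinterpret its defining formula inside $\alpha\times\om$ by relativizing all of its quantifiers (first- and second-order) to $S$, viewed as an internal copy of $\om$, and by reading off each block's type through the relativized characteristic formulas of the previous paragraph. This is precisely the Composition-Theorem passage from the index structure to the summed chain, and it is where definability of $\alpha$ is indispensable: without recognizable blocks there is no internal $\om$ on which to run $F$. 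By contrast, Lemma~\ref{lem:def-sum} faced only two blocks and needed no internal index, so its finite conjunction over types sufficed.

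Finally I would assemble $\Psi(X_1,X_2)$ to assert: there is an $S$ with $\mathrm{Dec}(S)$ and markings $\tuple{G_i},\tuple{\tau_i}$ consistent with $F$ in the sense just described, such that within each block $[s,s')$ the restriction of $X_1$ follows $\chi_{G_i}$, again by relativizing $\chi_{G_i}$ to $[s,s')$, generalizing the $\vp^{<t}$, $\vp^{\ge t}$ notation of Lemma~\ref{lem:def-sum}. Correctness is then inherited from part~(A) of Lemma~\ref{lem:det-product}, and determinacy of $\alpha$ and of $\alpha\times\om$ is supplied by Theorem~\ref{th:mult-om}. For the ``moreover'', every ingredient---$C=\gt_n(\alpha)$, the strategy $F$ via B\"{u}chi--Landweber synthesis, the strategies $\chi_H$ by hypothesis, and $\theta_\alpha$---is computable, and the assembly of $\Psi$ is effective, so $\Psi$ is computed uniformly from $\vp$.
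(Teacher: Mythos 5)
Your proposal is correct and takes essentially the same route the paper intends: the paper gives no detailed proof of this lemma, saying only that it follows ``from the proof of Lemma \ref{lem:det-product}, by arguments similar to those in the proof of Lemma \ref{lem:def-sum}'', which is precisely the combination you carry out (a definable block decomposition via $\theta_\alpha$, a definable $\om$-strategy $F$ for $\mathrm{Game}_{\om}(C,G)$ obtained through the reduction of Lemma \ref{lem:comp-cond} and B\"{u}chi--Landweber, and relativized per-block strategies $\chi_{G_i}$, assembled as in Lemma \ref{lem:def-sum}). One small remark: your $\mathrm{Dec}(S)$ does pin down the standard decomposition $S=\{\alpha\cdot i \mid i\in\om\}$, but not for the reason in your parenthetical --- discreteness of $S$ alone does not force order type $\om$; rather, the consecutive-block condition forces the $i$-th element of $S$ to be $\alpha\cdot i$, and cofinality plus the fact that the whole chain is $\alpha\cdot\om$ then forces $\mathrm{otp}(S)=\om$ --- so the construction is sound as written.
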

Note that Theorem \ref{th:mult-om} allows to define the
multiplication by $\om$ of $n$-game types for determinate
ordinals. The following Lemma allows to define $\omega$-sums of
$n$-game types of determinate ordinals. It is an analog of
Proposition \ref{prop:ordered sum preserves equivalence} for game
types.
\begin{lem}\label{lem:om-sum}
Let $n\in \nat$. Assume that \begin{enumerate}[\em(1)]
 \item $ \bar{\alpha}=\tuple{\alpha_i ~:~i\in \om}$ and $\bar{\beta}=\tuple{ \beta_i ~:~ i\in \om}$ are
$\om$-sequences of determinate ordinals  and
\item $\gt_n(\alpha_i)=\gt_n(\beta_i)$ for every $i\in \nat$.
\end{enumerate}
Then $\gt_n(\sum\alpha_i)=\gt_n(\sum \beta_i)$ and the ordinals
$\sum\alpha_i$ and  $\sum \beta_i$ are determinate.
\end{lem}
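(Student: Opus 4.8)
The plan is to prove that $\gt_n(\sum_i \alpha_i)$ depends \emph{only} on the $\om$-sequence of game types $\langle \gt_n(\alpha_i) : i\in\om\rangle$. Since by hypothesis (2) this sequence is literally the same for $\bar\alpha$ and $\bar\beta$ (both equal $\langle C_i\rangle$, where $C_i:=\gt_n(\alpha_i)=\gt_n(\beta_i)$), the desired equality $\gt_n(\sum\alpha_i)=\gt_n(\sum\beta_i)$ follows immediately, and determinacy of both sums will come out of the same construction. The obstacle is that Lemma \ref{lem:det-product} handles only those $\om$-sums in which \emph{all} summands share a single game type $C$, whereas here the $C_i$ may vary with $i$. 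The key idea is to regroup the summands into consecutive blocks that \emph{do} share a common game type, and then feed the sequence of blocks into Lemma \ref{lem:det-product}.

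First I would record the algebraic structure. By Theorem \ref{lem:computabl-sum}, finite sums of determinate ordinals are again determinate, and $\gt_n(\gamma+\delta)$ is a function of $\gt_n(\gamma)$ and $\gt_n(\delta)$ alone; call the induced operation $+$ on game types. Associativity of ordinal addition transfers to $+$, so the finitely many game types form a finite semigroup. I would then colour each pair $i<j$ of naturals by $\mu(i,j):=\gt_n(\alpha_i+\dots+\alpha_{j-1})=C_i+\dots+C_{j-1}$. This colour lies in the finite set of game types, depends only on $\langle C_i\rangle$ (using repeatedly that each finite partial sum is determinate), and is \emph{additive}: $\mu(i,j)+\mu(j,k)=\mu(i,k)$ by Theorem \ref{lem:computabl-sum}(2) together with associativity of ordinal addition.

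Next I would apply the infinite Ramsey theorem to $\mu$ to obtain an infinite set $H=\{h_0<h_1<\dots\}\s\om$ on which $\mu$ is constant, say with value $E$ (additivity then forces $E+E=E$, though idempotency is not needed). Writing $P:=\alpha_0+\dots+\alpha_{h_0-1}$ and $B_k:=\alpha_{h_k}+\dots+\alpha_{h_{k+1}-1}$, I get $\sum_i\alpha_i=P+\sum_k B_k$. Each block $B_k$ is a finite sum of determinate ordinals, hence determinate, and homogeneity of $H$ gives $\gt_n(B_k)=\mu(h_k,h_{k+1})=E$ for \emph{every} $k$: the blocks now share the single game type $E$. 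Lemma \ref{lem:det-product} therefore applies to $\langle B_k:k\in\om\rangle$, yielding that $\sum_k B_k$ is determinate and that $\gt_n(\sum_k B_k)$ is determined by $E$ alone (it is precisely the set of $G$ for which Player I wins Game$_{\om}(E,G)$). Finally $P$ is determinate with $\gt_n(P)$ computed from the finite prefix $\langle C_0,\dots,C_{h_0-1}\rangle$ by Theorem \ref{lem:computabl-sum}, and the same theorem gives that $\sum_i\alpha_i=P+\sum_k B_k$ is determinate with $\gt_n(\sum_i\alpha_i)=\gt_n(P)+\gt_n(\sum_k B_k)$.

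Reading off the conclusion: every ingredient of the right-hand side — the prefix $\langle C_0,\dots,C_{h_0-1}\rangle$, the homogeneous set $H$, and the block type $E$ — is manufactured from $\langle C_i\rangle$ alone, so $\gt_n(\sum_i\alpha_i)$ is a function of $\langle C_i\rangle$. Running the \emph{same} extraction for $\bar\beta$ (whose game-type sequence is the identical $\langle C_i\rangle$) reuses the same $H$, the same prefix, and the same $E$, so $\gt_n(\sum\beta_i)$ receives the same value, and both sums are determinate by construction. I expect the main obstacle to be the bookkeeping that makes ``depends only on $\langle C_i\rangle$'' precise: verifying that $+$ is a well-defined associative operation on game types (so the Ramsey colouring is legitimate and additive), and checking that one single homogeneous set simultaneously drives the block decompositions of $\bar\alpha$ and $\bar\beta$. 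Once those points are settled, the reduction to the constant-type case of Lemma \ref{lem:det-product} is routine.
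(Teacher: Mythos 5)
Your overall strategy---colouring each pair $i<j$ by the game type of the finite partial sum $\alpha_i+\dots+\alpha_{j-1}$, extracting an infinite homogeneous set via Ramsey's theorem, regrouping the tail into blocks of one constant type $E$, and then handling the tail by Lemma \ref{lem:det-product} and the prefix by Lemma \ref{lem:sum-main} / Theorem \ref{lem:computabl-sum}---is precisely the derivation the paper alludes to: the paper omits the proof, remarking only that it can be obtained from Lemmas \ref{lem:sum-main}, \ref{lem:det-product} and the Ramsey theorem. Your argument for the equality $\gt_n(\sum\alpha_i)=\gt_n(\sum\beta_i)$ is sound: the colouring $\mu$ is computed from $\langle C_i\rangle$ alone (well-definedness and associativity of $+$ on game types come from Lemma \ref{lem:sum-main} and associativity of ordinal addition), so a single homogeneous set $H$ serves both sequences, and the resulting type is a function of the prefix types and the block type $E$.

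There is, however, one genuine (though repairable) gap, in the determinacy half of the conclusion. In this paper ``determinate'' means that $\mG_\vp^\alpha$ is determined for \emph{every} MLO formula $\vp$, i.e., determinacy at every quantifier depth $m$, not only at the fixed $n$ of the statement. Your construction lives entirely at level $n$: Lemma \ref{lem:det-product}(C) yields determinacy of the games $\mG_G^{\sum_k B_k}$ only for $G\s\Char^n_2$, so ``determinate by construction'' gives determinacy of $\sum\alpha_i$ only for winning conditions of quantifier depth at most $n$. This also makes your appeal to Theorem \ref{lem:computabl-sum} for the decomposition $P+\sum_k B_k$ technically premature: as stated, that theorem requires \emph{both} summands to be determinate (at all depths), and at that point you know this for $P$ but not for $\sum_k B_k$. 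The repair is routine: for each $m\in\nat$ run the same Ramsey regrouping with the level-$m$ colouring (the homogeneous set and the blocks now depend on $m$, which is harmless, since the hypothesis that the $\alpha_i$ and $\beta_i$ are determinate is depth-independent), and note that the proofs of Lemma \ref{lem:sum-main} and Lemma \ref{lem:det-product} only use level-$m$ determinacy of the ordinals involved---part (A) of Lemma \ref{lem:sum-main} needs none at all, and part (B) needs only determinacy of games with winning conditions in $\Ps{\Char^m_2}$. This yields determinacy of $\sum\alpha_i$ and $\sum\beta_i$ at every level $m$, hence full determinacy, after which your fixed-$n$ computation of the common game type stands exactly as written.
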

The proof of Lemma \ref{lem:om-sum} can be derived from Lemmas
\ref{lem:sum-main}, \ref{lem:det-product} and the Ramsey theorem.
Its proof is omitted, since  we won't use this lemma in the
sequel.

\section{B\"{u}chi-Landweber theorem holds for
$\alpha<\om^{\om}$}\label{sect:sho}
 In this section we provide a
simple proof of Shomrat's theorem \cite{Sho07} which extends the
{B\"{u}chi-Landweber theorem to all ordinals  $\alpha<\om^{\om}$.
\begin{thm} \label{th:shomrat} Let $\alpha<\om^{\om}$ and  $\vp(X,Y)$ be a formula. Then:
\begin{enumerate}[\hbox to6 pt{\hfill}]
\item\noindent{\hskip-0 pt\bf Determinacy:}\ One of the players has a
  winning strategy in the game $\mG_\vp^\alpha$.\medskip

\item\noindent{\hskip-0 pt\bf Decidability:}\ It is decidable
  \emph{which} of the players has a winning strategy.\medskip

\item\noindent{\hskip-0 pt\bf Definable strategy:}\ The player who has
  a winning strategy also has a \emph{definable} winning
  strategy.\medskip

\item\noindent{\hskip-0 pt\bf Synthesis algorithm:}\ We can compute a formula $\psi(X,Y)$ that defines (in $(\alpha,<)$)
a winning strategy for the winning player in $\mG_\vp^\alpha$.
\end{enumerate}
\end{thm}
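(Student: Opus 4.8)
The plan is to prove Theorem~\ref{th:shomrat} by induction on the ordinal $\alpha<\om^\om$, using the structural machinery built in Sections~\ref{sec:games} and~\ref{sect:comp-gt} together with the base case supplied by the B\"uchi--Landweber theorem. Every ordinal $\alpha<\om^\om$ has a Cantor normal form $\alpha=\om^{n-1}a_{n-1}+\cdots+\om a_1+a_0$ with the $a_i\in\nat$, and it is built up from the finite ordinals and $\om$ by the operations of \emph{addition} and \emph{multiplication by $\om$}. The key observation is that these are precisely the two operations for which we have already established that determinacy, decidability (computability of game types), \emph{and} the existence/computability of \emph{definable} winning strategies are preserved: Theorem~\ref{lem:computabl-sum} and Lemma~\ref{lem:def-sum} handle $\alpha+\beta$, while Theorem~\ref{th:mult-om} and Lemma~\ref{lem:def-prod} handle $\alpha\times\om$.

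First I would set up the base cases. For a finite ordinal the McNaughton game $\mG_\vp^\alpha$ is a finite game of perfect information, so it is determined, one can decide the winner by exhaustive search of the finite game tree, and a winning strategy (being a finite object) is trivially $\MLO$-definable and computable. For $\alpha=\om$ all four conclusions are exactly the content of Theorem~\ref{th:bl-game}. Crucially, to invoke the definable-strategy lemmas I also need the relevant ordinals to be \emph{definable} in $\MLO$, and every ordinal $<\om^\om$ is definable by a sentence $\theta_\alpha$ (each such ordinal has a unique monadic type and is characterised up to isomorphism, so $\theta_\alpha$ exists and can be taken from the Hintikka formulas of Lemma~\ref{lemma:Hintikka formulas}); this is what Lemma~\ref{lem:def-sum} and Lemma~\ref{lem:def-prod} require.

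The inductive step proceeds along the Cantor normal form. Writing $\alpha=\om^{n-1}a_{n-1}+\cdots+a_0$, I would build $\alpha$ by a finite sequence of the two closure operations, starting from finite ordinals and $\om$. Concretely, each summand $\om^k a_k$ is obtained from $\om^k$ by repeated addition (finitely many copies), and $\om^{k+1}=\om^k\times\om$ is obtained from $\om^k$ by one multiplication-by-$\om$ step; the whole ordinal is then the finite sum of these summands. At every stage the ordinals involved are determinate and definable, so Theorem~\ref{lem:computabl-sum}, Theorem~\ref{th:mult-om}, Lemma~\ref{lem:def-sum} and Lemma~\ref{lem:def-prod} apply and propagate all four properties: determinacy of $\mG_\vp^\alpha$, computability of $\gt_n(\alpha)$ (which yields decidability of the winner, since deciding who wins $\mG_\vp^\alpha$ reduces, via the reduction at the start of Section~\ref{sec:games}, to checking membership of the computable set $G$ in the computable game type $\gt_n(\alpha)$), existence of a definable winning strategy, and an algorithm producing such a strategy. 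Since $\alpha$ is reached from the base cases in finitely many steps, the induction terminates and establishes all four clauses.

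The main obstacle, and the point that requires the most care, is keeping the \emph{definability} and \emph{synthesis} clauses alive through the induction rather than merely the determinacy and decidability clauses. Determinacy and decidability are purely about the game types $\gt_n(\alpha)$, which the addition and multiplication theorems compute directly; but the definable-strategy lemmas carry the extra hypothesis that the relevant ordinals be \emph{definable}, and they build the strategy formula by relativising subformulas to initial segments determined by $\theta_\alpha$ (as in the displayed formula in the proof of Lemma~\ref{lem:def-sum}). The delicate part is therefore to check at each inductive step that definability of the ordinal is available---which holds precisely because $\alpha<\om^\om$ guarantees a defining sentence $\theta_\alpha$---and that the strategy-synthesis algorithms of Lemma~\ref{lem:def-sum} and Lemma~\ref{lem:def-prod} compose correctly along the normal-form decomposition. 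This is also exactly where the hypothesis $\alpha<\om^\om$ is essential: for $\alpha\geq\om^\om$ the ordinal need not be $\MLO$-definable and, as the selector counterexample $\psi_\alpha$ in the introduction shows, the definability and synthesis clauses genuinely fail, so the induction cannot be pushed further.
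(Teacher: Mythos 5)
Your proposal is correct and takes essentially the same approach as the paper: building every $\alpha<\om^\om$ from the base cases via Cantor normal form, propagating all four clauses through addition with Lemma~\ref{lem:def-sum} and multiplication by $\om$ with Lemma~\ref{lem:def-prod}, exactly as the paper does (the paper first handles $\om^n$ by induction using Lemma~\ref{lem:def-prod}, then sums via Lemma~\ref{lem:def-sum}). One minor caveat: your parenthetical claim that the defining sentence $\theta_\alpha$ ``can be taken from the Hintikka formulas'' is not sound, since a depth-$n$ Hintikka formula determines a structure only up to $\equiv^n$ rather than up to isomorphism; however, the underlying fact that every ordinal below $\om^\om$ is $\MLO$-definable is standard, and the paper likewise asserts it without proof.
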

\begin{proof}
Note that every ordinal $\alpha<\omega^{\omega}$ is definable.
Games of length one have  definable winning strategies.

 First,
prove by  the induction on $n\in\nat$ that the theorem holds for
$\alpha=\om^n$.
 The base $\alpha=1$ is trivial. For
the inductive step use Lemma \ref{lem:def-prod}.

Next, by Lemma \ref{lem:def-sum}, we obtain that the theorem is
true for every $\alpha$ of the form $\om^{m}n_m+
\om^{{m-1}}n_{m-1} +\nek+ +\om^0 n_0$, where $m,n_i\in \nat$.

Finally, note that every $\alpha<\om^\om$ is equal to $\om^{m}n_m+
\om^{{m-1}}n_{m-1} +\nek+ +\om^0n_0$, for some  $k,n_i\in \nat$.
\end{proof}
In \cite{RS06}  we proved that for every  $\alpha\geq \om^\om$ there
is a formula $\psi_\alpha$ such that Player I wins
$\mG_{\psi_{\alpha}}^\alpha$; however, he has no definable winning
strategy in this game, i.e., the definability part of the
{B\"{u}chi-Landweber theorem fails for every $\alpha\geq\om^\om$.
Therefore, the {B\"{u}chi-Landweber theorem holds for $\alpha$ iff
$\alpha<\om^\om$.

\section{Determinacy}\label{sect:deter}
Theorems   \ref{lem:computabl-sum} and \ref{th:mult-om}
  imply that the set of determinate
ordinals is closed under addition  and multiplication by $\om$. In
this section, we prove that every countable ordinal is
determinate. First, let us show the following Lemma:
\begin{lem}\label{lem:det-om-power} $\omega^{\alpha}$ is determinate for every countable
$\alpha$.
\end{lem}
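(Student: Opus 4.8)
The plan is to prove the statement by transfinite induction on the countable ordinal $\alpha$, using the closure properties of determinate ordinals already established. The base and successor cases are immediate. For $\alpha=0$ we have $\omega^0=1$, and a game of length one is trivially determinate. For a successor $\alpha=\beta+1$ I would use the identity $\omega^{\beta+1}=\omega^{\beta}\times\om$: by the induction hypothesis $\omega^{\beta}$ is determinate, so Theorem~\ref{th:mult-om} yields that $\omega^{\beta+1}$ is determinate.

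The only real work is the limit case. For a limit ordinal $\alpha$ I would fix a strictly increasing sequence $\tuple{\alpha_k \mid k\in\om}$ of ordinals with $\sup_k \alpha_k=\alpha$, which exists since $\alpha$ is a countable limit. Using the additive indecomposability of powers of $\om$, namely $\omega^{\alpha_k}+\omega^{\alpha_{k+1}}=\omega^{\alpha_{k+1}}$ whenever $\alpha_k<\alpha_{k+1}$, the partial sums of $\sum_{k}\omega^{\alpha_k}$ collapse to the $\omega^{\alpha_k}$ themselves, so $\sum_{k\in\om}\omega^{\alpha_k}=\sup_k\omega^{\alpha_k}=\omega^{\alpha}$. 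Thus it suffices to prove that this $\om$-sum of ordinals, each of which is determinate by the induction hypothesis, is itself determinate.

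To establish determinacy of $\mG_\vp^{\omega^\alpha}$ it suffices, as noted at the start of Section~\ref{sec:games}, to treat $\mG_G^{\omega^\alpha}$ for an arbitrary fixed $n$ and $G\subseteq\Char^n_2$. The obstacle is that Lemma~\ref{lem:det-product} applies only to an $\om$-sum whose summands all share a single game type $C$, whereas the game types $\gt_n(\omega^{\alpha_k})$ may differ from one $k$ to the next. I would remove this obstacle by a Ramsey argument. Writing $\gamma_k=\omega^{\alpha_k}$, for each pair $i<j$ the finite segment $\gamma_i+\cdots+\gamma_{j-1}$ is a finite sum of determinate ordinals, hence determinate by Theorem~\ref{lem:computabl-sum}, so its game type $\gt_n(\gamma_i+\cdots+\gamma_{j-1})$ is well defined; I colour the pair $\{i,j\}$ by this game type. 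Since there are only finitely many formal $n$-game-types (elements of $\Ps{\Ps{\Char^n_2}}$), Ramsey's theorem produces an infinite set $i_0<i_1<i_2<\cdots$ all of whose pairs receive the same colour $C$.

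Setting $\delta_k:=\gamma_{i_k}+\cdots+\gamma_{i_{k+1}-1}$, each $\delta_k$ is determinate and, by homogeneity of the colouring, $\gt_n(\delta_k)=C$ for every $k$; Lemma~\ref{lem:det-product} then shows that the tail $\sum_{k\in\om}\delta_k$ is determinate. Writing $P:=\sum_{i<i_0}\gamma_i$ for the finite, hence determinate, prefix, we have $\omega^\alpha=P+\sum_{k\in\om}\delta_k$, and determinacy of the whole follows from closure of determinate ordinals under addition (Theorem~\ref{lem:computabl-sum}). The main obstacle, as indicated, is precisely the non-uniformity of the game types of the blocks $\omega^{\alpha_k}$, and the Ramsey homogenization — exactly the ingredient flagged for the omitted Lemma~\ref{lem:om-sum} — is what overcomes it.
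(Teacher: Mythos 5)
Your proof is correct, and its skeleton (induction on $\alpha$; successor case via Theorem~\ref{th:mult-om}; limit case via homogenization plus Lemma~\ref{lem:det-product}) matches the paper's. In the limit case, however — the only substantive step — you take a genuinely different route. The paper exploits the special shape of the summands: since any infinite subsequence $\tuple{\gamma_i}$ of an increasing sequence $\beta_i$ with limit $\alpha$ still satisfies $\sum_i \om^{\gamma_i} = \om^{\lim \gamma_i} = \om^{\alpha}$ (absorption of smaller powers of $\om$), it simply applies the pigeonhole principle to the finitely many formal $n$-game-types to extract a subsequence with $\gt_n(\om^{\gamma_i})$ constant, and then invokes Lemma~\ref{lem:det-product}(C) directly — no regrouping into blocks and no leftover prefix. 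You instead homogenize with Ramsey's theorem for pairs, colouring $\{i,j\}$ by $\gt_n(\gamma_i+\cdots+\gamma_{j-1})$, and regroup into consecutive blocks $\delta_k$ of constant game type. This is precisely the additive-Ramsey technique the paper reserves for the omitted, more general Lemma~\ref{lem:om-sum}: your argument never uses that the summands are powers of $\om$, only that they are determinate, so in effect you prove that stronger lemma; the price is a heavier combinatorial tool and extra bookkeeping (the prefix $P$) that the paper's pigeonhole-plus-absorption shortcut avoids.

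One small inaccuracy in your last step: the appeal to Theorem~\ref{lem:computabl-sum} is not literally licensed, since its hypothesis asks that the tail $\sum_k \delta_k$ be determinate for \emph{all} formulas, whereas your Ramsey argument (whose homogeneous set depends on $n$) only yields determinacy for winning conditions $G\subseteq \Char^n_2$ for that fixed $n$. This is harmless and fixable in either of two ways: observe that the proof of Lemma~\ref{lem:sum-main} only ever uses determinacy of the two summands for depth-$n$ conditions, so addition of game types works level-by-level in $n$; or, simpler, note that the prefix is superfluous — by the same absorption property, $\sum_k \delta_k = \sum_{i\geq i_0}\om^{\alpha_i} = \om^{\alpha}$ already, so Lemma~\ref{lem:det-product}(C) gives determinacy of the $\om^{\alpha}$ game for every $G\subseteq \Char^n_2$, and letting $n$ range over $\nat$ finishes the proof.
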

\begin{proof}
By Induction on $\alpha$.

The basis: $\alpha=0$ is immediate.

The case of successor  follows from Lemma \ref{th:mult-om}(1).

Assume that $\alpha$ is a countable  limit ordinal. In this case
$\alpha=\lim_{i\in \nat}\beta_i$, where $\beta_i<\alpha$ is an
increasing $\omega$-sequence.

We  are going to show that for every $n\in \nat$ and every
$G\subseteq {Char^n_2}$, the $\omega^{\alpha}$ game for $G$ is
determinate.

For every $n$,  the set $\Ps{\Char^n_2}$ is finite. Therefore,
there is $C$ and  an increasing $\omega$-subsequence $\gamma_i$ of
$\beta_i$ such that \begin{center}
 $C=\gt_n(\om^{\gamma_i})$ for every $i$.
\end{center}
Let $\alpha_i=\omega^{\gamma_i}$.  By the inductive assumption
$\alpha_i$ are determinate. Therefore, by Lemma
\ref{lem:det-product}(C) and the equation above, for every
$G\subseteq {Char^n_2}$, the $\sum \alpha_i$ game for $G$ is
determinate.

Note that $\sum\alpha_i= \sum
\omega^{\gamma_i}=\omega^{\lim\gamma_i}=\omega^{\lim\beta_i}=\omega^{\alpha}$.
Hence, for every $n\in \nat$ and every $G\subseteq {Char^n_2}$,
the $\omega^{\alpha}$ game for $G$ is determinate. Therefore,
$\omega^{\alpha}$ is determinate.
\end{proof}
\begin{thm} \label{th:determ} Every countable   ordinal is determinate.
\end{thm}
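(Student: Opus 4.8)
The plan is to prove Theorem \ref{th:determ} by transfinite induction on the countable ordinal $\alpha$, using Cantor normal form to reduce every ordinal to the building blocks that the preceding lemmas already handle. Recall that the machinery developed so far tells us that the class of determinate ordinals is closed under the two operations we need: Theorem \ref{lem:computabl-sum}(1) gives closure under addition, and Theorem \ref{th:mult-om}(1) gives closure under multiplication by $\om$. Moreover, Lemma \ref{lem:det-om-power} has just established that $\om^{\gamma}$ is determinate for every countable $\gamma$. So the entire burden of the theorem is really carried by these three facts, and what remains is purely the bookkeeping of assembling an arbitrary countable $\alpha$ from powers of $\om$.

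Concretely, I would argue as follows. Every ordinal $\alpha>0$ has a unique Cantor normal form
\[
\alpha = \om^{\gamma_0}\cdot k_0 + \om^{\gamma_1}\cdot k_1 + \nek + \om^{\gamma_r}\cdot k_r,
\]
where $\gamma_0 > \gamma_1 > \nek > \gamma_r$ are countable ordinals and each $k_i\in\nat$ is a positive natural number. The case $\alpha=0$ is immediate (the empty game is trivially determinate). For $\alpha>0$, I first observe that each factor $\om^{\gamma_i}$ is determinate by Lemma \ref{lem:det-om-power}. Next, each term $\om^{\gamma_i}\cdot k_i$ is a finite sum $\om^{\gamma_i}+\nek+\om^{\gamma_i}$ of $k_i$ determinate ordinals, hence determinate by repeated application of Theorem \ref{lem:computabl-sum}(1). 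Finally, $\alpha$ itself is the finite sum of the determinate ordinals $\om^{\gamma_i}\cdot k_i$, and so $\alpha$ is determinate, again by Theorem \ref{lem:computabl-sum}(1).

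Note that this decomposition uses only finite addition (closure under which follows by a trivial induction on the number of summands from the binary closure in Theorem \ref{lem:computabl-sum}(1)), so Theorem \ref{th:mult-om}(1) is not even strictly needed here; its role was already absorbed into the successor case of Lemma \ref{lem:det-om-power}. The one genuinely delicate point, and the reason the proof is not entirely mechanical, is that the Cantor normal form exponents $\gamma_i$ are themselves arbitrary countable ordinals, so the appeal to Lemma \ref{lem:det-om-power} must be legitimate for \emph{all} countable $\gamma$; this is exactly what that lemma supplies, its limit case having been handled through the finiteness of $\Ps{\Char^n_2}$ and the Ramsey-type stabilization of game types along an increasing $\om$-subsequence. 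I therefore expect no real obstacle: the heavy lifting was done in Lemma \ref{lem:det-om-power}, and the present theorem is the clean Cantor-normal-form corollary that packages addition-closure around it.

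\begin{proof}
The case $\alpha = 0$ is immediate. Let $\alpha>0$ be a countable ordinal and write it in Cantor normal form as
\[
\alpha = \om^{\gamma_0}\cdot k_0 + \om^{\gamma_1}\cdot k_1 + \nek + \om^{\gamma_r}\cdot k_r,
\]
where $\gamma_0 > \gamma_1 > \nek > \gamma_r$ are countable ordinals and $k_0,\dots,k_r$ are positive natural numbers. By Lemma \ref{lem:det-om-power}, each $\om^{\gamma_i}$ is determinate. By Theorem \ref{lem:computabl-sum}(1), the class of determinate ordinals is closed under addition; a trivial induction on the number of summands then shows it is closed under arbitrary finite sums. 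Consequently each term $\om^{\gamma_i}\cdot k_i$, being the $k_i$-fold sum $\om^{\gamma_i}+\nek+\om^{\gamma_i}$, is determinate, and $\alpha$, being the finite sum of these terms, is determinate as well.
\end{proof}
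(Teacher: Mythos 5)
Your proof is correct and follows essentially the same route as the paper's: write $\alpha$ in Cantor normal form, invoke Lemma \ref{lem:det-om-power} for the powers $\om^{\gamma_i}$, and close under finite sums via Theorem \ref{lem:computabl-sum}(1). The paper's own proof is just a compressed version of this argument (and, like yours, does not need Theorem \ref{th:mult-om} at this stage, since it is already absorbed into Lemma \ref{lem:det-om-power}).
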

\begin{proof} Every  ordinal has a Cantor Normal Form  representation
$\omega^{\alpha_1}n_1+ \omega^{\alpha_2}n_2 + \cdots
+\omega^{\alpha_k}n_k$, where  $k,n_i\in \nat$. Hence, by Lemmas
\ref{lem:det-om-power} and \ref{lem:computabl-sum}, every
countable ordinal  is determinate.  \end{proof}

\section{Decidability} \label{sect:dec}
\begin{lem}\label{th:dec-mult} For every $n\in\nat$ there is
$m\in\nat$ such that $\gt_n(\om^m)=\gt_n(\om^m\times \alpha)$ for
every countable ordinal $\alpha>0$. Moreover, $m$ is computable
from $n$.
\end{lem}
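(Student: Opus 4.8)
The plan is to work inside the finite space of formal $n$-game-types $\Ps{\Ps{\Char^n_2}}$ and to track the single sequence $c_m:=\gt_n(\om^m)$. First I would record the two algebraic facts that drive everything. By Theorem \ref{th:mult-om}, $\gt_n(\beta\times\om)$ depends only on $\gt_n(\beta)$, so (using $\om^m\times\om=\om^{m+1}$) there is a computable map carrying $c_m$ to $c_{m+1}$; and by Theorem \ref{lem:computabl-sum}, $\gt_n(\beta_1+\beta_2)$ depends only on $\gt_n(\beta_1)$ and $\gt_n(\beta_2)$, giving a well-defined associative addition $+$ on game-types with $\gt_n(\beta_1)+\gt_n(\beta_2)=\gt_n(\beta_1+\beta_2)$. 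Since all countable ordinals are determinate (Theorem \ref{th:determ}), every $c_m$ is defined and computable (the base $c_1=\gt_n(\om)$ by Theorem \ref{th:bl-game}, the step by Theorem \ref{th:mult-om}). Because $\Ps{\Ps{\Char^n_2}}$ is finite, $(c_m)$ is eventually periodic: any repetition $c_a=c_b$ with $a<b$ propagates under the $\times\om$ map, so there are $m_1$ and $p\ge 1$ with $c_m=c_{m+p}$ for all $m\ge m_1$. I will also use the ordinal identity $\om^i+\om^j=\om^j$ for $i<j$, which through $+$ shows each such $c_m$ is idempotent: $c_m+c_m=c_m+c_{m+p}=\gt_n(\om^m+\om^{m+p})=c_{m+p}=c_m$.

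The heart of the argument is to upgrade eventual periodicity to eventual \emph{constancy}, i.e.\ to produce a genuine one-step fixpoint. Fix $m\ge m_1$ and consider the $\om$-sequence $\alpha_i:=\om^{\,m+ip}$. Each $\alpha_i$ is determinate and $\gt_n(\alpha_i)=c_{m+ip}=c_m=:C$, so all summands share one game-type. Their ordered sum is $\sum_{i\in\om}\om^{\,m+ip}=\om^{\om}$, since the exponents strictly increase to $\om$ and left-absorption collapses the $k$-th partial sum to $\om^{\,m+(k-1)p}$. By Lemma \ref{lem:det-product}(D), $G\in\gt_n(\sum_i\alpha_i)$ iff Player~I wins $\mathrm{Game}_\om(C,G)$; but the same lemma applied to the constant sum $\om^{m+1}=\sum_{i\in\om}\om^m$ gives exactly this characterisation of $c_{m+1}$. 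Hence $c_{m+1}=\gt_n(\om^\om)$ for \emph{every} $m\ge m_1$, so $(c_m)$ is constant, equal to $c^\ast:=\gt_n(\om^\om)$, for all $m\ge m_1+1$. In particular any $m$ with $c_m=c_{m+1}$ is a true fixpoint of $\times\om$; searching the computable sequence $c_1,c_2,\dots$ for the least such $m$ therefore terminates and produces a value computable from $n$.

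Fixing this $m$ and $C:=c_m=c^\ast$, which satisfies $C+C=C$ and $\gt_n(\om^m\times\om)=C$, I finish by transfinite induction on $\alpha>0$, proving $\gt_n(\om^m\times\alpha)=C$. The base $\alpha=1$ is the definition of $C$. For $\alpha=\beta+1$ I use $\om^m(\beta+1)=\om^m\beta+\om^m$ together with the addition rule, obtaining $\gt_n(\om^m\beta)+C=C+C=C$ from the induction hypothesis and idempotency. For limit $\alpha$ I choose $\beta_i\uparrow\alpha$, write $\alpha=\sum_i\gamma_i$ with $0<\gamma_i<\alpha$, invoke left-distributivity $\om^m\alpha=\sum_i\om^m\gamma_i$, note each summand has type $C$ by the induction hypothesis, and apply Lemma \ref{lem:det-product} to get $\gt_n(\om^m\alpha)=\gt_n(\om^m\times\om)=C$. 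The main obstacle is exactly the middle step: eventual periodicity is not enough for the limit case, and the nonobvious point is that the $\om^\om$-collapse forces the period to be $1$, supplying the single-step fixpoint that the induction requires.
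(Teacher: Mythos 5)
Your proof is correct, and much of it runs parallel to the paper's own argument: both start from the pigeonhole principle in the finite space $\Ps{\Ps{\Char^n_2}}$ of formal $n$-game-types, both extract idempotency $C+C=C$ of the stable type from the absorption law $\om^i+\om^j=\om^j$ for $i<j$, and both finish with the same transfinite induction on $\alpha$ (successor case by idempotency, limit case by writing $\om^m\alpha$ as an $\om$-sum of blocks of type $C$ and invoking Lemma \ref{lem:det-product}). The genuine divergence is in how a mere repetition $\gt_n(\om^q)=\gt_n(\om^r)$ with $q<r$ gets upgraded to a one-step fixpoint. The paper (writing $t$ for $\gt_n$) does this by a purely algebraic telescoping: $t(\om^{q+2})=t\big(\sum_{i\in\nat}(\om^{q+1}+\om^q)\big)=\big(t(\om^{q+1})+t(\om^r)\big)\times\om=t(\om^{q+1}+\om^r)\times\om=t(\om^r)\times\om=t(\om^{q+1})$, so $p=q+1$ is already a fixpoint; it can then take $m:=|\Ps{\Ps{\Char^n_2}}|$ outright, which makes computability of $m$ from $n$ immediate. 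You instead propagate the repetition to eventual periodicity and collapse the period through $\om^\om$: since $\sum_{i\in\om}\om^{m+ip}=\om^\om$ and all summands share the type $c_m$, Lemma \ref{lem:det-product}(D) applied to this sum and to $\sum_{i\in\om}\om^m=\om^{m+1}$ forces $c_{m+1}=\gt_n(\om^\om)$ for every $m\ge m_1$, hence eventual constancy, and the least fixpoint is then found by a search that your constancy result guarantees terminates. Both routes are sound and rest on the same tools (Theorems \ref{lem:computabl-sum} and \ref{th:mult-om}, determinacy of countable ordinals from Theorem \ref{th:determ}, and Lemma \ref{lem:det-product}); the paper's telescoping is shorter and yields the explicit bound $p\le m$ directly, while your argument is more structural in that it identifies the stable value as exactly $\gt_n(\om^\om)$ --- a piece of information the paper's computation leaves implicit, and one that foreshadows why $\om^\om$ is the crucial ordinal in Section \ref{sect:synth}.
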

\begin{proof}  The proof is similar to the proof of Theorem 3.5(B) in \cite{Sh75}.
For every ordinal $\alpha$, let us write $t(\alpha)$ for
$\gt_n(\alpha)$. By Lemma \ref{th:mult-om} there is a
multiplication by $\om$ operation  on the game-types of
determinate ordinals. We denote it by $\times \om$; for every
determinate ordinal $\alpha$ if  $C=t(\alpha)$, then $C\times \om$
is $t(\alpha\om))$. The multiplication by $\om$ operation is even
computable by Lemma \ref{th:mult-om}.
 Moreover, If all $\alpha_i $
($i\in \om$) are determinate and $C=t(\alpha_i)$ for all $i\in
\om$ and  \emph{fixed} $C$, then $t(\sum \alpha_i)=C\times \om$.

Let $m:=|\Ps{\Ps{\Char^n_2}}|$, i.e., the number of possible
formal $n$-game types. We are going to show that this $m$ satisfies the Lemma.

 We first show that there is a $p<m$
such that $t(\om^p)=t(\om^{p+1})$.

By the pigeon-hole principle, there are $q<r\le m$ such that
$t(\om^q)=t(\om^r)$. Moreover, by Theorem \ref{th:mult-om} we may
compute such $q$ and $r$. If $r=q+1$, our claim is proved for
$p=q$, so assume $r\ge q+2$. We will show that $p=q+1$ works. By
Theorem \ref{th:mult-om}(2),
$$t(\om^{q+2}) = t(\sum_{i\in \nat}(\om^{q+1} + \om^q))=
\big(t(\om^{q+1}+\om^q)\big)\times \om =$$
$$ \big(t(\om^{q+1})+t(\om^q)\big)\times \om =\big(t(\om^{q+1})+t(\om^r)\big)\times
\om =$$$$\big(t(\om^{q+1}+\om^r)\big)\times \om.$$
 But,
$q+1<r$, so $\om^{q+1} + \om^r = \om^r$. Thus, indeed,
$$t(\om^{q+2}) = t(\om^r)\times \om=t(\om^q)\times \om =
t(\om^{q+1}).$$

Next, we show that any $p$, as above, will satisfy the claim of
the lemma, i.e., $t(\om^p)=t(\om^p\alpha)$ for every countable
$\alpha>0$. First, note that
$$
\begin{array}{lr}
t(\om^p)+t(\om^p)=t(\om^p) + t(\om^{p+1}) = &\\
&\quad (\star\star)\\
 t(\om^p + \om^{p+1})=t(\om^{p+1})=t(\om^p).

\end{array} $$ Now, by induction on countable
$\alpha>1$, we prove $t(\om^p\cdot \alpha)=t(\om^p)$. Assume
$\alpha$ is a successor, say, $\alpha = \beta+1$. If $\beta=0$,
there is nothing to prove. Assume $\beta>0$.
$$t(\om^{p}\alpha) = t(\om^{p}\beta + \om^{p}) = t(\om^{p}\beta) + t(\om^{p})
=t(\om^{p}) + t(\om^{p}) \stackrel{(\star\star)}{=} t(\om^p).$$

Finally, assume that $\alpha$ is a limit ordinal. Then there is an
increasing  $\omega$-sequence $\tuple{\alpha_i~:~i\in \om}$ such
that $\alpha =\lim \alpha_i$. Hence,
\begin{equation}\label{eq:7}
\omega^p\times \alpha=\lim (\omega^p\times \alpha_i)=\sum
\omega^p(\alpha_{i+1}-\alpha_i).
\end{equation}

  By
the inductive assumption $t(\om^p)=t(\omega^p\times
(\alpha_{i+1}-\alpha_i))$. Hence, $t(\omega^{p+1})=t(\om^p)\times
\om =t(\sum_{i\in\om} \om^p\times
(\alpha_{i+1}-\alpha_i))=t(\om^p\alpha)$, by Lemma
\ref{lem:det-product}. Since $t(\om^p)=t(\om^{p+1}$), we derive
that $t(\om^p)=t(\om^p\times \alpha)$.

Recall that $m:=|\Ps{\Ps{\Char^n_2}}|\geq q+1=p$. Therefore,
$t(\om^m)=t(\om^{p}\om^{m-p})$ $=t(\om^{p}\om^{m-p+1})=t(\om^{m+1})$.
 Hence, $m$ satisfies the lemma.
\end{proof}
Let $n\in \nat$ and let $m=m(n)\in\nat$ be computable from $n$ as
in Lemma \ref{th:dec-mult}.
 Every ordinal $\alpha>0$ has a unique representation of the
form $\alpha=\om^m\gamma+ \omega^{m-1}n_1+ \omega^{m-2}n_2 +
\cdots +\omega^{0}n_m $, where $n_i \in \nat$.
 Define $\code_n(\alpha)$ as
$$\code_n(\alpha):= \left\{ \begin{array}{lll}
\tuple{0,n_1,n_2,\dots,n_m} & \textrm{if $\gamma=0$ }\\
\tuple{1,n_1,n_2,\dots,n_m}     & \textrm{otherwise}
 \end{array} \right..$$

 \begin{thm}[Decidability]\label{thm:dec-ch-prob}
There is an algorithm that given a formula $\vp(X_1,X_2)$ and the
$\code_{\qd{\vp}}(\alpha)$  of a countable ordinal $\alpha>0$,
determines which of the players has a winning strategy in
$\mG_\vp^\alpha$.
\end{thm}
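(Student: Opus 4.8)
The plan is to reduce the question to computing the game-type $\gt_n(\alpha)$, where $n=\qd{\vp}$, and then testing a single membership. By the reduction at the beginning of Section~\ref{sec:games}, I first use the Hintikka Lemma (Lemma~\ref{lemma:Hintikka formulas}) to compute the set $G\subseteq\Char^n_2$ for which $\vp$ is equivalent to the disjunction of the formulas in $G$; then Player~I wins $\mG_\vp^\alpha$ iff Player~I wins $\mG_G^\alpha$, i.e.\ iff $G\in\gt_n(\alpha)$. By Theorem~\ref{th:determ} every countable ordinal is determinate, so exactly one player wins and the answer is well-defined: the algorithm outputs Player~I if $G\in\gt_n(\alpha)$ and Player~II otherwise. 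It therefore suffices to show that $\gt_n(\alpha)$ is computable from $\code_n(\alpha)$.

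First I would compute the building-block game-types $\gt_n(\om^k)$ for $0\le k\le m$, where $m=m(n)$ is the threshold supplied by Lemma~\ref{th:dec-mult}. The base type $\gt_n(\om^0)=\gt_n(1)$ is computed directly: a play of $\mG_G^1$ consists of a single round, there are only four one-element $2$-chains, their $n$-types in $\Char^n_2$ are computable, and whether Player~I wins is the value of the resulting finite one-round min--max game determined by $G$. Given $\gt_n(\om^k)$, Theorem~\ref{th:mult-om}(2) computes $\gt_n(\om^{k+1})=\gt_n(\om^k\times\om)$; iterating yields $\gt_n(\om^k)$ for every $k\le m$.

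Next I would assemble $\gt_n(\alpha)$ from the code. By the definition of $\code_n$ we have $\alpha=\om^m\gamma+\om^{m-1}n_1+\cdots+\om^0 n_m$ and $\code_n(\alpha)=\tuple{b,n_1,\dots,n_m}$, where $b=1$ iff $\gamma>0$. Addition ``$+$'' of game-types is well-defined and computable on determinate ordinals by Theorem~\ref{lem:computabl-sum}, and since a finite sum of determinate ordinals is again determinate, iterating the binary ``$+$'' computes $\gt_n$ of any finite ordinal sum of determinate summands (each prefix sum is determinate, and one adds the next block by the computable binary operation). The crucial observation is that the leading block $\om^m\gamma=\om^m\times\gamma$ contributes a game-type independent of the actual value of $\gamma$: by Lemma~\ref{th:dec-mult}, $\gt_n(\om^m\gamma)=\gt_n(\om^m)$ whenever $\gamma>0$. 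Hence
$$\gt_n(\alpha)\;=\;L\;+\;\underbrace{\gt_n(\om^{m-1})+\cdots+\gt_n(\om^{m-1})}_{n_1}\;+\;\cdots\;+\;\underbrace{\gt_n(\om^{0})+\cdots+\gt_n(\om^{0})}_{n_m},$$
where $L=\gt_n(\om^m)$ when $b=1$ and $L$ is omitted when $b=0$. Every summand has already been computed and the additions are carried out left-to-right by the computable ``$+$'', producing $\gt_n(\alpha)$; the algorithm then tests $G\in\gt_n(\alpha)$.

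The key insight, and the reason a finite code suffices, is Lemma~\ref{th:dec-mult}: beyond the computable threshold $\om^m$ multiplication by any nonzero ordinal leaves the $n$-game-type fixed, so the potentially infinite ``tail'' $\om^m\gamma$ of $\alpha$ is fully captured by the single bit $b$ recording whether $\gamma=0$. I expect no real obstacle to remain, as the substantive content is already delivered by Theorems~\ref{lem:computabl-sum} and~\ref{th:mult-om} and Lemma~\ref{th:dec-mult}; the only point requiring care is the bookkeeping that iterated game-type addition matches the ordinal sum in the correct left-to-right order, so that $\gt_n$ of the whole ordinal really equals the ``$+$''-combination of the block game-types in the order dictated by the normal form underlying $\code_n$.
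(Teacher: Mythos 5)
Your proposal is correct and follows essentially the same route as the paper: both hinge on Lemma~\ref{th:dec-mult} to collapse the tail $\om^m\gamma$ to $\gt_n(\om^m)$, together with the computability of game-type addition (Theorem~\ref{lem:computabl-sum}) and multiplication by $\om$ (Theorem~\ref{th:mult-om}). The only difference is presentational: the paper replaces $\alpha$ by the representative ordinal $\om^m n_0+\om^{m-1}n_1+\cdots+\om^0 n_m<\om^\om$, which has the same code and hence the same game type, and then cites Theorem~\ref{th:shomrat}, whereas you inline that theorem's decidability content by computing $\gt_n(\alpha)$ explicitly from the building blocks $\gt_n(\om^k)$, $k\le m$.
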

\begin{proof}
 By Lemma \ref{th:dec-mult}, we have that $\gt_n(\om^m)=\gt_n(\om^m\times \gamma)$ for every $\gamma>0$.
 Therefore, by Lemma \ref{lem:computabl-sum}, if $\code_n(\alpha)=\code_n(\beta)$, then
 $\gt_n(\alpha)=\gt_n(\beta)$.

 Hence, for an input  $\vp$ and $\code_{\qd{\vp}}(\alpha)=\tuple{n_0,n_1,n_2,\dots,n_m}
 $ it is enough to decide
    which of the players wins
 the $\mG_\vp^{\om^mn_0+\om^{m-1}n_1+\dots +\om^0n_m}$. This is decidable  by Theorem
 \ref{th:shomrat}.
 \end{proof}
\section{MLO Characterisation of the Winner} \label{sect:mlo-char}
In this section we show that for every
formula $\varphi(X,Y)$  there exists
an  $\MLO$ sentence $\psi$  such that  for every   countable
ordinal  $\alpha$:  Player I wins $\mG_\vp^\alpha$ iff $
\alpha\models \psi$.
%
%
%

Our proof refines the proof of Theorem \ref{thm:dec-ch-prob}.
For every  $n\in \nat $ we will define a mapping $\gcode_n$ which assigns to every countable ordinal a code (game code).
Then, we show that (1) the  $\gcode_n(\alpha)$ determines the $\gt_n(\alpha)$ (2) for every code $c$ in the range of  $gcode_n$,
 the set $ A_c:=\{\alpha \mid \gcode_n(\alpha)=c\}$ is
definable by an $\MLO$ sentence and (3) the range of $\gcode_n$ is a finite set.
From   (1)-(3) it will be easy to show that
the set of countable ordinals for which Player I  has a winning strategy in $\mG_\vp^\alpha$ is $\MLO$ definable.

Let $\alpha=\om^mn_0+ \omega^{m-1}n_1+ \omega^{m-2}n_2 +
\cdots +\omega^{0}n_m $. Theorem \ref{thm:dec-ch-prob} shows that in order  to determine  $\gt_n(\alpha)$ we do not have to know the coefficients of
the $\om^i$ for every   sufficiently large $i$.
The next Lemma will imply that in order to determine  $\gt_n(\alpha)$  we do not need to know the precise value of the coefficients $n_i$,  even  for small  values of
$i$. It is sufficient to know these coefficients modulo a  number which depends on $n$.

Recall that $\alpha n$ denotes the multiplication of the ordinal $\alpha$ by $n$.

\begin{lem}\label{lem:fin-mult} For every $n\in\nat$ there is
$p\in\nat$ such that if $n_1>n_2\geq p$ and $n_1=n_2~mod ~p$, then $\gt_n(\alpha n_1)=\gt_n(\alpha n_2)$ for
every  ordinal $\alpha>0$. Moreover, $p$ is computable
from $n$.
\end{lem}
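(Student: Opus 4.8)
The statement asserts a finiteness/periodicity phenomenon for the coefficients $n_i$ in the Cantor-like expansion: multiplying an ordinal $\alpha$ by $n_1$ versus $n_2$ yields the same $n$-game-type provided $n_1 \equiv n_2 \pmod p$ and both are at least $p$. The plan is to argue entirely at the level of game-types, using the additive structure established in Theorem~\ref{lem:computabl-sum}. The key observation is that $\alpha n_1 = \alpha n_2 + \alpha(n_1 - n_2)$ when $n_1 > n_2$, so on the level of types $\gt_n(\alpha n_1) = \gt_n(\alpha n_2) + \gt_n(\alpha(n_1-n_2))$, where ``$+$'' is the recursive addition of $n$-game-types from Theorem~\ref{lem:computabl-sum}. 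Thus the whole question reduces to understanding the behaviour of the sequence $\langle \gt_n(\alpha j) : j \in \nat\rangle$ under this semigroup operation.

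First I would fix $n$ and $\alpha > 0$ and consider the map $j \mapsto \gt_n(\alpha j)$ taking values in the \emph{finite} set $\Ps{\Ps{\Char^n_2}}$ of formal $n$-game-types. Since the set of game-types is finite with cardinality, say, $M = |\Ps{\Ps{\Char^n_2}}|$, the pigeonhole principle gives indices $q < r \le M$ with $\gt_n(\alpha q) = \gt_n(\alpha r)$. Writing $s = r - q$, the additivity identity $\gt_n(\alpha(j+s)) = \gt_n(\alpha j) + \gt_n(\alpha s)$ shows that the sequence becomes \emph{eventually periodic} with some period dividing a quantity bounded by $M$: once two values coincide, adding the fixed type $\gt_n(\alpha s)$ repeatedly cycles through a finite loop. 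I would then take $p := M!$ (or any common multiple of all possible pre-period lengths and periods below $M$) so that $p$ is simultaneously larger than every pre-period and a multiple of every period; this single $p$ works uniformly for all $\alpha$ because the period and pre-period are each at most $M$, a bound depending only on $n$.

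The conclusion then follows: if $n_1 > n_2 \ge p$ and $n_1 \equiv n_2 \pmod p$, then since $n_2 \ge p$ exceeds the pre-period and $p$ is a multiple of the period, the eventually-periodic sequence satisfies $\gt_n(\alpha n_1) = \gt_n(\alpha n_2)$. Computability of $p$ follows because $M$ is computable from $n$ (the set $\Char^n_2$ is effectively constructible by the Hintikka Lemma~\ref{lemma:Hintikka formulas}), and the addition operation on game-types is recursive by Theorem~\ref{lem:computabl-sum}, so the period and pre-period of each relevant sequence can be effectively detected; one simply computes $p = M!$ from $n$ directly.

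The main obstacle is establishing the \emph{uniformity} in $\alpha$: the argument above produces, for each fixed $\alpha$, a pre-period and period, but I must ensure a \emph{single} $p$ works for \emph{all} $\alpha > 0$ at once. This is handled by the crucial uniform bound: both the pre-period and the period of the sequence $j \mapsto \gt_n(\alpha j)$ are bounded by $M = |\Ps{\Ps{\Char^n_2}}|$, which is independent of $\alpha$. Hence choosing $p$ as a common multiple (e.g.\ $M!$) of all integers up to $M$ neutralises the dependence on $\alpha$, and the eventual periodicity kicks in past index $p$ for every $\alpha$ simultaneously. The only subtlety to verify carefully is that the additive semigroup identity $\gt_n(\alpha(j+s)) = \gt_n(\alpha j) + \gt_n(\alpha s)$ genuinely forces eventual periodicity in a finite commutative semigroup; this is a standard fact, following from the observation that in a finite semigroup the sequence of powers $c, c+c, c+c+c, \dots$ of any fixed element $c = \gt_n(\alpha)$ is eventually periodic.
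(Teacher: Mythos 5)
Your proposal is correct and takes essentially the same route as the paper's own proof: pigeonhole on the finite set of formal $n$-game-types, the additivity of game types from Theorem~\ref{lem:computabl-sum} applied via $\alpha(j+s)=\alpha j+\alpha s$, eventual periodicity with pre-period and period bounded by $M=|\Ps{\Ps{\Char^n_2}}|$ uniformly in $\alpha$, and the choice $p:=M!$ (the paper takes $(m+1)!$). The only slip is an off-by-one in the pigeonhole step --- to force a repetition you need $M+1$ indices, i.e.\ $q<r\le M+1$ --- but this still gives $q\le M$ and $r-q\le M$, so your choice $p=M!$ goes through unchanged.
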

\begin{proof}
For every ordinal $\alpha$, let us write $t(\alpha)$ for
$\gt_n(\alpha)$.

Let
$m:=|\Ps{\Ps{\Char^n_2}}|$, i.e., the number of possible
formal $n$-game types.

By the pigeon-hole principle, there are $q<r\le m+1$ such that
$t(\alpha q)=t(\alpha r)$.
By Theorem \ref{lem:computabl-sum}, for every $n$:
$$
t(\alpha(n+q))= t(\alpha n) + t(\alpha q) = t(\alpha n) + t(\alpha r)= t(\alpha(n+r))$$
Therefore, if $n_1>n_2\geq q$ and $n_1$ is equal to $n_2$ modulo $r-q$, then $t(\alpha n_1)=t(\alpha n_2)$.
 Note that $q$ and $r$ depend on $\alpha$. However, if we define $p:=(m+1)!$, then this $p$  satisfies the lemma.
\end{proof}

Now we are ready to define $\gcode_n$.
Let $n\in \nat$ and let  $p=p(n)\in \nat $ be computable from $n$ as
in Lemma \ref{lem:fin-mult}.
Let $ \trun_n:\nat\ra \nat$ be defined as
$$trun_n(k):=
\left\{ \begin{array}{lll}
k & \textrm{if $k<p$ }\\
p+ (k ~mod~ p)      & \textrm{otherwise}
 \end{array} \right..$$

Let $m=m(n)\in\nat$ be computable from $n$ as
in Lemma \ref{th:dec-mult}.
 Every ordinal $\alpha>0$ has a unique representation of the
form $\alpha=\om^m\gamma+ \omega^{m-1}n_1+ \omega^{m-2}n_2 +
\cdots +\omega^{0}n_m $, where $n_i \in \nat$.

 Define $\gcode_n(\alpha)$ as
$$\gcode_n(\alpha):= \left\{ \begin{array}{lll}
\tuple{0, \trun_n(n_1), \trun_n(n_2),\dots,\trun_n(n_m)} & \textrm{if $\gamma=0$ }\\
\tuple{1, \trun_n(n_1), \trun_n(n_2),\dots,\trun_n(n_m)}   & \textrm{otherwise}
 \end{array} \right..$$
(Hence, the tuple $\gcode_n(\alpha) $ is obtained from $\code_n(\alpha)$ by applying $trunc_n$ pointwise.)
 By Theorem \ref{lem:computabl-sum}, Lemma \ref{th:dec-mult} and Lemma \ref{lem:fin-mult} it follows that $\gt_n(\alpha)$ is determinate by
 $\gcode_n(\alpha)$. Moreover, by  a proof similar to the proof of Theorem \ref{thm:dec-ch-prob}, we obtain the following result:
\begin{lem}[Game Code Lemma]\label{lem:gcode}
There is an algorithm that given a formula $\vp(X_1,X_2)$ and the
$\gcode_{\qd{\vp}}(\alpha)$  of a countable ordinal $\alpha>0$,
determines which of the players has a winning strategy in
$\mG_\vp^\alpha$.
\end{lem}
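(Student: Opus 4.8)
The statement to prove is the Game Code Lemma: given $\vp$ and $\gcode_{\qd{\vp}}(\alpha)$, we can decide the winner of $\mG_\vp^\alpha$. The overall strategy mirrors the proof of Theorem~\ref{thm:dec-ch-prob}, with one extra layer of coarsening coming from Lemma~\ref{lem:fin-mult}. The plan is to reduce the problem to the case of a small ordinal whose full Cantor-normal-form data is bounded, for which decidability is already available, and to argue that $\gcode_n(\alpha)$ carries exactly enough information to perform this reduction.

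First I would set $n:=\qd{\vp}$ and recall, by Lemma~\ref{lemma:Hintikka formulas}, the computable passage from $\vp$ to a set $G\subseteq\Char^n_2$ so that deciding $\mG_\vp^\alpha$ is the same as deciding $G\in\gt_n(\alpha)$. The key claim is that $\gt_n(\alpha)$ is \emph{determined} by $\gcode_n(\alpha)$, and moreover computable from it. For this I would assemble the three reductions already in hand. Writing $t(\cdot)$ for $\gt_n(\cdot)$ and taking $m=m(n)$ as in Lemma~\ref{th:dec-mult} and $p=p(n)$ as in Lemma~\ref{lem:fin-mult}, I would use the unique representation $\alpha=\om^m\gamma+\om^{m-1}n_1+\cdots+\om^0 n_m$. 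By Lemma~\ref{th:dec-mult}, $t(\om^m)=t(\om^m\times\gamma)$ for every $\gamma>0$, so the value of $\gamma$ beyond the distinction $\gamma=0$ versus $\gamma\neq0$ is irrelevant; this justifies the leading bit of $\gcode_n$. By Lemma~\ref{lem:fin-mult}, for each coefficient $n_i$ the game-type contribution $t(\om^{m-i}n_i)$ depends only on $\trun_n(n_i)$, since coefficients that agree modulo $p$ and are both $\geq p$ yield equal game-types. Finally, by Theorem~\ref{lem:computabl-sum}, $\gt_n$ of a finite ordered sum is computable from the $\gt_n$ of its summands, so $t(\alpha)$ is recovered as the iterated $+$ of the (finitely many, bounded) pieces read off from $\gcode_n(\alpha)$.

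Concretely, then, the algorithm is: from $\gcode_{\qd{\vp}}(\alpha)=\tuple{b,c_1,\dots,c_m}$ reconstruct a \emph{canonical representative} ordinal $\alpha'$ whose code is this tuple—e.g. $\alpha'=\om^m\cdot b+\om^{m-1}c_1+\cdots+\om^0 c_m$—and then decide $\mG_\vp^{\alpha'}$. The preceding paragraph shows $\gt_n(\alpha)=\gt_n(\alpha')$, so the answer for $\alpha'$ is the answer for $\alpha$. Since $\alpha'<\om^{m+1}<\om^\om$, deciding $\mG_\vp^{\alpha'}$ is exactly the content of Theorem~\ref{th:shomrat} (Decidability), which applies to all ordinals below $\om^\om$. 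Because $m$ and $p$ are computable from $n$, the passage $\gcode_n(\alpha)\mapsto\alpha'$ is effective, and Theorem~\ref{th:shomrat} supplies an algorithm on the resulting small ordinal, so the whole procedure is an algorithm.

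The main obstacle is establishing the invariance claim cleanly, namely that $\gt_n(\alpha)=\gt_n(\alpha')$ when $\gcode_n(\alpha)=\gcode_n(\alpha')$. The subtlety is that Lemma~\ref{lem:fin-mult} coarsens each coefficient only when it is at least $p$, while Lemma~\ref{th:dec-mult} coarsens only the tail beyond $\om^m$; I must verify these two collapses compose correctly and that the additive reconstruction via Theorem~\ref{lem:computabl-sum} respects both simultaneously, using the additivity identity $t(\mu+\nu)=t(\mu)+t(\nu)$ together with $t(\om^{m-i}c_i)=t(\om^{m-i}n_i)$ and $t(\om^m b)=t(\om^m\gamma)$ for each summand. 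Once this compositional bookkeeping is checked—essentially the same pigeonhole-and-addition argument already used in Theorem~\ref{thm:dec-ch-prob}, now applied coefficientwise—the decidability statement follows immediately, as the paper itself indicates by noting that the result is obtained ``by a proof similar to the proof of Theorem~\ref{thm:dec-ch-prob}.''
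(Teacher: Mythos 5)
Your proposal is correct and takes essentially the same route as the paper: the paper also derives the lemma by noting that Theorem~\ref{lem:computabl-sum}, Lemma~\ref{th:dec-mult} and Lemma~\ref{lem:fin-mult} together imply that $\gcode_n(\alpha)$ determines $\gt_n(\alpha)$, and then repeating the argument of Theorem~\ref{thm:dec-ch-prob}, i.e., replacing $\alpha$ by the representative ordinal below $\om^\om$ read off from the code and invoking Theorem~\ref{th:shomrat}. Your write-up simply makes explicit the coefficientwise bookkeeping that the paper leaves implicit.
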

\begin{lem}\label{lem:gcode-def}
For every $c$ in the range of $\gcode_n$ there is a sentence $\psi_c$ such that
$\alpha \models \psi_c$ iff  $\gcode_n(\alpha)=c$.
\end{lem}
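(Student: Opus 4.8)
The plan is to produce, for each code $c$ in the (finite) range of $\gcode_n$, an $\MLO$ sentence $\psi_c$ that is true in an ordinal $\alpha$ exactly when $\gcode_n(\alpha)=c$. Recall that a code has the form $c=\tuple{\epsilon, k_1,\dots,k_m}$ with $\epsilon\in\{0,1\}$ and each $k_j\in\{0,1,\dots,2p-1\}$ (the possible values of $\trun_n$), where $m=m(n)$ and $p=p(n)$ are the constants from Lemma~\ref{th:dec-mult} and Lemma~\ref{lem:fin-mult}. Writing $\alpha=\om^m\gamma+\om^{m-1}n_1+\cdots+\om^0 n_m$, the code records whether $\gamma=0$ (the flag $\epsilon$) together with $\trun_n(n_j)$ for each $j$. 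So the task reduces to expressing in $\MLO$ each of the following facts about $\alpha$: that the coefficient of $\om^{m-j}$ in the Cantor Normal Form equals a fixed natural number $n_j$ (when $k_j<p$), or is $\geq p$ and congruent to $k_j-p$ modulo $p$ (when $k_j\geq p$); and that the ``tail'' above $\om^m$ either vanishes ($\gamma=0$) or not.

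First I would recall the standard fact, implicit in the Code Theorem (Theorem~\ref{thm:uniform decidability of MTh below om_1}) and in B\"uchi's work on the monadic theory of ordinals, that all the relevant ordinal-arithmetic predicates are $\MLO$-definable over a linear order that happens to be an ordinal. Concretely, for each fixed $i$ there is a sentence saying ``the order type is $\geq\om^i$'' and a sentence saying ``the order type is exactly $\om^i$'', and more generally one can define, by a first-order formula with a bounded number of quantifiers, the statement that the coefficient of $\om^{i}$ in the CNF is a specific finite value, or lies in a fixed residue class modulo $p$ among the values $\geq p$. The key device is relativization: using a first-order variable $t$ to mark a cut, the formula $\vp^{<t}$ (relativizing quantifiers to the initial segment below $t$, exactly as defined just before Lemma~\ref{lem:def-sum}) lets us speak about the order type of an initial segment, and by quantifying over finitely many successive cut-points one can isolate each CNF block $\om^{m-j}n_j$ and assert the desired arithmetic property of its length. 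Definability of ``being a limit of a given exponential shape'' and of finite multiplicities is routine in this setting and is precisely the content underlying the decidability results already invoked.

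The construction then assembles $\psi_c$ as a finite conjunction: one conjunct fixes $\gamma=0$ or $\gamma\neq 0$ according to $\epsilon$ (i.e.\ whether $\alpha<\om^m$ or $\alpha\geq\om^m$, which is $\MLO$-expressible since $\om^m$ is a definable ordinal for fixed $m$); and for each $j\in\{1,\dots,m\}$ one conjunct pins down $\trun_n(n_j)=k_j$, which unwinds to either ``$n_j=k_j$'' (a definable exact-multiplicity statement) when $k_j<p$, or ``$n_j\geq p$ and $n_j\equiv k_j-p\ (\mathrm{mod}\ p)$'' when $k_j\geq p$ (a definable threshold-plus-congruence statement). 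Because $\gcode_n$ partitions the ordinals $>0$ into the finitely many code-classes, these $\psi_c$ together with the flag conjunct are mutually exclusive and their disjunction over all $c$ in the range is valid, so each $\psi_c$ indeed defines $A_c=\{\alpha\mid\gcode_n(\alpha)=c\}$.

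The main obstacle is the claim that the coefficient predicates are $\MLO$-definable uniformly in $\MLO$ over ordinal chains --- in particular the congruence-modulo-$p$ condition on a finite multiplicity $n_j$, which requires counting the number of consecutive copies of $\om^{m-j}$ modulo $p$. I would handle this by reducing to two primitives: (i) for each fixed constant $c_0$, the sentence ``the coefficient of $\om^{i}$ equals $c_0$'', obtained by locating the block of exponent $i$ via cuts and counting up to $c_0$ (first-order, since $c_0$ is fixed); and (ii) the congruence condition, obtained from the $\MLO$-definability of ``divisibility by $p$ of the length of a block of order type $\om^{i}k$,'' which follows because one can $\MLO$-define, inside an $\om^{i}$-indexed block, a monadic colouring cycling with period $p$ and assert that the final colour matches the prescribed residue; the existence of such a definable periodic marking over ordinals is exactly the kind of fact guaranteed by the composition machinery and by B\"uchi's analysis of the monadic theory of ordinals. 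Once these two primitives are in hand, the assembly of $\psi_c$ is purely formal.
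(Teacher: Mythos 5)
Your proposal is correct and follows essentially the same route as the paper's (very brief) proof sketch: both reduce the lemma to the $\MLO$-definability of Cantor-Normal-Form coefficient predicates --- isolating the $\om^{i}$-blocks (the paper via a formula $Mult_{\om^i}$ for multiples of $\om^i$, you via cuts and relativization) and counting multiplicities exactly or modulo $p$ via cyclic monadic colourings (the paper's $Mod_{p,k}$) --- and then assemble $\psi_c$ as a Boolean combination over the finitely many entries of the code. Your write-up is in fact more detailed than the paper's three-line sketch, and the extra care you take with the flag conjunct ($\gamma=0$ iff $\alpha<\om^m$) and the unwinding of $\trun_n$ into ``exact value'' versus ``threshold plus congruence'' matches the intended argument.
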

\begin{proof}(Sketch) For every $i\in\nat$ there is an $\MLO$ formula $Mult_{\om^i}(X)$  which says that $X$ is a subset of the order
type $\om^i\gamma$ for an ordinal  $\gamma>0$.
For every $k<p\in \nat$ there is an $\MLO$  formula $Mod_{p,k}(X,Y)$ which says that $X$ contains $k ~mod ~p$ occurrences of $Y$.
Using these formulas it is not difficult to formalize the definition of $\gcode_n$  and to write a desirable formula $\psi_c$.
\end{proof}
Now we are ready to provide an $\MLO$ characterization of  the winner.
\begin{thm}
There is an algorithm that given a formula $\vp(X_1,X_2)$  computes a sentence $Win_\varphi$ such that for every countable ordinal $\alpha$:
Player I wins  $\mG_\vp^\alpha$ if and only if $\alpha \models Win_\vp$.

\end{thm}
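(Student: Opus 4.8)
The plan is to assemble the sentence $\win_\vp$ from the pieces already developed in this section, using the three properties of $\gcode_n$ established in Lemmas \ref{lem:gcode}, \ref{lem:gcode-def} and the finiteness of the range of $\gcode_n$. Set $n:=\qd{\vp}$. First I would invoke the finiteness claim: by Lemma \ref{th:dec-mult} and Lemma \ref{lem:fin-mult} the parameters $m=m(n)$ and $p=p(n)$ are fixed, and since $\trun_n$ takes values in $\{0,1,\dots,2p-1\}$, the tuple $\gcode_n(\alpha)$ has length $m+1$ with each coordinate bounded; hence the range $R_n$ of $\gcode_n$ is a \emph{finite} set, and it is computable.

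Next I would split $R_n$ into winning and losing codes. For each $c\in R_n$, feed $c$ together with $\vp$ into the algorithm of the Game Code Lemma (Lemma \ref{lem:gcode}): this decides whether Player I wins $\mG_\vp^\alpha$ for every $\alpha$ with $\gcode_n(\alpha)=c$. This is well defined precisely because, as noted after the definition of $\gcode_n$, the value $\gt_n(\alpha)$ is determined by $\gcode_n(\alpha)$, so the winner depends only on $c$ and not on the particular $\alpha$. Let $W_n:=\{c\in R_n\mid \text{Player I wins }\mG_\vp^\alpha\text{ whenever }\gcode_n(\alpha)=c\}$; this is a finite, computable subset of $R_n$.

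Then I would assemble the sentence. By Lemma \ref{lem:gcode-def}, for each $c\in R_n$ there is a computable sentence $\psi_c$ with $\alpha\models\psi_c$ iff $\gcode_n(\alpha)=c$. Define
$$\win_\vp := \bigvee_{c\in W_n}\psi_c\ .$$
Since $W_n$ is finite this is a genuine $\MLO$ sentence, and it is computable from $\vp$ because $R_n$, the classification via Lemma \ref{lem:gcode}, and each $\psi_c$ are all computable. For correctness, fix a countable $\alpha>0$ and let $c:=\gcode_n(\alpha)\in R_n$. Then $\alpha\models\win_\vp$ iff $\alpha\models\psi_{c'}$ for some $c'\in W_n$; because the $\psi_{c'}$ partition the ordinals by their $\gcode_n$-value, this holds iff $c\in W_n$, which by definition of $W_n$ holds iff Player I wins $\mG_\vp^\alpha$. (The degenerate case $\alpha=0$ can be handled separately, since the games are only defined for $\alpha>0$.)

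The work is almost entirely bookkeeping: every genuinely difficult ingredient has been isolated into the preceding lemmas. The one point deserving care is the \emph{uniformity} claim that the winner truly depends only on $\gcode_n(\alpha)$ and not on $\alpha$ itself; this is exactly the content guaranteed by the remark that $\gt_n(\alpha)$ is determined by $\gcode_n(\alpha)$ (via Theorem \ref{lem:computabl-sum}, Lemma \ref{th:dec-mult} and Lemma \ref{lem:fin-mult}), combined with the evident fact that the identity of the winner of $\mG_\vp^\alpha$ is itself a function of $\gt_n(\alpha)$. Hence the only real obstacle is ensuring this determination is stated cleanly; no further combinatorial or game-theoretic effort is required.
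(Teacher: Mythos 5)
Your proposal is correct and follows essentially the same route as the paper's own proof: your $W_n$ is exactly the paper's set $C_\vp$ of winning codes, and both arguments form $\win_\vp$ as the disjunction of the sentences $\psi_c$ from Lemma \ref{lem:gcode-def} over that finite, computable set, with computability supplied by the Game Code Lemma (Lemma \ref{lem:gcode}). Your write-up merely makes explicit two points the paper leaves implicit, namely the finiteness of the range of $\gcode_n$ and the fact that the winner depends only on the code.
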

\begin{proof}
Let $n$ be the quantifier depth of $\vp$.
Let $C_\vp$ be defined as
$$C_\vp:=\{c \mbox{ is a }\gcode_n \mid \mbox{ Player I wins } \mG_\vp^\alpha \mbox{~if }\gcode_n(\alpha)=c  \}$$
Since the range of $\gcode_n$ is finite, the set $C_\vp$ is finite and computable by Lemma \ref{lem:gcode}.
For every $c\in C_\vp$   compute $\psi_c$ as in Lemma \ref{lem:gcode-def}.
Hence,   $Win_\vp$  defined as
$$Win_\vp:=\bigvee_{c\in C_\vp} \psi_c$$
satisfies the lemma.
\end{proof}
\section{Synthesis Problem}\label{sect:synth}
  Let $\alpha$ be an ordinal.
  In this section we address the following  synthesis problem:

\begin{enumerate}[\hbox to6 pt{\hfill}]
\item\noindent{\hskip-0 pt\bf Problem Synth($\alpha$):}\
\begin{enumerate}[\hbox to6 pt{\hfill}]
\item\noindent{\hskip-0 pt\bf Input:}\
   a formula $\vp(X_1,X_2)$,
\item\noindent{\hskip-0 pt\bf Task:}\ decide whether one of the players  has a definable winning strategy in
$\mG_\vp^\alpha$, and if so, construct $\psi$ which defines his
winning strategy. \end{enumerate}\end{enumerate} The decidability version of
the synthesis problem for $\alpha$ requires  only to decide
whether one of the players has a definable winning strategy in
$\mG_\vp^\alpha$ (but does not output it). We will  denote  this version by
Dsynth($\alpha)$.

By Theorem \ref{th:shomrat}, these problems are computable  for
$\alpha<\om^\om$. Note Dsynth($\alpha)$ is trivial  for
$\alpha<\om^\om$, because for these ordinals the winning player
has a definable winning strategy.

As mentioned in the introduction, for every  $\alpha\geq \om^\om$
there is a formula $\psi_\alpha$ such that Player I wins
$\mG_{\psi_{\alpha}}^\alpha$; however, he has no definable winning
strategy. Therefore, Dsynth($\alpha)$ is non-trivial for
$\alpha\geq\om^\om$.

Unfortunately, we were unable to show that the synthesis problems
are computable for every countable ordinal. However, we   show here
that a crucial ordinal is $\om^\om$.

First, note
\begin{lem}\label{lem:win}  There is an algorithm which for formulas
$\psi(X_1,X_2) $ and $\vp(X_1,X_2)$ constructs sentences
$\win_I^{\vp,\psi}$ and $\win^{\vp,\psi}_{II}$ such that for every
ordinal $\alpha$:
\begin{enumerate}[\em(1)]
\item $\alpha \models \win_I^{\vp,\psi}$ iff $\psi$ defines a winning strategy for Player I in
$\mG^\alpha_\vp$.
\item $\alpha\models \win_{II}^{\vp,\psi}$ iff $\psi$ defines a winning strategy for Player II in
$\mG^\alpha_\vp$.
\end{enumerate}
\end{lem}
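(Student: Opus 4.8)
The plan is to express the two assertions ``$\psi$ defines a winning strategy for Player I (resp.\ Player II) in $\mG_\vp^\alpha$'' directly as $\MLO$ sentences, using the fact that the definitions of causal/strongly causal operator, of the winning condition $\vp$, and of the relevant quantifications over plays are all themselves expressible in $\MLO$. The key observation is that everything appearing in the definition of a winning strategy is a statement about subsets of the ordinal $\alpha$, and $\MLO$ can quantify over such subsets. First I would recall that, by definition, $\psi(X_1,X_2)$ \emph{defines a function} $F:\Ps{\alpha}\rar\Ps{\alpha}$ exactly when $\alpha\models \forall X_2\exists^{\le 1}X_1\,\psi(X_1,X_2)$ together with totality $\forall X_2\exists X_1\,\psi(X_1,X_2)$; call the conjunction $\mathit{Func}(\psi)$.

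Next I would encode strong causality. For a strongly causal operator one needs: for all $X_1,X_1',X_2,X_2'$, if $X_2$ and $X_2'$ agree on $[0,t)$ and $\psi(X_1,X_2)$, $\psi(X_1',X_2')$ hold, then $X_1$ and $X_1'$ agree on $[0,t]$. This is a single $\MLO$ sentence once ``agree on $[0,t)$'' is spelled out as $\forall u(u<t\rar(X_2(u)\leftrightarrow X_2'(u)))$. Call it $\mathit{SCausal}(\psi)$. The analogous sentence $\mathit{Causal}(\psi)$ for Player II replaces the strict initial segment on the output side by the non-strict one. Then
\[
  \win_I^{\vp,\psi}:=\mathit{Func}(\psi)\we\mathit{SCausal}(\psi)\we
     \forall X_2\forall X_1\bigl(\psi(X_1,X_2)\rar\vp(X_1,X_2)\bigr)
\]
says precisely that $\psi$ defines a strongly causal total function whose graph witnesses $\vp$ for every Player~II play, which by the definitions in the paper is exactly ``$\psi$ defines a winning strategy for Player~I''. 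Dually,
\[
  \win_{II}^{\vp,\psi}:=\mathit{Func}(\psi)\we\mathit{Causal}(\psi)\we
     \forall X_1\forall X_2\bigl(\psi(X_2,X_1)\rar\neg\vp(X_1,X_2)\bigr),
\]
taking care that Player~II's strategy maps $P\mapsto f(P)$ with $P$ the input and $f(P)$ the response, so the variable roles are swapped relative to Player~I.

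Finally I would verify that the resulting sentences satisfy the stated biconditionals for \emph{every} ordinal $\alpha$ simultaneously, which is immediate because each conjunct was written as a transcription of the corresponding semantic clause and contains no reference to $\alpha$ beyond the implicit domain of quantification; the sentences are computable from $\vp$ and $\psi$ by mechanical syntactic construction, giving the required algorithm. The main obstacle is bookkeeping rather than mathematics: one must be scrupulous that the intended reading of ``strategy'' matches the \emph{Causal operator} definition of the paper (in particular the strict-versus-non-strict initial segments, and which player's moves are inputs and which are outputs), and that the variable conventions for Player~II---where the strategy is a causal map from Player~I's play to Player~II's response---are threaded correctly through $\mathit{Causal}(\psi)$ and the clause $\psi(X_2,X_1)\rar\neg\vp(X_1,X_2)$. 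Once those conventions are pinned down, each clause is a routine $\MLO$ formula and the correctness proof is just unwinding the definitions.
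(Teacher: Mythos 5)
Your proposal is correct and takes essentially the same approach as the paper: the paper's proof likewise defines $\win_I^{\vp,\psi}$ as the conjunction of a sentence asserting that $\psi$ defines a strongly causal operator with the sentence $\forall X_2\forall X_1(\psi(X_1,X_2)\rar \vp(X_1,X_2))$, and treats Player II ``similarly,'' so your version just spells out the functionality and causality conjuncts explicitly. One small slip in your wording: passing from strong causality to causality relaxes the strict initial segment on the \emph{input} (hypothesis) side from $[0,t)$ to $[0,t]$, not on the output side, which in both cases is the non-strict segment $[0,t]$.
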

\begin{proof}
$\win_I^{\vp,\psi}$ is the conjunction of the sentence that says
that $\psi$ defines a strongly causal operator and the sentence
$\forall X_2X_1\psi(X_1,X_2)\ra \vp(X_1,X_2)$.

$\win_{II}^{\vp,\psi}$ is defined similarly.
\end{proof}

Recall  (see Section \ref{sec:ord}) that  the monadic theory of a
countable ordinal $\alpha$ is definable from Code$(\alpha)$. From
Lemma \ref{lem:win} we deduce:
\begin{lem}\label{lem:code-red-synth}\hfill
\begin{enumerate}[\em(1)]
\item
The problems Synth($\alpha)$ and Dsynth($\alpha)$ are recursive in
each other. \item If Code$(\alpha_1)$=Code$(\alpha_2)$, then the
problems Synth($\alpha_1)$ and Synth($\alpha_2)$ are equivalent,
and the problems Dsynth($\alpha_1)$ and Dsynth($\alpha_2)$ are
equivalent.
\end{enumerate}
\end{lem}
\proof\hfill
\begin{enumerate}[(1)]
\item
It is clear that Dsynth($\alpha)$ is recursive in Synth($\alpha$).
We will show that Synth($\alpha)$ is recursive in
Dsynth($\alpha$). Let $\psi_1,\dots \psi_i\dots$ be a recursive
enumeration of all formulas. Let $\vp$ be an input for
 Synth($\alpha$). First, check   Dsynth$(\alpha)$  on  input $\vp$.
 If the answer is ``No", then neither of the players has a definable
 winning strategy. If the answer is ``Yes", then for $i=1,\dots$ use Lemma
 \ref{lem:win} and Theorem \ref{thm:uniform decidability of MTh below om_1}
 to verify whether
  $\psi_i$ is a definable winning strategy for one of the players. When  such $\psi_i$ is found,
  output it and terminate.
The correctness of the reduction is immediate.

 \item If  Code$(\alpha_1)$=Code$(\alpha_2)$, then by Theorem \ref{thm:uniform decidability of MTh below
 om_1}, $\alpha_1$ and $\alpha_2$ satisfy the same monadic
 sentences. Hence, by Lemma \ref{lem:win}, $\psi$ is a definable winning strategy
 in $\mG^{\alpha_1}_\vp$ iff $\psi$ is a definable winning strategy
 in $\mG^{\alpha_2}_\vp$.\qed
\end{enumerate}

The following lemma provides a reduction from Dsynth$(\alpha)$ to
Dsynth$(\om^\om)$}. Hence,  Synth$(\alpha)$ is reducible to
Synth$(\om^\om)$.

\begin{lem}
 \label{lem:red=todecid} There is an algorithm that given the $Code(\alpha)$ of a  countable  $\alpha>\om^\om$
 and $\vp_1(X_1,X_2)$ constructs $\vp_2(X_1,X_2)$ such that
a player  has a definable winning strategy in
$\mG_{\vp_2}^{\om^\om}$ iff  he  has a definable winning strategy
in $\mG_{\vp_1}^\alpha$.
\end{lem}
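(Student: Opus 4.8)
The plan is to peel off the part of $\alpha$ lying below $\om^\om$ and absorb it into the winning condition of a length-$\om^\om$ game, so that the entire definability burden is carried by the initial $\om^\om$-segment. Write $\alpha=\om^\om\beta+\zeta$ with $\beta\ge 1$ and $\zeta<\om^\om$; both $\zeta$ (via its Cantor coefficients) and the bit ``$\beta\ne 0$'' can be read off from $Code(\alpha)$ as in Definition \ref{dfn:code of an ordinal}. Since then $Code(\om^\om+\zeta)=Code(\alpha)$, Lemma \ref{lem:code-red-synth}(2) lets me replace $\alpha$ by $\alpha':=\om^\om+\zeta$ throughout, so it suffices to treat $\alpha=\om^\om+\zeta$. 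If $\zeta=0$ there is nothing to do ($\vp_2:=\vp_1$ works), so assume $\zeta>0$; note $\zeta$, being ${<}\om^\om$, is definable by a computable sentence $\theta_\zeta$.

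Let $n=\qd{\vp_1}$ and compute $G\s\Char^n_2$ with $\vp_1$ equivalent to $\bigvee_{\tau\in G}\tau$ (Lemma \ref{lemma:Hintikka formulas}). Compute $C_\zeta:=\gt_n(\zeta)$, which is possible because $\zeta<\om^\om$ (for instance via Theorem \ref{thm:dec-ch-prob} applied to each $G'\s\Char^n_2$, or by the game-type Addition and Multiplication Theorems), and then the derived condition $K:=K(C_\zeta,G)$ of Lemma \ref{lem:sum-main}, using the computable addition ``$+$'' of game types. Set $\vp_2:=\bigvee_{\tau\in K}\tau$. By Lemma \ref{lem:sum-main} the winner of $\mG^{\om^\om}_{\vp_2}$ coincides with the winner of $\mG^{\alpha}_{\vp_1}$, so by determinacy it suffices to show, for the winning player $P$, that $P$ has a definable winning strategy in $\mG^{\om^\om}_{\vp_2}$ iff $P$ has one in $\mG^{\alpha}_{\vp_1}$.

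For the forward direction I would adapt the assembly of Lemma \ref{lem:def-sum}. By Theorem \ref{th:shomrat} every game $\mG^\zeta_{G_i}$ (for $G_i\in C_\zeta$) has a computable definable winning strategy $\chi_i$. Given a formula $\psi$ defining a winning strategy for $P$ in $\mG^{\om^\om}_{\vp_2}$, I splice $\psi$ on the initial $\om^\om$-block with the $\chi_i$ on the final $\zeta$-block. The one new point is that the cut between the blocks cannot be located by a formula $\theta_{\om^\om}^{<t}$, since $\om^\om$ is \emph{not} definable; instead I locate it from the \emph{right}, using that the suffix has the definable type $\zeta$, i.e.\ by the condition $\theta_\zeta^{\ge t}$, which (as $\zeta>0$) pins down a unique $t$. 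The resulting formula, of the shape $\exists t\,(\theta_\zeta^{\ge t}\wedge \psi^{<t}\wedge \bigwedge_{\tau\in K}(\tau^{<t}\ra\chi_{i_\tau}^{\ge t}))$, defines a winning strategy for $P$ in $\mG^{\alpha}_{\vp_1}$, exactly as in Lemma \ref{lem:def-sum}.

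The backward direction is where the real work lies, and I expect it to be the main obstacle. Here I must turn a formula $\psi$ defining a winning strategy for $P$ in $\mG^{\alpha}_{\vp_1}$ into one for $\mG^{\om^\om}_{\vp_2}$. The idea is to \emph{simulate} a length-$\om^\om$ play inside a length-$\alpha$ play: extend any $\om^\om$-play canonically over the $\zeta$-tail by having the players follow the definable responses associated with the $\zeta$-games, run $\psi$ on the extended $\alpha$-play, and read off $P$'s moves on the $\om^\om$-prefix. Because $\theta_\zeta$, the tail responses and $\psi$ are all MLO-definable, the induced prefix strategy is definable; its winning property, namely that the realized $\om^\om$-type lands in $K$, should follow from the type-correspondence of Lemma \ref{lem:sum-main}. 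The delicate part is to arrange a single definable extension of the tail that forces, for \emph{every} opponent behaviour on the prefix, an $\om^\om$-type in $K=\bigcup_{G_i}K(G_i,G)$, and to verify the causality of the projected strategy; establishing this uniformity is the crux. Combining both directions with the Code-reduction of the first paragraph then yields the algorithm producing $\vp_2$ from $Code(\alpha)$ and $\vp_1$.
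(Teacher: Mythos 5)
Your setup coincides with the paper's proof: the reduction to $\alpha_1=\om^\om+\zeta$ via equality of codes and Lemma \ref{lem:code-red-synth}(2), the definition of $\vp_2$ as the disjunction of $K(C_\zeta,G)$, and the splicing construction (as in Lemma \ref{lem:def-sum}, with the cut located from the right) for passing from a definable strategy in $\mG^{\om^\om}_{\vp_2}$ to one in $\mG^{\alpha_1}_{\vp_1}$. But the direction you yourself flag as ``the crux'' --- from a formula $\psi$ defining a winning strategy $F$ in $\mG^{\alpha_1}_{\vp_1}$ to a definable winning strategy in $\mG^{\om^\om}_{\vp_2}$ --- is a genuine gap, and your simulation plan attacks it the wrong way, because it conflates two issues the paper keeps separate. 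The desired strategy is simply the restriction $F_1$ of $F$ to the first $\om^\om$ rounds: causality makes $F_1$ well defined (the prefix of $F(Q)$ depends only on the prefix of $Q$), and $F_1$ is winning for the condition $K(C_\zeta,G)$ by the purely game-theoretic argument in the proof of Lemma \ref{lem:sum-main} --- if some prefix play against $F_1$ realized a type $\tau\notin K(C_\zeta,G)$, then by determinacy of $\zeta$-games Player II could continue on the tail so as to realize some $\tau'$ with $\tau+\tau'\notin G$, defeating $F$. So nothing about the tail needs to be ``forced'' uniformly by a definable extension; the winningness of $F_1$ has nothing to do with its defining formula. What remains is only the definability of $F_1$, and the paper gets it by a type projection, not a simulation: with $n\geq\max(\qd{\vp_1},\qd{\psi})$, let $\Delta_1$ be the set of $\tau\in\Char^n_2$ such that for some $\tau'\in\Char^n_2$ realizable over the tail (i.e.\ $\zeta\models\exists X_1\exists X_2\,\tau'$) one has $\tau+\tau'\ra\psi$. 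By the Addition Theorem (Theorem \ref{lemma:addition of types computable}) and the decidability of $\MTh(\zeta)$, the set $\Delta_1$ is computable, and the disjunction of $\Delta_1$ defines $F_1$ over $\om^\om$: if $\HF_n(\om^\om,P_0,Q_0)\in\Delta_1$, realizing the witness $\tau'$ on the tail produces a pair satisfying $\psi$, whence $P_0=F_1(Q_0)$ by causality; conversely $F_1(Q_0)$ always satisfies the disjunction. This projection is the idea missing from your proposal.

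A second, smaller slip: in your splicing formula the condition $\theta_\zeta^{\ge t}$ does \emph{not} pin down a unique $t$. For instance, with $\zeta=\om$ every point $t=\om^\om+k$ ($k\in\nat$) has suffix of order type $\om$, so as written your defining relation need not be functional. You must take the \emph{least} $t$ whose suffix $[t,\alpha_1)$ is isomorphic to $\zeta$; since $\om^\om$ is additively indecomposable, this least point is exactly $\om^\om$, and this minimality clause is precisely the formula $\theta(t)$ used in the paper.
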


\begin{proof}
Let $\alpha=\om^\om\alpha' + \beta$ with $\beta<\om^\om$ (this can
be done in a unique way), and let  $\alpha_1= \om^\om + \beta$.

Code($\alpha$)=Code($\alpha_1)$, therefore, by lemma
\ref{lem:code-red-synth},  $\psi$ defines a winning strategy in
$\mG_{\vp_1}^\alpha$ iff it defines a winning strategy in
$\mG_{\vp_1}^{\alpha_1}$.

 Let $n$ be the  quantifier depth of $\vp_1$.  Compute $G\subseteq \Char^n_2$
such that $\vp_1$ is equivalent to the disjunction of formulas
from $G$. Let $C_\beta$ and $K(C_\beta,G)$ be defined as in Lemma
\ref{lem:sum-main}. Let $\vp_2$ be the disjunction of formulas
from $K(C_\beta,G)$. Note that $\vp_2$ is computable from
code($\alpha)$ and $\vp_1$.

 We claim that $\vp_2$ satisfies the conclusion of the Lemma.

 Indeed, from the proof of Lemma \ref{lem:sum-main}, it follows
 that  if $F$ is a winning strategy in $\mG_{\vp_1}^{\om^\om+\beta}$,
 then its first  $\om^\om$ rounds is a winning strategy
 $\mG_{K(C_\beta,G)}^{\om^\om}=\mG_{\vp_2}^{\om^\om}$.

Assume that  $\psi(X_1,X_2)$ defines $F$ in $\om^\om+\beta$. We
are going to construct $\psi_2$ which defines  in $\om^\om$ the
strategy $F_1$ which plays $F$  first $\om^\om$ rounds.

Let $\Delta =\{\tuple{\tau,\tau'}\in\Char^n_2\times
\Char^n_2~:~\tau+\tau'\ra \psi\}$.  Define  $\Delta_1$ as $\Delta_1:=\{\tau~:~
\exists\tau' ~:~\tuple{\tau,\tau'}\in \Delta \mbox{ and }
\beta\models \exists X_1 X_2 \tau'\}$. Note that $\Delta$ and
$\Delta_1$ are computable.

From Theorem \ref{lemma:addition of types computable}, it follows
 that the disjunction of formulas from $\Delta_1$ defines $F_1$.

For the other direction, assume that there is a definable winning
strategy  $F_1$ for  Player I in   $\mG_{K(C_\beta,G)}^{\om^\om}$.
Note that for every $\tau\in {K(C_\beta,G)}$ there is a winning
strategy for Player I  in the $\beta$ game for
$H_{\tau}=\{\tau'~:~\tau+\tau'\in G\}$. Since $\beta<\om^\om$
Player I has a definable winning strategy $F_\tau$ in the $\beta$
game for $H_\tau$. The definable winning strategy for
$\mG_{G}^{\om^\om+\beta}$ is constructed from $F_1$ and the
strategies $F_\tau$ as in Lemma \ref{lem:def-sum}. The only subtle
point is that the ordinal $\om^\om$ is not definable. However,
there is a formula $\theta(t)$ such that
$\alpha_1\models\theta(\gamma)$ iff $\gamma=\om^\om$ (this formula
expresses  that $\gamma$ is the minimal ordinal such that the
interval $[\gamma,\alpha_1)$ is isomorphic to (a definable
ordinal) $\beta$.

  The case when
there is a definable winning strategy   for  Player II in
$\mG_{K(C_\beta,G)}^{\om^\om}$ is similar.
\end{proof}

\section{Conclusion and Open Problems}\label{sect:concl}
We considered a natural extension of the Church Problem to
countable ordinals. We proved that the  B\"{u}chi-Landweber
theorem extends fully to all ordinals $<\om^\om$ and that its
determinacy and decidability  parts extend to all countable
ordinals. We reduced the synthesis problem for countable  ordinals
$>\om^\om$ to the synthesis problem for $\om^\om$. However, the
decidability of the synthesis problems for $\om^\om$ remains open.

In preliminary version of this paper we asked  whether the first
uncountable ordinal $\om_1$ is determinate.
For uncountable ordinals the situation changes radically.
Let $\vp_{\mathit{spl}}(X,Y)$ say:
``$X$ is stationary, $Y\s X$ and both $Y$ and $X\sm Y$ are stationary''
(recall  that  $S\s \om_1$ is called \emph{stationary} iff for every closed unbounded  $C\s \om_1$, $S\cap C\neq \emptyset$).
P. B. Larson and  S. Shelah pointed to us
that it follows immediately from \cite{larson+shelah} that each of the following statements is consistent
with ZFC:
\begin{enumerate}[(1)]
\item None of the players has a winning strategy in $\mG_{\vp_{\mathit{spl}}}^{\om_1}$.
\item Mrs.\ $Y$ has a winning strategy in $\mG_{\vp_{\mathit{spl}}}^{\om_1}$.
\item Mr.\ $X$ has a winning strategy in $\mG_{\vp_{\mathit{spl}}}^{\om_1}$.
\end{enumerate}
In other words, ZFC can hardly tell us anything concerning this game.
On the other hand, S.\ Shelah \cite{She07} tells  us he believes it should be possible to prove:
\begin{conj}\label{conj:consistent that all om_1 games are determined}
It is consistent with ZFC that $\mG_\vp^{\om_1}$ is determined for \emph{every} formula $\vp$.
\end{conj}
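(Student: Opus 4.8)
The plan is to push the composition machinery of Section~\ref{sect:comp-gt} up to sums of length $\om_1$ and thereby reduce the determinacy of $\mG_\vp^{\om_1}$, for an arbitrary formula $\vp$ of quantifier depth $n$, to the determinacy of the abstract finite-alphabet games Game$_{\om_1}(C,G)$ on formal $n$-game-types. Concretely, I would fix a continuous increasing decomposition $\om_1=\sum_{i<\om_1}\alpha_i$ into countable blocks (each determinate by Theorem~\ref{th:determ}), record at each coordinate the $n$-game-type played on that block, and use a Fodor/pigeonhole argument to arrange that a single formal game-type $C=\gt_n(\alpha_i)$ occurs on a large set of coordinates. By Theorem~\ref{th:comp} the winning condition of the resulting McNaughton game is an $\MLO$-property of the index chain $(\om_1,<)$ marked by the finitely many predicates $Q_j=\{i~:~\tau_i=\tau_j\}$. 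The payoff of this first step is that the alphabet $\Ps{\Char^n_2}$ is \emph{finite}, so the hard content of determinacy is concentrated in how the $\om_1$-sum $\sum_{i<\om_1}\tau_i$ depends on the play, rather than in the arbitrary combinatorics of $\vp$.

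Next I would analyse this $\om_1$-sum by an additive Ramsey argument over $\om_1$ (the same idea that underlies Lemma~\ref{lem:om-sum} and Shelah's treatment of the monadic theory of order in \cite{Sh75}): on a club of coordinates the partial sums stabilise to a single idempotent type $e$ with $e+e=e$, and the value of $\sum_{i<\om_1}\tau_i$ is then determined by a finite invariant recording $e$ together with which types occur stationarily often along the play. Consequently the winning condition of Game$_{\om_1}(C,G)$, and hence of $\mG_\vp^{\om_1}$, can be rewritten purely in terms of the club/stationary behaviour of the sequence of moves. This reduces the determinacy of \emph{every} $\mG_\vp^{\om_1}$ to the determinacy of a \emph{finite} list of canonical stationary-splitting games of the shape exemplified by $\mG_{\vp_{\mathit{spl}}}^{\om_1}$: games in which the two players jointly build a set and must, or must avoid, splitting a stationary set into prescribed stationary pieces.

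The final and genuinely hard step is to produce, by forcing, a single model of ZFC in which all of these canonical stationary games are simultaneously determined. The results of Larson and Shelah \cite{larson+shelah} quoted above show that each individual game can consistently be made to go any of the three ways, so the obstacle is not the existence of a verdict for one game but \emph{global coherence}: one must iterate (or invoke a tailored forcing axiom in the spirit of those controlling stationary reflection and the nonstationary ideal) so as to decide a winner uniformly for the whole family while preserving $\om_1$. The main difficulty I anticipate is that this is essentially a long-game determinacy problem of length $\om_1$ for a non-Borel, stationary-set payoff, for which no ZFC determinacy is available; the forcing must therefore both \emph{settle} the stationary combinatorics (for instance by destroying the relevant club-guessing sequences, or by generically adding winning strategies) and \emph{preserve} that settlement throughout the iteration, and it is precisely the compatibility of these two demands, across all formulas at once, that makes Conjecture~\ref{conj:consistent that all om_1 games are determined} delicate. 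This accords with Shelah's own expectation \cite{She07} that such a model should exist.
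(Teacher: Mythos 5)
You have not proved anything, and indeed the paper does not either: the statement you were given is an open \emph{conjecture} (attributed to Shelah's expectation \cite{She07}), not a theorem, and the paper's surrounding discussion explains why no ZFC argument of the kind developed in Sections \ref{sect:comp-gt}--\ref{sect:dec} can settle it: by Larson--Shelah \cite{larson+shelah}, each of the three possible verdicts for the single game $\mG_{\vp_{\mathit{spl}}}^{\om_1}$ is consistent with ZFC, so any ``proof'' must be a relative consistency result obtained by forcing. Your proposal correctly recognizes this, but it then defers exactly that step: your final paragraph states that one ``must iterate\dots so as to decide a winner uniformly for the whole family'' and that the compatibility of the preservation demands ``is precisely what makes the conjecture delicate.'' That is a restatement of the conjecture, not a proof of it. The Larson--Shelah results you invoke give consistency for one game at a time, with no coherence across the family, and you supply no forcing poset, no iteration scheme, no preservation argument. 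This is the entire mathematical content of the problem, and it is missing.

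The reductive part of your plan also has a concrete gap. Your first step reduces $\mG_\vp^{\om_1}$ to the abstract game Game$_{\om_1}(C,G)$ on the finite alphabet $\Ps{\Char^n_2}$, in analogy with Lemma \ref{lem:det-product}. But that lemma transfers determinacy only because Lemma \ref{lem:comp-cond} establishes (via B\"{u}chi--Landweber) that the length-$\om$ abstract game Game$_{\om}(C,G)$ \emph{is} determined; there is no analogue for length $\om_1$. The winning condition $\Sigma_{i<\om_1}\tau_i\in G$ is sensitive to stationary/club behaviour of the play (this is exactly what your additive-Ramsey analysis says), and the game $\mG_{\vp_{\mathit{spl}}}^{\om_1}$ itself can be cast in this form, so the determinacy of Game$_{\om_1}(C,G)$ is just as independence-prone as what you started with. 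In other words, your reduction relocates the problem without shrinking it: after both of your first two steps, the ``finite list of canonical stationary games'' still includes games whose determinacy ZFC provably cannot decide, and the burden falls entirely on the unexecuted forcing construction. As a research programme your outline is reasonable and consistent with how the paper frames the question, but as a proof of Conjecture \ref{conj:consistent that all om_1 games are determined} it establishes nothing beyond what the paper already states.
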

%
Let us discuss a question of uniform definability of the winning strategy.
Recall that for every $\vp$ and $\alpha< \om^\om$ one of the players has a finite memory winning strategy in  $\mG_\vp^\alpha$.
It is natural to ask the following {uniform definability} question:
given  $\vp(X,Y)$  as above, is it possible to provide
$\psi$  such that, for each ordinal below $\omega^\omega$  such that
Player I wins  $\mG_\vp^\alpha$,  the formula $\psi$  defines his winning strategy in $\mG_\vp^\alpha$?

The negative answer to the uniform definability question even for one-player games   follows from
our results in \cite{RS07}.
Consider  the formula $\vp_{\om\ub}(Y)$  expressing that ``$Y$ is an unbounded $\om$-sequence.''
It is clear that the moves of the Player I are unimportant in the  games with the  winning condition $\vp_{\om\ub}(Y)$, and that Player II wins this game over every countable limit ordinal.
However, it was shown in \cite{RS07} that Player II has no definable winning strategy which uniformly works for all limit ordinals $<\om^\om$.

The negative answer to   the {uniform definability} question leads to the following algorithmic problem:
\begin{enumerate}[\hbox to6 pt{\hfill}]
\item\noindent{\hskip-0 pt\bf Problem:}\ (uniform definability of
  winning strategy) Given $\vp(X,Y)$. Decide whether there is $\psi$
  which defines a winning strategy for Player I for each ordinal below
  $\omega^\omega$ such that Player I wins the $\mG_\vp^\alpha$.
\end{enumerate}
The decidability of {uniform definability problem} is an open question.

Next, we describe a uniformization problem, sometimes called the
Rabin uniformization problem.

Let $\vp(\bar{X},\bar{Y})$, $\psi(\bar{X},\bar{Y})$ be formulas
and $\mM$  be a structure. We say that $\psi$ \emph{uniformizes}
(or, is a \emph{uniformizer} for) $\psi$ \emph{in} $\mM$ iff:
\begin{enumerate}[(1)]
\item $\mM \models \forall \bar{X}\exists^{\le 1}\bar{Y}\psi(\bar{X},\bar{Y})$,
\item $\mM \models \forall\bar{X}\forall \bar{Y}(\psi(\bar{X},\bar{Y})\rar \vp(\bar{X},\bar{Y}))$, and
\item $\mM \models \forall\bar{X}\big(\exists \bar{Y}\vp(\bar{X},\bar{Y}) \rar \exists \bar{Y}\psi(\bar{X},\bar{Y})\big)$.
\end{enumerate}
$\mM$ has the \emph{uniformization property} iff every formula
$\vp$ has a uniformizer $\psi$ in $\mM$.

 In
\cite{lifsches+shelah}, Lifsches and Shelah show that an ordinal
$\alpha$ has the uniformization property iff $\alpha<\om^\om$.

Uniformization, too, naturally leads to a decision problem:

\noindent $\mbox{}$\hspace{0.1cm} \framebox {
\begin{minipage}{0.95\hsize}
\noindent \hspace{2.0cm}{\em Uniformization  Problem} for $\alpha$:\\
\noindent{\em Input:} an  $MLO$ formula $\vp(X,Y)$.\\
\noindent{\em Task:} determine whether $\vp$ has a uniformizer in
$\alpha$, and if so, construct  it.

\end{minipage}
}\\

 Note the
similarities and dissimilarities between the Church synthesis
problem (see Def.  \ref{dfn:Church synthesis problem}) and the
uniformization problem. In uniformization, we are also given a
formula $\vp(X,Y)$ and to every $P$ we try and ``respond'' with a
$Q$, such that $\vp(P,Q)$ holds. Only we do not restrict ourselves
to causal responses. On the other hand, we do restrict ourselves
to \emph{definable} (in $(\alpha,<)$) responses. In the Church
problem, we do \emph{not} require that the strategy (=causal
operator) is definable.

While we are not yet able to decide uniformization in
$(\om^\om,<)$, we presented in \cite{RS07}  a restricted version
of this problem, and proved  that this version is  decidable for
every countable ordinal.

 Our initial motivation to study games of length $>\om$ was  a
 hope to reduce $\omega$-games with complex winning conditions \cite{CDT02,Se06} to
 longer games with simple winning conditions. We plan to pursue this direction further.

\section*{Acknowledgments}
I am very grateful to  Amit Shomrat for his insightful comments.


\begin{thebibliography}{MT1}
\bibitem[BL69]{BL69} J. R. B\"{u}chi, L. H. Landweber, Solving Sequential Conditions by Finite-State Strategies, Transactions of the AMS, Vol. 138 (Apr. 1969), pp. 295-311.
\bibitem[BS73]{buchi:siefkes} J. R. B\"{u}chi, D. Siefkes, The Monadic Second-order Theory of all Countable Ordinals, Springer Lecture Notes 328 (1973),
pp. 1-126.
\bibitem[CDT02]{CDT02}Th. Cachat,
               J. Duparc and
               W. Thomas,
  {Solving Pushdown Games with a Sigma$_{\mbox{3}}$ Winning
  Condition.} In CSL 2002, LNCS vol. 2471, pp. {322-336}, 2002.

\bibitem[Ch63]{church} A. Church, Logic, Arithmetic and Automata, Proc. Intrnat.
Cong. Math. 1963, Almquist and Wilksells, Uppsala, 1963.
\bibitem[Gu85]{gurevich:monadic second-order theories} Y. Gurevich, Monadic second-order theories, in:
J. Barwise, S. Feferman (eds.), \emph{Model-Theoretic Logics},
Springer-Verlag, 1985, pp. 479-506.
\bibitem[FV59]{FV59}
S.~Feferman and R.L.~Vaught (1959). The first-order properties of
products of algebraic systems. \emph{Fundamenta
Mathematica}~\textbf{47}:57--103.

\bibitem[LaS08]{larson+shelah} P. B. Larson and S. Shelah, The stationary set splitting game.  Mathematical Logic Quarterly 54(2), pp. 187-193, 2008.
\bibitem[LS98]{lifsches+shelah} S. Lifsches and  S. Shelah, Uniformization and skolem functions in the class of trees,
Jou. of Symolic Logic, Vol. 63(1) (Mar. 1998), pp. 103-127.

\bibitem[Mc65]{mcnaughton} R. McNaughton,
Finite-state infinite games, Project MAC Rep., MIT,
Cambridge, Mass., Sept. 1965.
\bibitem[RS07]{RS07}A. Rabinovich and A. Shomrat.
  Selection over classes of ordinals expanded by monadic predicates. subitted

\bibitem[RS08]{RS06} A. Rabinovich and A. Shomrat. Selection in the Monadic Theory of  a Countable
Ordinal.  Journal of Symbolic Logic
73(3),  pp. 783-816,  2008.
\bibitem[Sh75]{Sh75} S. Shelah, The monadic theory of order, Annals of Mathematics, Ser. 2, Vol. 102 (1975),
pp. 379-419.
\bibitem[She07]{She07} Private communication, 2007.
\bibitem[Sho07]{Sho07} A. Shomrat, \emph{Uniformization Problems in the Monadic Theory of Countable Ordinals},
Manuscript, 2007.

\bibitem[Se06]{Se06} O. Serre, Games With Winning Conditions of High Borel
Complexity. Theoretical Computer Science, Vol 350/2-3 pp 345-372,
2006.



\bibitem[Th97]{thomas:ef+composition} W. Thomas, Ehrenfeucht games, the composition method, and the monadic theory
of ordinal words.
Volume 1261 of Lecture Notes in Computer Science,
Springer, 1997, pp. 118-143.
\end{thebibliography}
\end{document}